\pgfplotsset{compat=1.16}
\newcolumntype{Y}{>{\hsize=.5\hsize}X}
\newcolumntype{Z}{>{\hsize=1.25\hsize}X}
\tikzset{
  pics/hidingloc/.style args = {#1,#2,#3,#4,#5}{
     code = {
        \coordinate (A) at (0,0);
        \coordinate (B) at (1,0.5);
        \draw[fblue,fill=fblue] (A) rectangle ($(A|-B)!#4!(B)$);
        \draw[thick] (A) rectangle (B);
        \coordinate (C) at (0,1);
        \coordinate (D) at (1,#3+1);
        \draw[fred,fill=fred] (C) rectangle ($(C-|D)!#5/#3!(D)$);
        \draw[thick] (C) rectangle (D);
        \foreach \y in {2,...,\the\numexpr#3} {
            \draw[thick] (0,\y) -- (1,\y);
        }
        \node[] (detection_rate) at (0.5, #3+1.5) {#2};
        \node[] (location) at (0.5, -0.5) {#1};
     }
  }
}
\newcommand{\A}{\mathcal{A}}
\newcommand{\I}{\mathcal{I}}
\newcommand{\J}{\mathcal{J}}
\newcommand{\K}{\mathcal{K}}
\newcommand{\R}{\mathbb{R}}
\newcommand{\Z}{\mathbb{Z}}
\newcommand{\Seek}{\textnormal{S}\xspace}
\newcommand{\Hide}{\textnormal{H}\xspace}
\DeclarePairedDelimiter\ceil{\lceil}{\rceil}
\DeclarePairedDelimiter\floor{\lfloor}{\rfloor}
\newcommand{\rs}{r_\textnormal{S}}
\newcommand{\rh}{r_\textnormal{H}}
\newcommand{\hrs}{\hat{r}_\textnormal{S}}
\newcommand{\hrh}{\hat{r}_\textnormal{H}}
\newcommand{\gcap}{b}
\newcommand{\gpure}{z}
\newcommand{\gres}{r}
\newcommand{\spure}{x}
\newcommand{\hpure}{y}
\newcommand{\gmarg}[2]{\rho_{#1}(#2)}
\newcommand{\smarg}[2]{\rho_{#1}(#2)}
\newcommand{\hmarg}[2]{\rho_{#1}(#2)}
\newcommand{\gmargv}{\rho}
\newcommand{\hmargv}{\rho}
\newcommand{\smargv}{\rho}
\newcommand{\lb}{\llbracket}
\newcommand{\rb}{\rrbracket}
\definecolor{fblue}{RGB}{0,35,149}
\definecolor{fred}{RGB}{237,41,57}
\renewcommand{\cap}{c}
\begin{document}


\RUNAUTHOR{Bahamondes and Dahan}

\RUNTITLE{Hide-and-Seek Game with Capacitated Locations and Imperfect Detection}

\TITLE{Hide-and-Seek Game with Capacitated Locations and Imperfect Detection}

\ARTICLEAUTHORS{%
\AUTHOR{Basti\'an Bahamondes, Mathieu Dahan}
\AFF{School of Industrial and Systems Engineering, Georgia Institute of Technology, Atlanta, Georgia 30332, \{\EMAIL{bbahamondes3@gatech.edu}, \EMAIL{mathieu.dahan@isye.gatech.edu}\}}
} 

\ABSTRACT{%
We consider a variant of the hide-and-seek game in which a seeker inspects multiple hiding locations to find multiple items hidden by a hider. Each hiding location has a maximum hiding capacity and a probability of detecting its hidden items when an inspection by the seeker takes place. The objective of the seeker (resp. hider) is to minimize (resp. maximize) the expected number of undetected items. This model is motivated by strategic inspection problems, where a security agency is tasked with coordinating multiple inspection resources to detect and seize illegal commodities hidden by a criminal organization. To solve this large-scale zero-sum game, we leverage its structure and show that its mixed strategies Nash equilibria can be characterized using their unidimensional marginal distributions, which are Nash equilibria of a lower dimensional continuous zero-sum game. This leads to a two-step approach for efficiently solving our hide-and-seek game: First, we analytically solve the continuous game and compute the equilibrium marginal distributions. Second, we derive a combinatorial algorithm to coordinate the players' resources and compute equilibrium mixed strategies that satisfy the marginal distributions. We show that this solution approach computes a Nash equilibrium of the hide-and-seek game in quadratic time with linear support. Our analysis reveals a complex interplay between the game parameters and allows us to evaluate their impact on the players’ behaviors in equilibrium and the criticality of each location.
}%


\KEYWORDS{Hide and seek; resource coordination; imperfect detection; large-scale game}
\HISTORY{}

\maketitle

%


\section{Introduction}
\label{sec: introduction}

In this article, we study a variant of the \emph{hide-and-seek} game in which two players, the hider and the seeker, coordinate multiple resources among heterogeneous locations. Specifically, the hider determines where to allocate multiple items within capacitated locations. Simultaneously, the seeker inspects a limited number of locations to detect the hidden items. However, detection is supposed to be imperfect: When inspecting a location, the seeker finds the items with a location-specific probability that captures the local effects undermining the seeker's detection capabilities. The seeker (resp. hider) aims to select a (possibly randomized) strategy that minimizes (resp. maximizes) the expected number of items that are undetected. The objective of this work is to efficiently solve this large-scale simultaneous zero-sum game, that is, to compute a mixed strategy Nash equilibrium (NE), and gather insights on the players' equilibrium behaviors.

Our model is motivated by security applications involving for instance a security agency interested in dispatching multiple units to inspect warehouses used by a criminal organization to store illegal commodities such as drugs or weapons \citep{hochbaum2011nuclear}. In such settings, the security agency aims to schedule the patrolling operations of their units to detect and seize the illegal commodities \citep{hess2013police}. Another motivating application of our model involves a utility company tasked with coordinating multiple imperfect sensors to inspect its service network against failures caused by a malicious cyber-physical attacker who is able to target multiple components of the network \citep{pirani2021strategic}. Interestingly, another application of interest concerns auditing election results \citep{Blocki_Christin_Datta_Procaccia_Sinha_2015,behnezhad2018battlefields}.  In such problems, an auditor allocates a limited number of election officials into several polling locations in order to detect electoral fraud by means of recounts. The fraudster may be a malicious organization who is interested in manipulating the results by coordinating its members to tamper with the votes.

Previous related works in the hide-and-seek literature have not simultaneously considered multiple resources for both players, heterogeneous hiding capacities, and imperfect detection \citep{gal2014succession,dziubinski2018hide}. This may reduce the applicability of the results, particularly in security settings. However, simultaneously considering these features introduces new challenges: On one hand, the combinatorial nature of both players' sets of actions due to the resource multiplicity prevents us from computationally solving the game using linear programming techniques or approximation algorithms \citep{FREUND199979,lipton2003playing,doi:10.1287/opre.2019.1853}. On the other hand, the complex interplay between the game's features renders the analytical solutions from previous works inapplicable. Hence, we focus on the following research questions: \emph{(i) How to optimally coordinate multiple imperfect inspection resources to detect multiple hidden commodities?} \emph{(ii) How are the optimal inspection and hiding strategies jointly impacted by the detection, location, and players' characteristics?}

%

\subsection{Contributions} 

In this article, we formulate the hide-and-seek game as a simultaneous zero-sum game $\Gamma$ and extend previous models in the literature by considering the coordination of multiple resources for both players in locations with heterogeneous hiding capacities and detection probabilities. We then leverage the game's structure to derive equilibrium properties of the NE of $\Gamma$. In particular, we show that a strategy profile is a NE of $\Gamma$ if and only if the corresponding marginal inspection probabilities and expected numbers of hidden items at each location form a NE of a lower dimensional continuous game $\widetilde{\Gamma}$ (Proposition \ref{prop:equivalence}). From this equivalence, we derive a two-step approach for solving the hide-and-seek game $\Gamma$. 

First, we analytically solve the continuous game $\widetilde{\Gamma}$ (Theorem \ref{thm: NE disjoint monitoring sets}). We find that NE can be generally classified into three main regime patterns determined by complex parameters that account for the interplay between the players' resources and the heterogeneity of the locations. To the best of our knowledge, the features of our model lead to new NE regimes that have not been observed in the literature. In fact, we show that our analytical solutions describe all pure strategies NE of $\widetilde{\Gamma}$ almost surely (Proposition \ref{prop:iff}).

By solving $\widetilde{\Gamma}$, we obtain marginal inspection probabilities and expected numbers of hidden items at each location in equilibrium of $\Gamma$. Thus, the second step consists of computing a mixed strategy profile of $\Gamma$ that is consistent with these unidimensional marginal distributions. To this end, we extend the algorithm of \cite{dziubinski2018hide} to feasibly coordinate the allocation of multiple resources (Algorithm \ref{alg: construction of prob distribution from marginals}). We show that the algorithm runs in quadratic time and returns equilibrium inspection and hiding strategies with linear supports (Theorem \ref{thm:algorithm} and Corollary \ref{final_corollary}). 

Thus, our approach efficiently solves the hide-and-seek game $\Gamma$. By providing mixed inspection strategies with linear support, our solutions can easily be implemented in practice via a randomized scheduling of inspections that can be performed on a day-to-day basis. Furthermore, our analytical solution of the continuous game $\widetilde{\Gamma}$ decodes the complex interplay between the game parameters and provides insights with respect to their impact on the players’ equilibrium behaviors and the criticality of each location. Such insights can be leveraged by security agencies to inform their inspection decisions.

\subsection{Related Work}

%
%
%
%
%
%

The hide-and-seek game is a two-person zero-sum game introduced by \cite{von1953certain}. In its original version, the hider and the seeker interact on a square matrix of nonnegative entries: The hider selects an entry $a_{ij}$ and the seeker simultaneously selects either a row or a column of the matrix. If the row or column selected by the seeker contains the entry chosen by the hider, then the hider pays the seeker $a_{ij}$; otherwise, the seeker pays the hider $a_{ij}$.
This game has been studied as a general model of strategic mismatch \citep{crawford2007fatal} and its equilibrium strategies are well known \citep{von1953certain,flood1972hide,karlin2016game}. 
It also belongs to the more general category of search games, in which a searcher is concerned with the optimal way of looking for a hidden adversary in a search space; see for example \cite{lidbetter2013search,lidbetter2019searching,clarkson2022classical}, and the surveys of \cite{alpern2006theory} and \cite{hohzaki2016search}.  

Nonetheless, in practical applications, the seeker may be able to simultaneously inspect multiple locations, and the hider may be able to hide multiple items across the search space. Furthermore, the seeker's inspection resources can be affected by local conditions undermining their detection capabilities. Thus, in order to achieve a better utilization of their resources, each player may benefit from efficiently coordinating their allocation across the different locations, a problem that the original model does not address.

One of such practical applications arises in problems of strategic sensor placement for network inspection \citep{milovsevic2019network,pirani2021strategic,dahan2022network,9867753}, in which the defender of a network positions sensors in a subset of given locations to detect attacks caused by a strategic attacker, who can target multiple network components. 
Such models typically account for the detection range of the sensors: Positioning a sensor at a location allows the defender to monitor a subset of network components---referred to as a monitoring set---and potentially detect attacks occurring within it. As a result, attacks may be detected from multiple locations; this overlapping feature renders such games challenging to solve.  \cite{dahan2022network} studied a two-person zero-sum game version of this model under the assumption of perfect detection, and derived approximate NE strategies by means of minimum set covers and maximum set packings. 
\cite{milovsevic2019network} and \cite{9867753} studied variants of this model by respectively considering the critical values of network components and imperfect detection. They derive heuristic approaches to compute good quality solutions in the case of a single attack resource. \cite{pirani2021strategic} formulated a game in which sensors are positioned in the nodes of a networked control system to detect attacks on them, and considered imperfect detection through a linear filter that processes the sensors' measurements to detect attacks. The authors derived equilibrium results using tools from structured systems and graph theory. Finally, a different but related model which features location-specific imperfect detection is the network interdiction problem by \cite{washburn1995two}, in which an interdictor sets up a single inspection checkpoint along one of the arcs of a directed graph, with the aim of interdicting an evader who attempts to traverse a path between two nodes. 
The authors show that NE strategies can be computed in polynomial time using network flow techniques.

In fact, our hide-and-seek game can be used to model a class of instances of the strategic sensor placement problems in which the monitoring sets are mutually disjoint. In such instances, our results are directly applicable and generalize the equilibrium characterizations from \citep{washburn1995two,dahan2022network,9867753}. 
Instances with disjoint monitoring arise in situations where it is desirable to reduce sensor interference or the energetic cost of the network \citep{cardei2005improving,rs6010740}. In other contexts such as in security games, disjoint monitoring is naturally satisfied \citep{powell2009sequential,behnezhad2018battlefields,musegaas2022stackelberg}.

Among the variants of the hide-and-seek game previously examined in the literature, our game is most closely related to the ones by \cite{dziubinski2018hide} and \cite{gal2014succession}. \cite{dziubinski2018hide} consider a version of the game with multiple resources for both players, in which they interact on a set of unit capacity locations, each one associated with a nonnegative value. Simultaneously, the hider (resp. seeker) selects a subset of locations to hide his objects (resp. to inspect). Once the choices are made, the seeker pays the hider the value of each uninspected location containing a hidden object. In contrast, our model considers homogeneous values for all locations, but incorporates heterogeneous hiding capacities and probabilities of successful inspections.


\cite{gal2014succession} propose a pursuit-evasion model of the interaction between a prey and a predator. The prey chooses a location to hide from the predator, who is able to inspect multiple locations. However, if the predator visits the location where the prey is hiding, the capture is uncertain and occurs with some probability. The predator (resp. prey) seeks to maximize (resp. minimize) the probability of capture. Our work extends this model by allowing multiple preys to coordinately hide in heterogeneously capacitated locations. Although the subject of animal behavior is beyond the scope of our work, our model extension addresses analogous situations arising in security domains in which a security agency can dispatch multiple inspection units to inspect heterogeneous locations used by a criminal organization to store multiple illegal commodities.





In both of these games, as in ours, a player's mixed strategy consists of a probability distribution over the set of resource allocations that satisfy capacity constraints and the resource budget. Thus, when players have access to multiple resources, their strategy spaces become exponentially large.  
One approach to handle the dimensionality consists in characterizing the players' strategies in a lower dimensional space. In \cite{dziubinski2018hide} and \cite{gal2014succession}, the games' structures permit the characterization of NE in terms of their marginal probabilities of inspecting each location for the seeker, and their marginal probabilities of hiding an item in each location for the hider.
Then, in order to compute the mixed strategies NE, it becomes necessary to construct probability distributions over the feasible resource allocations that are compatible with these marginal probabilities.

This two-step approach of characterizing equilibrium strategies in terms of marginal distributions and then computing compatible mixed strategies has been previously proposed in the literature, e.g., by \cite{korzhyk2010complexity} and \cite{letchford2013solving} to compute Stackelberg equilibria in security games; by \cite{chan2016multilinear} to compute approximate NE in multilinear games; and by \cite{ahmadinejad2019duels} to compute NE for zero-sum bilinear games, with applications to the Colonel Blotto game \citep{borel1921theorie}. In all these cases, the computation of the equilibrium marginal distributions is carried out via linear programming, and the computation of the mixed strategies from the marginal distributions follows by either an efficient implementation of Birkhoff-von Neumann's  theorem and its generalizations \citep{budish2013designing}, or by the more general algorithm by Gr{\"o}tschel, Lov{\'a}sz, and Schrijver (\cite{grotschel2012geometric}, Theorem 6.5.11) that implements Carath\'eodory's theorem using linear programming techniques. Finding an efficient implementation of Carath\'eodory's theorem has also been addressed in other contexts, such as in mechanism design \citep{cai2012optimal, hoeksma2013two}, scheduling \citep{hoeksma2016efficient}, and ranking systems \citep{kletti2022introducing,kletti2022pareto}.



Our implementation of the two-step approach is closely related to that of \cite{dziubinski2018hide}. In contrast to the above-mentioned literature, we derive analytical expressions for the equilibrium marginal distributions, which allows us to fully understand the interplay between the game parameters and to provide detailed insights regarding their impact on the players' equilibrium behaviors. \cite{dziubinski2018hide} provide a combinatorial algorithm for constructing a mixed strategy with linear support in quadratic time with respect to the number of locations. Their algorithm iteratively decomposes a given vector representing the marginal probabilities of allocating one resource in each location into a convex combination of a linear number of integer resource allocations, and it can be interpreted as a tailored and more efficient implementation of Gr{\"o}tschel, Lov{\'a}sz, and Schrijver's algorithm. Our algorithm extends this decomposition to the more general case in which the marginal distributions represent expected numbers of resources allocated in locations constrained by capacities, and where the budget of resources does not need to be exhausted, as opposed to Dziubi{\'n}ski and 
Roy's setting.

The rest of the article is organized as follows. In Section~\ref{sec: model description}, we formulate our hide-and-seek game. We then characterize and parametrically analyze its NE in Section \ref{sec: analytical characterization of equilibrium strategies} using a lower dimensional continuous game, which we solve analytically. In Section \ref{sec: equilibrium computation}, we derive a combinatorial algorithm to coordinate the players' resources and compute a NE of the hide-and-seek game. We then provide some concluding remarks in Section~\ref{sec: conclusion}. Finally, the proofs of our results are listed in the electronic companion.

\section{Problem Description}
\label{sec: model description}
We consider a hide-and-seek game involving a seeker who is looking for multiple homogeneous items hidden by a hider in a search space consisting of a set of $n$ hiding locations $\llbracket 1,n\rrbracket \coloneqq  \{1,\ldots,n \}$. The locations are \emph{capacitated}; namely, the hider can hide up to $c_{i} \in \mathbb{Z}_{>0}$ items in each location $i$. We let $m \coloneqq \sum_{i=1}^{n}c_i$ be the total hiding capacity. In addition, the seeker has \emph{imperfect} location-specific detection capabilities. 
Specifically, by inspecting a location $i$, the seeker effectively finds the hidden items (if any) in that location with probability $p_{i} \in (0,1]$ (and therefore, the inspection fails and leaves the hidden items undetected with probability $1-p_{i}$). We assume that inspection failures at different locations occur independently. We refer to $p_{i}$ as the \emph{detection rate} of location $i$.

We assume that both the hider and seeker are strategic, and hence we adopt a game-theoretic framework to study their behaviors. We define a simultaneous two-player strategic zero-sum game $\Gamma\coloneqq\langle \{\Seek,\Hide\}, (\Delta_{\Seek}, \Delta_{\Hide}), (-U,U) \rangle$ 
where S is the seeker and H is the hider. S can select up to $\rs\in \mathbb{Z}_{>0}$ hiding locations to inspect. Simultaneously, H can select up to $\rh \in \mathbb{Z}_{>0}$ items to hide. To model the players' action sets, we define a generic set of feasible resource allocations given a vector of capacities $\gcap \in \Z^n_{>0}$ and a budget of resources $\gres \in \Z_{>0}$ as follows:
\begin{align}
\A(\gcap,\gres) \coloneqq  \{\gpure\in \Z^n:\; 0 \leq \gpure_i \leq \gcap_i,\ \forall\, i\in \lb1,n\rb,\text{ and } \sum_{i=1}^{n} \gpure_i \leq \gres \}. \label{generic_allocation}
\end{align}
 The set $\A(\gcap,\gres)$ contains all the vectors in $\Z^n$ representing allocations of up to $\gres$ resources within the locations $i \in \lb1,n\rb$, respecting their capacities given by $\gcap_i$. Then, the pure action sets for S and H are given by  $\A_{\Seek} \coloneqq \A(\boldsymbol{1}_{n}, \rs)$ and $\A_{\Hide} \coloneqq \A(\cap,\rh)$ respectively, where $\boldsymbol{1}_n$ is the vector of ones in $\Z^n$ and $c = (c_i)_{i \in \lb1,n\rb}$ is the vector of hiding capacities. Thus, for every $\spure \in \A_{\Seek}$ and $i \in \lb1,n\rb$, $\spure_i=1$ if S inspects location $i$ and $\spure_i=0$ otherwise, and for each $\hpure \in \A_{\Hide}$ and $i \in \lb1,n\rb$, $\hpure_i$ represents the number of items that H hides at location $i$. 

We consider that the quantity of interest for the players is the average number of undetected items. Thus, we define the players' \emph{payoff function} as follows: 
\begin{align}
 \forall \, (\spure,\hpure) \in \A_{\Seek} \times \A_{\Hide}, \ u(\spure,\hpure) \coloneqq  \sum_{i=1}^{n} (1-p_i\spure_i)\hpure_i. \label{payoff}
\end{align}
For every $i \in \lb1,n\rb$, $1-p_ix_i$ represents the probability that items in location $i$ are undetected given the pure inspection action $\spure\in \A_{\Seek}$.

In such combinatorial security settings, players significantly benefit from randomizing their actions \citep{washburn1995two, pita2008deployed, zhu2015game, gupta2016dynamic, hota2016optimal,   bertsimas2016power, miao2018hybrid}. Thus, we allow the players to use \emph{mixed strategies}, defined as probability distributions over their sets of pure actions. The set of probability distributions over the set of generic feasible resource allocations $\A(\gcap,\gres)$ is defined as $\Delta(\gcap,\gres) \coloneqq  \left\{ \sigma  \in  [0,1]^{\A(\gcap,\gres)}:\, \sum_{{\gpure\in \A(\gcap,\gres)}}\sigma_\gpure = 1 \right\}$. Then, the sets of mixed strategies for S and H are given by $\Delta_{\Seek} \coloneqq \Delta(\boldsymbol{1}_n,\rs)$ and $\Delta_{\Hide} \coloneqq \Delta(\cap,\rh)$, respectively. For every mixed strategy $\sigma^{\Seek} \in \Delta_{\Seek}$ (resp. $\sigma^{\Hide} \in \Delta_{\Hide}$) and every $\spure\in \A_{\Seek}$ (resp. $\hpure\in \A_{\Hide}$), $\sigma^{\Seek}_{\spure}$ (resp. $\sigma^{\Hide}_{\hpure}$) is the probability that action $\spure \in \A_{\Seek}$ (resp. $\hpure \in \A_{\Hide}$) is executed by S (resp. H).

Given a strategy profile $(\sigma^{\Seek}, \sigma^{\Hide}) \in \Delta_{\Seek} \times \Delta_{\Hide}$, the expected payoff is then defined as $U(\sigma^{\Seek}, \sigma^{\Hide})\coloneqq \mathbb{E}_{(\spure, \hpure) \sim (\sigma^{\Seek}, \sigma^{\Hide})}[u(\spure,\hpure)] = \sum_{\spure\in \A_{\Seek}} \sum_{\hpure\in \A_{\Hide}} \sigma^{\Seek}_{\spure} \sigma^{\Hide}_{\hpure} u(\spure,\hpure)$. 
We assume that S (resp. H) seeks to minimize (resp. maximize) $U$. For ease of exposition, we use $U(\spure,\sigma^{\Hide})$ (resp. $U(\sigma^{\Seek}, \hpure)$) to denote the case where $\sigma^{\Seek}_{\spure}=1$ (resp. $\sigma^{\Hide}_{\hpure}=1$) for some $\spure\in \A_{\Seek}$ (resp. $\hpure\in \A_{\Hide}$).

Our game $\Gamma$ is relevant to settings where a city police department is interested in inspecting warehouses that are used by a criminal organization to store illegal commodities (e.g., drugs, weapons). In such settings, $\lb1,n\rb$ represents the set of warehouses, and for each $i \in \lb1,n\rb$, $\cap_i$ represents the maximum number of illegal commodities that can be stored in warehouse $i$. The police department can coordinate multiple police units to simultaneously inspect a maximum of $\rs$ warehouses, while the criminal organization has $\rh$ units of illegal commodities to hide within the warehouses.
The detection rates $p_i$ for $i \in \lb1,n\rb$ capture the local effects that might undermine the detection capabilities of the police units, e.g., warehouse characteristics that can impact the efficacy of drug-sniffing dogs \citep{JEZIERSKI2014112}. The objective of the police department (resp. criminal organization) is to minimize (resp. maximize) the number of illegal commodities that are undetected by the police department. Then, a mixed inspection (resp. hiding) strategy represents a randomized schedule of coordinated operations for the police department (resp. criminal organization).

The standard solution concept for simultaneous noncooperative games is given by \emph{Nash equilibria} (NE), that is, strategy profiles for which no player has an incentive to unilaterally deviate in order to improve their payoff. Thus, a strategy profile $(\sigma^{\Seek^*}, \sigma^{\Hide^*}) \in \Delta_\Seek \times \Delta_\Hide$ is a NE of the game $\Gamma$ if it satisfies
\begin{align*}
  \forall\, (\sigma^{\Seek}, \sigma^{\Hide}) \in  \Delta_{\Seek} \times \Delta_{\Hide}, \  U(\sigma^{\Seek^*}, \sigma^{\Hide}) \leq  U(\sigma^{\Seek^*}, \sigma^{\Hide^*}) \leq U(\sigma^{\Seek}, \sigma^{\Hide^*}).
\end{align*}
We refer to $U(\sigma^{\Seek^*} ,  \sigma^{\Hide^*})$ as the \emph{value of the game} $\Gamma$. Since $\Gamma$ is a zero-sum game with finite pure action sets, Von Neumann's minimax theorem \citep{von1928theorie} implies that the value of the game is unique and the game can be solved using the following linear program (LP):
\begin{alignat}{4}
\underset{t \in \R,\, \sigma^{\Seek}\in \Delta_{\Seek}}{\text{minimize}} \quad && t \quad && \text{subject to} \quad &  U(\sigma^{\Seek}, \hpure) \leq t, \quad & \forall\, \hpure\in \A_{\Hide}. \tag{LP} \label{LP}
\end{alignat}
Specifically, the equilibrium inspection strategies, equilibrium hiding strategies, and value of the game $\Gamma$ are given by the optimal primal solutions, optimal dual solutions, and optimal value of \eqref{LP}, respectively. A remarkable consequence of this result is that no player can benefit from observing the mixed strategy of the other player before making a decision. Nonetheless, since the cardinality of $\A_{\Seek}$ (resp. $\A_{\Hide}$) grows combinatorially with $\rs$ (resp. $\rh$), \eqref{LP} becomes computationally challenging to solve, even for small-sized instances. Similarly, algorithms for computing approximate NE are inapplicable for realistic instances of the game $\Gamma$ \citep{FREUND199979,lipton2003playing,doi:10.1287/opre.2019.1853}. 

Thus, we propose a two-step solution approach for solving the game. First in Section \ref{sec: analytical characterization of equilibrium strategies}, we reduce the dimensionality of the problem by characterizing NE using the marginal inspection probability and expected number of hidden items in each location. Then in Section \ref{sec: equilibrium computation}, we derive an algorithm to coordinate the players' resources and recover NE that are consistent with the characterized marginal probabilities and expected numbers of hidden items in equilibrium.



\section{Analytical Characterization of Equilibrium Strategies}
\label{sec: analytical characterization of equilibrium strategies}


In this section, we show that the NE of the game $\Gamma$ can be characterized using the corresponding marginal inspection probability and expected number of hidden items in each location. We prove that these unidimensional quantities are NE of a smaller-sized continuous game, which we solve analytically. This analytical characterization permits us to examine the impact of the problem parameters on the players' behaviors in equilibrium.


\subsection{Continuous Equivalence}\label{subsec:preliminary}

To simplify our analysis of the game $\Gamma$, we first derive properties of generic randomized resource allocations. Given a vector of capacities $\gcap \in \Z_{>0}^n$ and a budget of resources $\gres \in \Z_{>0}$, we denote by $\widetilde{\A}(\gcap,\gres) \coloneqq  \{\gmargv\in \R^n:\; 0 \leq \gmargv_i \leq \gcap_i,\ \forall\, i\in \lb1,n\rb,\text{ and } \sum_{i=1}^{n} \gmargv_i \leq \gres \}$ the linear programming relaxation of the set of generic feasible resource allocations $\A(\gcap,\gres)$.
Then, for every probability distribution $\sigma \in \Delta(\gcap,\gres)$ over $\A(\gcap,\gres)$, we denote as $\gmargv(\sigma) = (\gmargv_i(\sigma))_{i \in\lb1,n\rb}$ the vector of expected numbers of resources allocated at each location, given by:
\begin{align}
    \label{def: rho_sigma_1 and mu_sigma_2}
    \forall\, i\in \lb1,n\rb, \ \gmarg{i}{\sigma} \coloneqq \mathbb{E}_{\gpure \sim\sigma}[\gpure_i]=\sum_{\gpure \in \A(\gcap,\gres)}\gpure_i \sigma_\gpure.
\end{align}

We present the following relation between $\Delta(\gcap,\gres)$ and $\widetilde{\A}(\gcap,\gres)$:

\begin{restatable}{lemma}{LemmaConvexHull}
\label{lemma:convex_hull}
 Consider a vector of capacities $\gcap \in \Z^n_{>0}$, a budget of resources $\gres \in \Z_{>0}$, and a vector $\gmargv^\prime \in \R^n$. Then, $\gmargv^\prime \in \widetilde{\A}(\gcap,\gres)$ if and only if there exists a probability distribution $\sigma \in \Delta(\gcap,\gres)$ that satisfies $\gmarg{i}{\sigma} = \gmargv_i^\prime$ for all $i \in \lb1,n\rb$.
 
 \end{restatable}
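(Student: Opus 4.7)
The plan is to prove both directions of this equivalence. The \emph{if} direction is immediate from convexity: given any $\sigma \in \Delta(\gcap,\gres)$, the vector $\gmargv(\sigma) = \sum_{\gpure \in \A(\gcap,\gres)} \sigma_\gpure\, \gpure$ is a convex combination of points of $\A(\gcap,\gres) \subseteq \widetilde{\A}(\gcap,\gres)$, and since $\widetilde{\A}(\gcap,\gres)$ is convex, $\gmargv(\sigma) \in \widetilde{\A}(\gcap,\gres)$.

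The substance of the lemma lies in the \emph{only if} direction, which I plan to reduce to the claim that $\widetilde{\A}(\gcap,\gres) = \conv(\A(\gcap,\gres))$. Granting this, any $\gmargv' \in \widetilde{\A}(\gcap,\gres)$ can be written as $\gmargv' = \sum_{\gpure \in \A(\gcap,\gres)} \lambda_\gpure\, \gpure$ with weights $\lambda_\gpure \geq 0$ summing to one, and setting $\sigma_\gpure \coloneqq \lambda_\gpure$ produces a distribution in $\Delta(\gcap,\gres)$ whose marginals match $\gmargv'$ coordinatewise. Note that a representation with finitely many nonzero $\lambda_\gpure$ is guaranteed by Carath\'eodory's theorem, so no measurability issue arises.

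To establish the convex-hull identity, I would show that every vertex of the bounded polyhedron $\widetilde{\A}(\gcap,\gres)$ has integer coordinates; the Minkowski--Weyl theorem then yields $\widetilde{\A}(\gcap,\gres) = \conv(\mathrm{Vert}(\widetilde{\A}(\gcap,\gres))) \subseteq \conv(\A(\gcap,\gres)) \subseteq \widetilde{\A}(\gcap,\gres)$. At any such vertex $v$, $n$ linearly independent constraints among $\{v_i \geq 0\}_{i}$, $\{v_i \leq \gcap_i\}_{i}$, and $\sum_i v_i \leq \gres$ are active. If the budget constraint is inactive, then $n$ independent box constraints are active, forcing every coordinate $v_i$ to equal $0$ or $\gcap_i$, so $v$ is integer. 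If the budget is active, then $n-1$ box constraints pin $n-1$ coordinates to $0$ or $\gcap_i$, and the remaining coordinate satisfies $v_j = \gres - \sum_{i \neq j} v_i$, which is integer because $\gres$ is integer. Feasibility of $v$ ensures $v_j \in [0, \gcap_j]$, so $v \in \A(\gcap,\gres)$.

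The main obstacle is the vertex-integrality case analysis, which is slightly subtle because both lower and upper box constraints coexist with the budget inequality, so one must be careful that the $n$ chosen active constraints are linearly independent. A cleaner but less self-contained alternative would be to invoke the total unimodularity of the constraint matrix built from $\pm I$ and $\mathbf{1}^\top$ together with the integrality of the right-hand side $(\mathbf{0},\gcap,\gres)$; I would prefer the direct vertex argument above for transparency.
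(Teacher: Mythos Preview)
Your proof is correct and follows the same overall structure as the paper's: both directions are handled identically, and the core step is the identity $\widetilde{\A}(\gcap,\gres)=\conv(\A(\gcap,\gres))$ via integrality of the vertices. The only difference is in how that integrality is justified---the paper invokes total unimodularity of $\boldsymbol{1}_n^\top$ (the alternative you mention at the end), whereas you give the direct active-constraint case analysis; both are standard and equally valid here.
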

 
Lemma \ref{lemma:convex_hull} is a consequence of the integrality of the polyhedron $\widetilde{\A}(\gcap,\gres)$. Thus, $\widetilde{\A}(\gcap,\gres)$ represents the set of vectors of expected numbers of allocated resources at each location resulting from a probability distribution in $\Delta(\gcap,\gres)$. In particular, for every inspection strategy $\sigma^\Seek \in \Delta_\Seek$ and hiding strategy $\sigma^\Hide \in \Delta_\Hide$, $\smargv(\sigma^\Seek) = (\smarg{i}{\sigma^\Seek})_{i \in \lb1,n\rb}$ and $\hmargv(\sigma^\Hide) = (\hmarg{i}{\sigma^\Hide})_{i \in \lb1,n\rb}$ respectively represent the vectors of marginal inspection probabilities and expected numbers of hidden items across the locations.

Lemma \ref{lemma:convex_hull} permits us to relate the game $\Gamma$ to the continuous zero-sum game $\widetilde{\Gamma} \coloneqq \langle \{\Seek,\Hide\},(\widetilde{\A}_\Seek,\widetilde{\A}_\Hide),(-\tilde{u},\tilde{u})\rangle$, where \Seek and \Hide respectively select a vector of continuous inspection effort $\smargv^\Seek \in \widetilde{\A}_\Seek \coloneqq \widetilde{\A}(\boldsymbol{1}_n,\rs)$ and a vector of continuous amount of hidden items $\hmargv^\Hide \in \widetilde{\A}_\Hide \coloneqq \widetilde{\A}(c,\rh)$,  and the players' payoff in $\widetilde{\Gamma}$ is given by
\begin{align}
\forall \, (\smargv^\Seek,\hmargv^\Hide) \in \widetilde{\A}_\Seek \times \widetilde{\A}_\Hide, \ \tilde{u}(\smargv^\Seek,\hmargv^\Hide) = \sum_{i=1}^n (1 - p_i \smargv^\Seek_i)\hmargv^\Hide_i, \label{continuous_payoff}
\end{align}
as shown by the following proposition:

\begin{restatable}{proposition}{ContinuousEquivalence}
\label{prop:equivalence}
The games $\Gamma$ and $\widetilde{\Gamma}$ are related as follows:
\begin{itemize}
\item[--]
For every strategy profile $(\sigma^{\Seek},\sigma^{\Hide}) \in \Delta_{\Seek} \times \Delta_{\Hide}$ in $\Gamma$, $U(\sigma^{\Seek},\sigma^{\Hide}) = \tilde{u}(\smargv(\sigma^{\Seek}),\hmargv(\sigma^{\Hide}))$. \item[--]
$(\smargv^{\Seek^*},\hmargv^{\Hide^*}) \in \widetilde{\A}_\Seek \times \widetilde{\A}_\Hide$ is a NE of $\widetilde{\Gamma}$ if and only if there exists a NE  $(\sigma^{\Seek^*},\sigma^{\Hide^*}) \in \Delta_{\Seek} \times \Delta_{\Hide}$ of $\Gamma$ that satisfies $\smargv(\sigma^{\Seek^*}) = \smargv^{\Seek^*}$ and $\hmargv(\sigma^{\Hide^*}) = \hmargv^{\Hide^*}$. 

\item[--] The values of the games $\Gamma$ and $\widetilde{\Gamma}$ are identical.
\end{itemize}
\end{restatable}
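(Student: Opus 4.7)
The plan is to prove the three claims in sequence, each leveraging the previous. For the first claim, I would substitute into the definition of $U$, exchange the order of summation, and use independence of the two players' randomizations together with the definition of the marginals in \eqref{def: rho_sigma_1 and mu_sigma_2}:
\begin{equation*}
U(\sigma^\Seek, \sigma^\Hide) = \sum_{i=1}^n \sum_{\spure \in \A_\Seek} \sum_{\hpure \in \A_\Hide} \sigma^\Seek_\spure \sigma^\Hide_\hpure (1 - p_i \spure_i) \hpure_i = \sum_{i=1}^n \left(1 - p_i \smarg{i}{\sigma^\Seek}\right) \hmarg{i}{\sigma^\Hide},
\end{equation*}
which is exactly $\tilde{u}(\smargv(\sigma^\Seek), \hmargv(\sigma^\Hide))$. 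This step is a short bilinearity computation.

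The second claim is the heart of the proposition and would be proved by combining the first claim with Lemma \ref{lemma:convex_hull} in both directions. For the forward implication, suppose $(\smargv^{\Seek^*}, \hmargv^{\Hide^*})$ is a NE of $\widetilde{\Gamma}$; Lemma \ref{lemma:convex_hull} furnishes mixed strategies $(\sigma^{\Seek^*}, \sigma^{\Hide^*}) \in \Delta_\Seek \times \Delta_\Hide$ realizing these marginals. To verify the NE property in $\Gamma$, I fix an arbitrary $\sigma^\Seek \in \Delta_\Seek$, observe that $\smargv(\sigma^\Seek) \in \widetilde{\A}_\Seek$ (again by Lemma \ref{lemma:convex_hull}), and chain the first claim with the NE inequality of $\widetilde{\Gamma}$ to get $U(\sigma^\Seek, \sigma^{\Hide^*}) = \tilde{u}(\smargv(\sigma^\Seek), \hmargv^{\Hide^*}) \geq \tilde{u}(\smargv^{\Seek^*}, \hmargv^{\Hide^*}) = U(\sigma^{\Seek^*}, \sigma^{\Hide^*})$; the hider's inequality follows symmetrically. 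For the converse, starting from a NE $(\sigma^{\Seek^*}, \sigma^{\Hide^*})$ of $\Gamma$ with the given marginals, I fix an arbitrary $\smargv^\Seek \in \widetilde{\A}_\Seek$, invoke Lemma \ref{lemma:convex_hull} to realize it as some $\sigma^\Seek \in \Delta_\Seek$, and chain the NE inequality in $\Gamma$ with the first claim to recover the NE inequality in $\widetilde{\Gamma}$.

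The third claim then follows from the previous two: since $\Gamma$ has finite action sets, von Neumann's minimax theorem yields a NE $(\sigma^{\Seek^*}, \sigma^{\Hide^*})$ of $\Gamma$ with well-defined value $U(\sigma^{\Seek^*}, \sigma^{\Hide^*})$. The second claim promotes $(\smargv(\sigma^{\Seek^*}), \hmargv(\sigma^{\Hide^*}))$ to a NE of $\widetilde{\Gamma}$, and the first claim identifies the two equilibrium values. I expect the main obstacle to be the careful use of Lemma \ref{lemma:convex_hull} in the second claim: one must ensure that when converting marginals into mixed strategies (or vice versa), the resulting objects still lie in the correct feasible sets so that the NE inequalities can be tested against all admissible deviations. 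Once this bookkeeping is handled cleanly, the remainder of the argument is a short chain of equalities supplied by the first claim.
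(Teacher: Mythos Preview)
Your proposal is correct and follows essentially the same approach as the paper's proof: a bilinearity computation for the first claim, two applications of Lemma~\ref{lemma:convex_hull} (one in each direction) chained with the first claim for the second, and then the second and first claims combined to identify the values. The only cosmetic difference is that the paper does not re-invoke von Neumann's theorem in the third step, since existence of a NE of $\Gamma$ was already established in the main text.
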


From this proposition, we deduce that NE of the game $\Gamma$ can be characterized from NE of the game $\widetilde{\Gamma}$. In particular, in a NE $(\smargv^{\Seek^*},\hmargv^{\Hide^*})$ of $\widetilde{\Gamma}$, $\smargv^{\Seek^*}$ (resp. $\smargv^{\Hide^*}$) represents the marginal inspection probabilities (resp. expected numbers of hidden items) at every location in a NE of $\Gamma$. This provides a significant computational advantage, as these quantities can be represented with vectors of size $n$, while the players' strategies in $\Gamma$ require vectors of exponentially large sizes.
In addition, marginal inspection probabilities and expected numbers of hidden items more conveniently quantify the criticality of locations for each player.

\subsection{Preliminary Analysis}

We next derive the intuition behind the strategies for both players in equilibrium. In particular, we describe the players' incentives and constraints that result from the features of our model. This discussion will permit us to introduce and motivate the key quantities that are needed to analytically solve the game $\widetilde{\Gamma}$ and characterize the NE of the game $\Gamma$.


%
%
%
%
%
%
%
%
%
%
%
%

From Proposition \ref{prop:equivalence}, we deduce that the players' expected payoff $U(\sigma^{\Seek},\sigma^{\Hide})$ in $\Gamma$ can be expressed as the sum over the hiding locations $i \in \lb1,n\rb$ of the expected numbers of hidden items that remain undetected, namely, $\left(1-p_i\smarg{i}{\sigma^{\Seek}}\right) \hmarg{i}{\sigma^{\Hide}}$. 
We refer to $1-p_i \rho_{i}(\sigma^{\Seek})$ as the \emph{undetection probability} of location $i$, that is, the probability that the hidden items at location $i$ remain undetected when S plays $\sigma^{\Seek}$. We also refer to $p_i\hmarg{i}{\sigma^{\Hide}}$ as the \emph{detection performance} at location $i \in \lb1,n\rb$, which represents the expected number of hidden items that S is able to detect by inspecting location $i$ when H plays $\sigma^{\Hide}$. These quantities will guide our equilibrium analysis, as H's incentive is to hide items in locations with highest undetection probabilities, and S's incentive is to inspect locations with highest detection performances. 

Due to the players' incentives, we can easily show that when each player has one resource unit (i.e., $\rs = \rh = 1$), then a NE $(\sigma^{\Seek^*},\sigma^{\Hide^*})$ satisfies $\smarg{i}{\sigma^{\Seek^*}} = \hmarg{i}{\sigma^{\Hide^*}} = (1/p_i)/\left(\sum_{j=1}^{n}1/p_{j}\right)$ for every $i \in \lb1,n\rb$, as in  \cite{gal2014succession}. In other words, S inspects each location $i$ with marginal probability proportional to $1/p_i$ so as to equalize the undetection probability of every location. Similarly, H's equilibrium strategy equalizes the detection performance of every location.


Now, if we consider a general number of player resources $\rs \geq 1$ and $\rh \geq 1$, an analogous intuition would suggest that $\smarg{i}{\sigma^{\Seek^*}} = (\rs/p_i)/\left(\sum_{j=1}^{n}1/p_{j}\right)$ and $\hmarg{i}{\sigma^{\Hide^*}} = (\rh/p_i)/\left(\sum_{j=1}^{n}1/p_{j}\right)$ for every $i\in \lb1,n\rb$. From Proposition \ref{prop:equivalence}, $\smargv(\sigma^{\Seek^*})$ and $\hmargv(\sigma^{\Hide^*})$ must belong to $\widetilde{\A}_\Seek$ and $\widetilde{\A}_\Hide$, respectively. However, if $\rs$ is large enough and the detection rates are heterogeneous enough, then $(\rs/p_i)/\left(\sum_{j=1}^{n}1/p_{j}\right)\leq 1$ may be violated for some locations. In such cases, S cannot ensure the desired level of inspection to these locations, thus rendering them more attractive for H. Analogously, if $\rh$ is large enough, the detection rates are heterogeneous enough, and the hiding capacities are small enough, then $(\rh/p_i)/\left(\sum_{j=1}^{n}1/p_{j}\right) \leq \cap_i$ may also be violated for some locations. H cannot ensure the desired level of detection performance across such locations, thus rendering them less attractive for S.

Hence, the features of our model (i.e., multiple player resources and heterogeneous locations) create new challenges that we need to address. For the remainder of this section, we order the locations such that $p_{1}c_{1} \leq \cdots \leq p_{n}c_{n}$. We refer to $p_i c_i$ as the \emph{detection potential} of location $i \in\lb1,n\rb$, that is, the maximum expected number of detected items when S inspects location $i$.
%
Moreover, for any given $i\in \lb 0,n-1\rb$, we define a bijective mapping $\pi^{i}: \lb 1,n-i\rb\to \lb i+1,n\rb$ that satisfies $p_{\pi^{i}(1)} \leq \cdots \leq p_{\pi^i(n-i)}$, i.e., that orders the set of locations $\lb i+1,n\rb$ by their detection rates. For convenience, we define $p_0c_0 \coloneqq 0$ and $p_{\pi^{i}(0)} \coloneqq 0$ for every $i \in \lb 0,n-1\rb$. We also denote $S^{i}_k \coloneqq \sum_{j=k}^{n-i} 1/p_{\pi^i(j)}$ for every $i \in \lb 0,n-1\rb$ and $k \in \lb 1,n-i+1\rb$.
%

We remark that when $\rs\geq n$, an equilibrium inspection strategy for S is to inspect each location, and an equilibrium hiding strategy for H is to hide $\min\{\rh,m\}$ items in the locations $i$ with smallest detection rates $p_i$. Similarly, when $\rh \geq m$, an equilibrium hiding strategy is to exhaust all the hiding capacities, and an equilibrium inspection strategy is to inspect the $\min\{\rs,n\}$ locations $i$ with the highest detection potentials $p_ic_i$. Henceforth, we study the games $\Gamma$ and $\widetilde{\Gamma}$ when $0 < \rs < n$ and $0 < \rh < m$.

From the discussion above, we find that H may not be able to ensure the desired level of detection performance for the locations with lowest detection potentials. If we denote by $\lb 1,i\rb$ such locations, then S will not inspect them, as her incentive is to allocate her resources among the remaining locations $\lb i+1,n\rb$ with higher detection performances. For the remaining locations, S's incentive is to equalize the undetection probabilities. However, this may not be possible if the detection rates $p_j$ for $j \in \lb i+1,n\rb$ are heterogeneous. Instead, S can inspect the $k_i$ most unreliable locations $\{\pi^i(1),\dots,\pi^i(k_i)\}$, that is, the locations with lowest detection rates, and equalize the undetection probabilities for the locations $\{\pi^i(k_i+1),\dots,\pi^i(n-i)\}$. For every $i \in \lb 0,n-1\rb$, the value of $k_i$ is given by the following expression:
\begin{align*}
    k_i \coloneqq \max\left\{ k \in \lb 0, n-i\rb :\, k + p_{\pi^{i}(k)}S^{i}_{k+1} < \rs \right\}.
\end{align*}

As mentioned above, locations for which S cannot achieve the desired level of inspection become more attractive for H. Thus, H's incentive is to exhaust the capacities of the $\ell_i$ most unreliable locations of $\lb i+1,n\rb$, that is, $\{\pi^i(1),\dots,\pi^i(\ell_i)\}$, and equalize the detection performance in locations $\{\pi^i(\ell_i+1),\dots,\pi^i(n-i)\}$. In addition, H must ensure that the detection performance in the latter set of locations is no less than that of the locations in $\lb 1,i\rb$, so that S does not have an incentive to reallocate her resources to $\lb 1,i\rb$ (that were initially uninspected by S). Thus, for every $i \in \lb 0,n-1\rb$, the value of $\ell_i$ is given by the following expression:
\begin{align*}
    \ell_i \coloneqq \max \left\{ \ell \in \lb 0, n-i\rb:\,    \sum_{j=1}^{i} c_{j} + \sum_{j=1}^{\ell} c_{\pi^{i}(j)} + p_{i}c_{i}S^{i}_{\ell+1} < \rh \right\}.
\end{align*}
We note that $\ell_i$ exists when the number of items to hide satisfies $\rh > \sum_{j=1}^i \cap_j + p_i c_i S_1^i$.

Interestingly, the interplay between $k_i$ and $\ell_i$ will play an important role in solving $\widetilde{\Gamma}$ and characterizing the NE of $\Gamma$. Furthermore, it is crucial to determine the number $i$ of locations with smallest detection potentials that will be exhausted by H and uninspected by S, given the players' equilibrium interactions in the remaining set of locations.

\subsection{Analytical Characterization} \label{sec:Main_Result}

From the discussion above, we observe that the players' behaviors in equilibrium depend on capacities, detection rates, detection potentials, numbers of resources, and the parameters $k_i$ and $\ell_i$. To capture this complex interplay and characterize the NE, we define the following key thresholds:
\begin{alignat*}{6}
\tau_{-1} &\coloneqq 0, & \qquad & & \tau_i &\coloneqq \sum_{j=1}^{i}c_{j} + \sum_{j=1}^{k_{i}}  c_{\pi^{i}(j)} + p_{i+1}c_{i+1}S^{i}_{k_{i}+1}, & \qquad & & \forall \, i&\in \lb 0,n-1\rb, \\
 & & \qquad & & \nu_i &\coloneqq \sum_{j=1}^{i}c_{j} + \sum_{j=1}^{k_{i}}c_{\pi^{i}(j)} + p_{i}c_{i}S^{i}_{k_{i}+1}, & \quad & & \forall \, i&\in \lb 0,n-1\rb.
\end{alignat*}

First, we show that these thresholds partition the interval $[0,m]$ in the following manner.

\begin{restatable}{lemma}{Thresholds}
\label{lem: threshold inequalities}
The thresholds $\tau_{-1},\ldots, \tau_{n-1}$, and $\nu_{0},\ldots, \nu_{n-1}$, satisfy $\tau_{-1}=0$, $\tau_{n-1} = m$, and $\tau_{i-1} \leq \nu_i \leq \tau_{i}$ for all $i\in \lb 0, n-1\rb$.
\end{restatable}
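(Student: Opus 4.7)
The plan is to handle the four subclaims in order of difficulty. The identity $\tau_{-1}=0$ is definitional, and $\tau_{n-1}=m$ follows from a direct computation: for $i=n-1$ one has $\pi^{n-1}(1)=n$ and $S^{n-1}_2=0$, so whether $k_{n-1}=0$ or $k_{n-1}=1$, the three terms of $\tau_{n-1}$ collapse to $\sum_{j=1}^{n}c_j=m$. The inequality $\nu_i\leq\tau_i$ is immediate from
\begin{align*}
\tau_i-\nu_i \;=\; (p_{i+1}c_{i+1}-p_ic_i)\,S^i_{k_i+1}\;\geq\;0,
\end{align*}
using the ordering $p_1c_1\leq\cdots\leq p_nc_n$ and $S^i_{k_i+1}\geq 0$.

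The heart of the lemma is $\tau_{i-1}\leq\nu_i$. I would introduce the auxiliary function
\begin{align*}
H_i(\ell) \;\coloneqq\; \sum_{j=1}^{i}c_j + \sum_{j=1}^{\ell}c_{\pi^i(j)} + p_ic_i\,S^i_{\ell+1}, \qquad \ell\in\lb 0,n-i\rb,
\end{align*}
so that $\nu_i=H_i(k_i)$. A direct computation yields
\begin{align*}
H_i(\ell+1)-H_i(\ell) \;=\; \frac{p_{\pi^i(\ell+1)}c_{\pi^i(\ell+1)}-p_ic_i}{p_{\pi^i(\ell+1)}}\;\geq\;0,
\end{align*}
where the nonnegativity follows from $\pi^i(\ell+1)\geq i+1$ and the sorting $p_{i+1}c_{i+1}\geq p_ic_i$. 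Hence $H_i$ is nondecreasing.

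To relate $\tau_{i-1}$ to $H_i$, I exploit the fact that $\pi^i$ is obtained from $\pi^{i-1}$ by deleting the unique index $s\in\lb 1,n-i+1\rb$ with $\pi^{i-1}(s)=i$. Splitting into Case~A ($s\leq k_{i-1}$) and Case~B ($s>k_{i-1}$), and carefully reindexing $\sum_{j=1}^{k_{i-1}}c_{\pi^{i-1}(j)}$ and $S^{i-1}_{k_{i-1}+1}$ in terms of $\pi^i$ and $S^i$, I would show
\begin{align*}
\tau_{i-1}\;=\;H_i(k_{i-1}-1)\text{ in Case A,}\qquad \tau_{i-1}\;=\;H_i(k_{i-1})\text{ in Case B,}
\end{align*}
where in Case~B the extra $1/p_i$ inside $S^{i-1}_{k_{i-1}+1}$ contributes a $c_i$ that merges $\sum_{j=1}^{i-1}c_j$ into $\sum_{j=1}^{i}c_j$.

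It remains to prove $k_i\geq k_{i-1}-1$ in Case~A and $k_i\geq k_{i-1}$ in Case~B. Using the defining strict inequality $k_{i-1}+p_{\pi^{i-1}(k_{i-1})}S^{i-1}_{k_{i-1}+1}<\rs$ and the same $\pi^i$--$\pi^{i-1}$ correspondence, I would verify that $k+p_{\pi^i(k)}S^i_{k+1}$ at $k=k_{i-1}-1$ (resp.\ $k=k_{i-1}$) remains strictly below $\rs$. Combined with the monotonicity of $H_i$, this gives $\tau_{i-1}\leq H_i(k_i)=\nu_i$. The main subtlety is the boundary situation $s=k_{i-1}$ within Case~A, where $\pi^i(k_{i-1}-1)=\pi^{i-1}(k_{i-1}-1)$ rather than $\pi^{i-1}(k_{i-1})$; one then uses the order $p_{\pi^{i-1}(k_{i-1}-1)}\leq p_{\pi^{i-1}(k_{i-1})}=p_i$ rather than an equality to establish the required inequality.
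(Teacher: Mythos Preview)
Your proposal is correct and follows essentially the same route as the paper. The only difference is presentational: you package the comparison via the monotone auxiliary function $H_i$, whereas the paper writes $\nu_i-\tau_{i-1}$ directly as a telescoping sum $\sum_j (p_{\pi^i(j)}c_{\pi^i(j)}-p_ic_i)/p_{\pi^i(j)}\geq 0$; both arguments rest on the same $\pi^{i-1}\!\leftrightarrow\!\pi^i$ reindexing through the pivot index (your $s$ is the paper's $j^*$), the same two-case split, and the same bound $k_i\geq k_{i-1}$ in Case~B and $k_i\geq k_{i-1}-1$ in Case~A.
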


Thus, the interval $[0,m]$ is subdivided by the thresholds as $0=\tau_{-1} \leq \nu_0 \leq \tau_0 \leq \nu_1 \leq \cdots \leq \tau_{n-2} \leq \nu_{n-1} \leq \tau_{n-1}=m$. In fact, given a number of items to hide $\rh \in \lb1,m-1\rb$, the subinterval in which $\rh$ resides corresponds to a precise configuration of the parameters and determines a specific equilibrium regime, as shown in the following theorem:

\newlength{\widest}
\settowidth{\widest}{$\displaystyle \frac{h - \sum_{j=1}^{i^{*}}c_{j} -  \sum_{j=1}^{k_{i^{*}}} c_{\pi^{i^{*}}(j)} }{p_{i}S^{i^{*}}_{k_{i^{*}} + 1}}$}

\newlength{\widestx}


\settowidth{\widestx}{$\displaystyle h  -  \sum_{j=1}^{i^*} \hspace{-0.1em} c_{j}  -  \sum_{j=1}^{\ell_{i^{*}}-1}  c_{\pi^{i^*}(j)}  -  p_{i^*}c_{i^*}S^{i^*}_{\ell_{i^{*}}+1} $}

\begin{restatable}{theorem}{NEDisjointMonitoringSets}
\label{thm: NE disjoint monitoring sets}
Given the players' resources $\rs\in \lb1,n-1\rb$ and  $\rh \in \lb1,m-1\rb$, let $i^{*}\in \lb 0, n-1\rb$ satisfying $\tau_{i^{*}-1} < \rh \leq \tau_{i^{*}}$. Then, any strategy profile $(\smargv^{\Seek^*},\hmargv^{\Hide^*})\in \widetilde{\A}_{\Seek} \times \widetilde{\A}_{\Hide}$ that satisfies the conditions below is a NE of $\widetilde{\Gamma}$. Furthermore, any strategy profile $(\sigma^{\Seek^*},\sigma^{\Hide^*})\in \Delta_{\Seek} \times \Delta_{\Hide}$ that satisfies $\rho(\sigma^{\Seek^*}) = \rho^{\Seek^*}$ and $\rho(\sigma^{\Hide^*}) = \rho^{\Hide^*}$ is a NE of $\Gamma$.

\begin{enumerate}
 \item[Regime Pattern 1:] If $\nu_{i^*} < \rh \leq \tau_{i^*}$, then $k_{i^*} \leq \ell_{i^*}$  and sufficient equilibrium conditions are given by:
%
%
%
    \begin{align}
    \OneAndAHalfSpacedXI
        \smargv_{i}^{\Seek^*} &= 
            \begin{dcases}
                \makebox[\widestx][l]{$0$} &\text{if } i\in \I,\\
                1 & \text{if } i \in \J,\\
                \frac{\rs-k_{i^{*}}}{p_i S^{i^{*}}_{k_{i^{*}}+1}} & \text{if } i \in \K,
            \end{dcases} \label{Regime_1_S}\\[2ex]
            \hmargv_{i}^{\Hide^*} &= 
            \begin{dcases}
                \makebox[\widestx][l]{$c_i$} &\text{if } i\in \I \cup \J,\\
                \frac{\rh - \sum_{j=1}^{i^{*}}c_{j} - \sum_{j=1}^{k_{i^{*}}}c_{\pi^{i^{*}}(j)} }{p_i S^{i^{*}}_{k_{i^{*}}+1}} & \text{if } i\in \K,
            \end{dcases}\label{Regime_1_H}
    \end{align}
    where $\I \coloneqq  \{1,\ldots, i^*\}$, $\J \coloneqq \{ \pi^{i^*}(1),\ldots, \pi^{i^*}(k_{i^*})\}$, and $\K \coloneqq \{ \pi^{i^*}(k_{i^{*}}+1),\ldots, \pi^{i^*}(n - i^{*})\}$.
The value of the games $\Gamma$ and $\widetilde{\Gamma}$ is given by 
\begin{align*}
    \rh  - \sum_{j=1}^{k_{i^{*}}} p_{\pi^{i^{*}}(j)}c_{\pi^{i^{*}}(j)} - \frac{ \Big(\rs - k_{i^*}  \Big) \left( \rh - \sum_{j=1}^{i^{*}}c_{j} - \sum_{j=1}^{k_{i^{*}}}c_{\pi^{i^{*}}(j)} \right)}{S^{i^*}_{k_{i^*}+1}}.
\end{align*}

\item[Regime Pattern 2:]  If $i^* = 0$ and $\tau_{-1} < \rh \leq \nu_{0}$, then $\ell_{0^*} < k_{0^*}$ and sufficient equilibrium conditions are given by:
 \begin{align}
    \OneAndAHalfSpacedXI
        \phantom{\smargv_i^{\Seek^*}}&\phantom{\;=}\left\{\begin{aligned}
        \rho^{\Seek^*}_i &= \makebox[0.43\widestx][l]{$1$} \quad &&\text{if } i \in \J\cup\{ \pi^{0}(\ell_{0}+1) \},\\
        \frac{p_{\pi^{0}(\ell_{0}+1)}}{p_{i}} \leq \rho^{\Seek^*}_i & \leq 1  &&\text{if } i \in \K\setminus \{ \pi^{0}(\ell_{0}+1) \}, \\
        \sum_{i=1}^{n} \rho^{\Seek^*}_i &\leq \rs, &&\\
        \end{aligned}\right.\label{Regime_2_S}\\[2ex]
        \rho^{\Hide^*}_i  &= 
            \begin{dcases}
                \makebox[\widestx][l]{$c_i$} &\text{if } i  \in   \J,\\
                 \rh  - \sum_{j=1}^{\ell_{0}}  c_{\pi^{0}(j)}   &\text{if } i = \pi^{0}(\ell_{0}+1), \\
              0 &\text{if } i \in \K \setminus \{ \pi^{0}(\ell_{0}+1) \},
            \end{dcases} \label{Regime_2_H}
\end{align}
where $\J \coloneqq \{\pi^{0}(1),\ldots, \pi^{0}(\ell_{0})\}$ and $\K \coloneqq \{ \pi^{0}(\ell_{0}+1),\ldots, \pi^{0}(n)\}$. The value of the games $\Gamma$ and $\widetilde{\Gamma}$ is given by
\begin{align*}
    \rh - \sum_{j=1}^{\ell_{0}} p_{\pi^{0}(j)}c_{\pi^{0}(j)} - p_{\pi^{0}(\ell_{0}+1)} \left( \rh - \sum_{j=1}^{\ell_{0}}c_{\pi^{0}(j)} \right).
\end{align*}

   \item[Regime Pattern 3:]  If $i^* \geq 1$ and $\tau_{i^*-1} < \rh \leq \nu_{i^*}$, then $\ell_{i^*} < k_{i^*}$ and sufficient equilibrium conditions are given by:
    \begin{align}
    \OneAndAHalfSpacedXI
        \smargv_i^{\Seek^*} &= 
            \begin{dcases}
                \makebox[\widestx][l]{$0$} &\text{if } i \in \I \setminus \left\{ i^* \right\}, \\
                \rs -  \ell_{i^{*}} - p_{\pi^{i^*}(\ell_{i^{*}}+1)} S^{i^*}_{\ell_{i^*}+1} &\text{if } i=i^*,\\
                1 &\text{if } i \in \J\cup \{ \pi^{i^*}(\ell_{i^*}+1) \},\\
                \frac{p_{\pi^{i^*}(\ell_{i^{*}}+1)}}{p_{i}} &\text{if } i \in \K\setminus \{ \pi^{i^*}(\ell_{i^*}+1) \}, 
            \end{dcases} \label{Regime_3_S}\\[2ex] 
        \hmargv_i^{\Hide^*}   &= 
            \begin{dcases}
                \makebox[\widestx][l]{$c_i$} &\text{if } i  \in   \I  \cup  \J,\\
                 \rh -  \sum_{j=1}^{i^*}  c_{j}  - \sum_{j=1}^{\ell_{i^{*}}}  c_{\pi^{i^*}(j)} -  p_{i^*}c_{i^*}S^{i^*}_{\ell_{i^{*}}+2}  &\text{if } i = \pi^{i^*}(\ell_{i^{*}}+1), \\
                \frac{p_{i^*}c_{i^*}}{p_{i}} &\text{if } i \in \K \setminus \{ \pi^{i^*}(\ell_{i^*}+1) \},
            \end{dcases} \label{Regime_3_H}
\end{align}
where $\I \coloneqq  \{1,\ldots, i^*\}$, $\J \coloneqq \{ \pi^{i^*}(1),\ldots, \pi^{i^*}(\ell_{i^*})\}$, and $\K \coloneqq \{ \pi^{i^*}(\ell_{i^{*}}+1),\ldots, \pi^{i^*}(n - i^{*})\}$. The value of the games $\Gamma$ and $\widetilde{\Gamma}$ is given by
\begin{align*}
    \rh - p_{i^{*}} \left(\rs - \ell_{i^*} -p_{\pi^{i^*}(\ell_{i^{*}}+1)}S^{i^{*}}_{\ell_{i^{*}}+1} \right)c_{i^{*}} - \hspace{-0.05cm}\sum_{j=1}^{\ell_{i^{*}}} p_{\pi^{i^{*}}(j)}c_{\pi^{i^{*}}(j)} - p_{\pi^{i^*}(\ell_{i^{*}}+1)} \left( \rh - \hspace{-0.05cm}\sum_{j=1}^{i^{*}}c_{j} - \hspace{-0.05cm}\sum_{j=1}^{\ell_{i^{*}}}c_{\pi^{i^{*}}(j)} \right).
\end{align*}

   \end{enumerate}
\end{restatable}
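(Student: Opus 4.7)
The plan is to reduce everything to the continuous game via Proposition \ref{prop:equivalence}: once the prescribed marginal profile $(\rho^{\Seek^*},\rho^{\Hide^*})$ is shown to be a NE of $\widetilde{\Gamma}$, Lemma \ref{lemma:convex_hull} gives distributions over $\A_\Seek$ and $\A_\Hide$ realizing those marginals, and Proposition \ref{prop:equivalence} converts this into a NE of $\Gamma$ with value equal to $\tilde{u}(\rho^{\Seek^*},\rho^{\Hide^*})$. So the only real work is verifying the saddle-point condition in $\widetilde{\Gamma}$.

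Since $\tilde{u}(\rho^\Seek,\rho^\Hide)=\sum_{i=1}^n(1-p_i\rho^\Seek_i)\rho^\Hide_i$ is bilinear, the saddle-point condition splits into two linear programs:
\begin{align*}
\rho^{\Seek^*}\in\arg\max_{\rho^\Seek\in\widetilde{\A}_\Seek}\sum_{i=1}^n p_i\rho^{\Hide^*}_i\rho^\Seek_i,\qquad \rho^{\Hide^*}\in\arg\max_{\rho^\Hide\in\widetilde{\A}_\Hide}\sum_{i=1}^n(1-p_i\rho^{\Seek^*}_i)\rho^\Hide_i.
\end{align*}
The first is a $[0,1]^n$-box knapsack with budget $\rs$, solved by fully inspecting the top-$\rs$ locations ranked by detection performance $p_i\rho^{\Hide^*}_i$; the second is a capacitated knapsack with box $[0,c_i]$ and budget $\rh$, solved by filling to capacity the locations ranked by undetection probability $1-p_i\rho^{\Seek^*}_i$ until the budget is exhausted. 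Certifying NE therefore amounts to computing these two coefficient vectors under the proposed strategy and showing the LP optimality ranking holds, with the common interior values supplying complementary slackness for the partially-played coordinates. For each regime I would execute three stages: \textbf{(i) Feasibility.} Verify $0\le\rho^{\Seek^*}_i\le 1$, $0\le\rho^{\Hide^*}_i\le c_i$, $\sum_i\rho^{\Seek^*}_i\le\rs$, $\sum_i\rho^{\Hide^*}_i\le\rh$, using the defining extremality of $k_{i^*}$ and $\ell_{i^*}$ together with the threshold ordering of Lemma \ref{lem: threshold inequalities}. \textbf{(ii) Saddle point.} Compute $p_i\rho^{\Hide^*}_i$ and $1-p_i\rho^{\Seek^*}_i$ coordinate-by-coordinate; in each regime these collapse to a common interior value on $\K$, with the correct dominance on $\I$ (low undetection, zero detection performance) and $\J$ (high detection performance, high undetection). \textbf{(iii) Value.} Substitute into $\tilde{u}(\rho^{\Seek^*},\rho^{\Hide^*})$; most terms telescope via the identities built into $\tau_{i^*-1}$ and $\nu_{i^*}$.

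The main obstacle is Regime 2, where the inspection marginals are \emph{not} uniquely determined — only sandwich inequalities $p_{\pi^0(\ell_0+1)}/p_i\le\rho^{\Seek^*}_i\le 1$ are imposed on $\K\setminus\{\pi^0(\ell_0+1)\}$. I therefore need an extra step showing that the admissible set is nonempty, which boils down to the inequality $\ell_0+1+p_{\pi^0(\ell_0+1)}S^0_{\ell_0+2}\le\rs$; this can be extracted from the hypothesis $\ell_0<k_0$ combined with the maximality in the definition of $k_0$. The second delicate point, common to all three regimes, is the pivotal coordinate(s) where equalizing constraints meet: $\pi^{i^*}(k_{i^*}+1)$ in Regime 1, $\pi^0(\ell_0+1)$ in Regime 2, and both $i^*$ and $\pi^{i^*}(\ell_{i^*}+1)$ in Regime 3. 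At these coordinates one must check that the partial value prescribed is genuinely both (a) LP-optimal for the opposing player (common coefficient with the $\K$ block), and (b) consistent with the budget, so the thresholds $\tau,\nu$ must exactly line up with the definitions of $k_{i^*},\ell_{i^*}$. I would factor out the recurring identity $\sum_{j\in\K}1/p_{\pi^{i^*}(j)}=S^{i^*}_{k_{i^*}+1}$ (or $S^{i^*}_{\ell_{i^*}+1}$) as a preliminary simplification lemma so that the three regime-specific value computations reduce to a few lines each, rather than repeating the same bookkeeping three times.
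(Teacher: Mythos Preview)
Your plan is essentially the paper's proof: it too reduces via Proposition~\ref{prop:equivalence}, verifies feasibility of $(\rho^{\Seek^*},\rho^{\Hide^*})$ in each regime, certifies mutual best response by reading each player's problem as a continuous knapsack with coefficients $p_i\rho^{\Hide^*}_i$ and $1-p_i\rho^{\Seek^*}_i$ exactly as you describe, and then evaluates $\tilde u$ for the value; your treatment of Regime~2 (nonemptiness via $\ell_0+1\in K_0$) matches the paper's argument verbatim. The one place your outline understates the work is the feasibility bound $\rho^{\Seek^*}_{i^*}\le 1$ in Regime~3 (equivalently $\rs-\ell_{i^*}-p_{\pi^{i^*}(\ell_{i^*}+1)}S^{i^*}_{\ell_{i^*}+1}\le\min\{p_{\pi^{i^*}(\ell_{i^*}+1)}/p_{i^*},1\}$): this does not fall out of the extremality of $k_{i^*},\ell_{i^*}$ and Lemma~\ref{lem: threshold inequalities} alone, but requires using $\rh>\tau_{i^*-1}$ to compare $k_{i^*-1}$ against $\ell_{i^*}$ (the paper's Lemma~\ref{lem:ki_vs_li}) with a case split on the position of $i^*$ in the ordering $\pi^{i^*-1}$.
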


From Theorem \ref{thm: NE disjoint monitoring sets} and thanks to a carefully selected set of thresholds and parameters, we can analytically solve the game $\widetilde{\Gamma}$ and characterize NE of $\Gamma$. First, given $i^{*}\in \lb 0, n-1\rb$ satisfying $\tau_{i^{*}-1} < \rh \leq \tau_{i^{*}}$, we generally find that $\lb 1,i^*\rb$ represents the collection of locations for which H cannot equalize the detection performance due to their small detection potentials, as intuited in Section \ref{subsec:preliminary}. When this occurs, S's incentive is to utilize her resources to inspect the locations $\lb i^*+1, n\rb$ with higher detection performance, thus leaving the locations $\lb1,i^*\rb$ uninspected while being exhausted by H. However, the players' behaviors in the remaining locations differ depending on the subinterval in which $\rh$ belongs. Indeed, we find that the threshold $\nu_{i^*}$ determines the relation between $k_{i^*}$ and $\ell_{i^*}$, which in turn impacts the set of locations $\J$ that S deterministically inspects and that H exhausts. It also dictates how the players should randomize their remaining resources throughout the locations in equilibrium. 



Theorem \ref{thm: NE disjoint monitoring sets} shows that three major equilibrium regime patterns emerge as a result of the complex and nonlinear interplay between the game parameters, captured by the selected thresholds $\tau$ and $\nu$. In fact, these regime patterns generalize the equilibrium results from the game studied in \cite{gal2014succession}, in which a prey hides from a predator that can inspect multiple locations with heterogeneous detection capabilities. That game can be derived from ours by setting $\cap_i=1$ for all $i\in \lb1,n\rb$ and $\rh=1$. In such a setting, we can show that $i^*=0$ and $\ell_0 = 0$, and the NE regimes observed by the authors correspond to Regime Pattern 1 when $k_0=0$ and to Regime Pattern 2 when $k_0 \geq 1$. 
%
%
We also note that the equilibrium behaviors described in Theorem~\ref{thm: NE disjoint monitoring sets} in its full generality have not been observed in previously studied models that considered homogeneous detection rates \citep{dziubinski2018hide,dahan2022network} or one unit of resources for one or both players \citep{washburn1995two,karlin2016game}.




In fact, if we allow the vector of capacities and the players' resources to be continuous in the game $\widetilde{\Gamma}$, and if we consider the set $\Psi$ of parameters for which $\widetilde{\Gamma}$ is nontrivial, i.e.,
\begin{align}
\Psi \coloneqq \left\{(n,p,c,\rs,\rh) \, : \, n \in \Z_{>0}, \ p \in (0,1]^n, \ c \in \R_{>0}^n, \ \rs \in (0,n), \ \rh \in (0,\sum_{i=1}^n c_i)\right\}, \label{set_Psi}
\end{align}
then we obtain the following stronger result:
\begin{restatable}{proposition}{PropIFF}
\label{prop:iff}
The set of parameters for which conditions \eqref{Regime_1_S}-\eqref{Regime_3_H}  in Theorem \ref{thm: NE disjoint monitoring sets} are necessary and sufficient for a strategy profile $(\smargv^{\Seek^*},\hmargv^{\Hide^*}) \in \widetilde{\A}_\Seek \times \widetilde{\A}_\Hide$ to be a NE of $\widetilde{\Gamma}$ is a dense subset of $\Psi$.
\end{restatable}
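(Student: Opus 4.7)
The plan is to identify a ``generic'' subset $\Psi^* \subseteq \Psi$, defined by a finite collection of strict inequalities ruling out degenerate configurations, and then to show (i) $\Psi^*$ is dense in $\Psi$, and (ii) every NE $(\rho^{\Seek^*},\rho^{\Hide^*}) \in \widetilde{\A}_\Seek \times \widetilde{\A}_\Hide$ of $\widetilde{\Gamma}$ at parameters in $\Psi^*$ must satisfy the corresponding conditions \eqref{Regime_1_S}--\eqref{Regime_3_H}. Concretely, I would take $\Psi^*$ to be the parameters such that: the detection potentials $p_ic_i$ are pairwise distinct (so the ordering $p_1c_1<\cdots<p_nc_n$ is strict), the detection rates $p_i$ are pairwise distinct (so each $\pi^i$ is uniquely defined), $\rh$ does not equal any threshold $\tau_j$ or $\nu_j$, and the defining inequalities for $k_{i^*}$ and $\ell_{i^*}$ are attained with strict slack. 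Each excluded condition cuts out a finite union of lower-dimensional algebraic subsets of the open set $\Psi$, so its complement $\Psi^*$ is open and of full Lebesgue measure in $\Psi$, hence dense.

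For (ii), I would exploit the bilinearity of $\tilde{u}$. At any NE, $\rho^{\Seek^*}$ minimizes the linear map $\rho^\Seek \mapsto \sum_i (1 - p_i \rho^\Seek_i)\rho^{\Hide^*}_i$ over $\widetilde{\A}_\Seek$, and $\rho^{\Hide^*}$ maximizes $\rho^\Hide \mapsto \sum_i (1 - p_i \rho^{\Seek^*}_i)\rho^\Hide_i$ over $\widetilde{\A}_\Hide$. LP duality then yields complementary-slackness equalization: $\rho^{\Seek^*}_i \in (0,1)$ can hold only at locations where the detection performance $p_i\rho^{\Hide^*}_i$ attains a common maximal value, and $\rho^{\Hide^*}_i \in (0,c_i)$ can hold only at locations where the undetection probability $1 - p_i \rho^{\Seek^*}_i$ attains a common maximal value. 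I would then perform a case analysis on the index $i^*$ with $\tau_{i^*-1} < \rh < \tau_{i^*}$ and on whether $\rh<\nu_{i^*}$ or $\rh>\nu_{i^*}$, arguing that under the genericity conditions defining $\Psi^*$ the only partition of $\lb 1,n\rb$ compatible with the equalization conditions and the budget constraints $\sum_i \rho^{\Seek^*}_i \leq \rs$ and $\sum_i \rho^{\Hide^*}_i \leq \rh$ is exactly the partition $(\I,\J,\K)$ specified by the theorem, with the levels pinned down as in \eqref{Regime_1_S}--\eqref{Regime_3_H}.

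The main obstacle is the detailed bookkeeping needed to rule out alternative partitions. For instance, one must verify that no location $i \in \I = \lb 1,i^*\rb$ can be inspected with positive probability at equilibrium: doing so would force its detection performance to coincide with that of locations in $\J \cup \K$, but the strict separation of the detection potentials $p_ic_i$ and the strict position of $\rh$ between $\tau_{i^*-1}$ and $\tau_{i^*}$ prohibit this. Symmetric arguments handle the exhaustion status of locations in $\J$ and the support structure on $\K$. The most delicate case is Regime Pattern~2, where the seeker retains a continuum of best responses on $\K \setminus \{\pi^0(\ell_0+1)\}$ (each $\rho^{\Seek^*}_i$ constrained only to the interval $[p_{\pi^0(\ell_0+1)}/p_i,1]$); uniqueness must be understood in terms of the support and level structure rather than pointwise values, and one must verify that no other support pattern satisfies the equalization and budget conditions. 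Carrying out this case analysis in all three regimes, leveraging the strict-inequality margins built into $\Psi^*$, is the primary technical burden of the proof.
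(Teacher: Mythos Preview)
Your overall strategy---isolate a generic dense subset by ruling out finitely many algebraic degeneracies, then prove necessity there via LP duality and complementary slackness---is exactly the paper's approach. The paper factors the argument into two pieces: first a proposition (stated with explicit non-edge-case hypotheses such as $\rh<\tau_{i^*}$, $\rs<k_{i^*}+1+p_{\pi^{i^*}(k_{i^*}+1)}S^{i^*}_{k_{i^*}+2}$, distinct $p_i$ and $p_ic_i$, and $p_{\pi^{i^*}(\ell_{i^*}+1)}<1$) establishing necessity of \eqref{Regime_1_S}--\eqref{Regime_3_H}, and then a short perturbation argument (replace $\rs,\rh$ by $\rs-\varepsilon,\rh-\varepsilon$) showing that any point of a dense $\Psi'\subset\Psi$ can be approximated by parameters satisfying those hypotheses.

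One technical shortcut the paper exploits, which your proposal does not mention, is zero-sum \emph{interchangeability}: given the explicit equilibrium $(\smargv^{\Seek^*},\hmargv^{\Hide^*})$ from Theorem~\ref{thm: NE disjoint monitoring sets} and an arbitrary equilibrium $(\smargv^{\Seek^\prime},\hmargv^{\Hide^\prime})$, the mixed pairs $(\smargv^{\Seek^\prime},\hmargv^{\Hide^*})$ and $(\smargv^{\Seek^*},\hmargv^{\Hide^\prime})$ are also equilibria. This lets the paper analyze $\smargv^{\Seek^\prime}$ as a best response to the \emph{known} $\hmargv^{\Hide^*}$ (a knapsack with explicit profits), pinning down $\smargv_i^{\Seek^\prime}$ on $\I$ and $\J$ immediately, and then use complementary slackness for $\hmargv^{\Hide^*}$ against $\smargv^{\Seek^\prime}$ to recover the values on $\K$. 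Your plan to argue directly from the coupled complementary-slackness conditions of an arbitrary NE is workable but considerably more involved; the cross-pairing trick is what makes the ``detailed bookkeeping'' you anticipate tractable. Also, your illustrative claim that no $i\in\I$ is inspected with positive probability fails in Regime Pattern~3, where $\smargv_{i^*}^{\Seek^*}>0$; only $\I\setminus\{i^*\}$ is uninspected there.
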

%
%

Thus, Proposition \ref{prop:iff} shows that the analytical expressions in Theorem \ref{thm: NE disjoint monitoring sets} describe \emph{all} pure NE of the continuous game $\widetilde{\Gamma}$ almost surely. Similarly, for the original discrete game $\Gamma$, conditions \eqref{Regime_1_S}-\eqref{Regime_3_H} characterize all NE of $\Gamma$, apart from some edge cases that are described in the electronic companion (Proposition \ref{prop:iff_conditions}). 
Next, we provide further insights on the equilibrium behavior in each regime pattern and illustrate them with examples.

\textbf{Regime Pattern 1: $\nu_{i^*} < \rh \leq \tau_{i^*}$.} In this regime, S does not inspect the set of locations $\I=\{1,\dots,i^*\}$, and instead focuses her resources on the remaining locations. Due to the heterogeneity of the detection rates in $\{i^*+1,\dots,n\}$, S deterministically inspects the set of $k_{i^*}$ most unreliable locations, $\J =\{\pi^{i^*}(1),\ldots,\allowbreak \pi^{i^*}(k_{i^*})\}$, and randomizes her $\rs - k_{i^*}$ resources in $\K =\{\pi^{i^*}(k_{i^*}+1),\ldots, \pi^{i^*}(n-i^*)\}$ so as to equalize the undetection probabilities in $\K$. The feasibility of S's strategy is guaranteed by the definition of $k_{i^*}$.

As a result of S's inspection strategy, H's hiding strategy exhausts all locations in $\I$ that are not inspected by S and all locations in $\J$ for which S cannot equalize the undetection probabilities. Then H randomizes his remaining $\rh - \sum_{i \in \I\cup\J}c_i$ resources so as to equalize the detection performance across locations in $\K$. We note that the feasibility and equilibrium guarantee of H's strategy is a consequence of the subinterval in which $\rh$ belongs. Indeed, since $\nu_{i^*} < \rh$, then $k_{i^*} \leq \ell_{i^*}$, which implies that H can exhaust the locations in $\I\cup\J$ and still provide a detection performance that is sufficient so as to not incentivize S to reallocate some of her resources towards the locations in $\I$. Furthermore, since $\rh$ is upper bounded by $\tau_{i^*}$, then H can feasibly equalize the detection performances in $\K$.

We illustrate this regime pattern with the following example:

\begin{example}
\label{ex: nash eqilibrium regime pattern 1}
Consider the hide-and-seek model represented in Figure \ref{fig: regime pattern 1} and assume that $\rs=3$ and $\rh=7$.
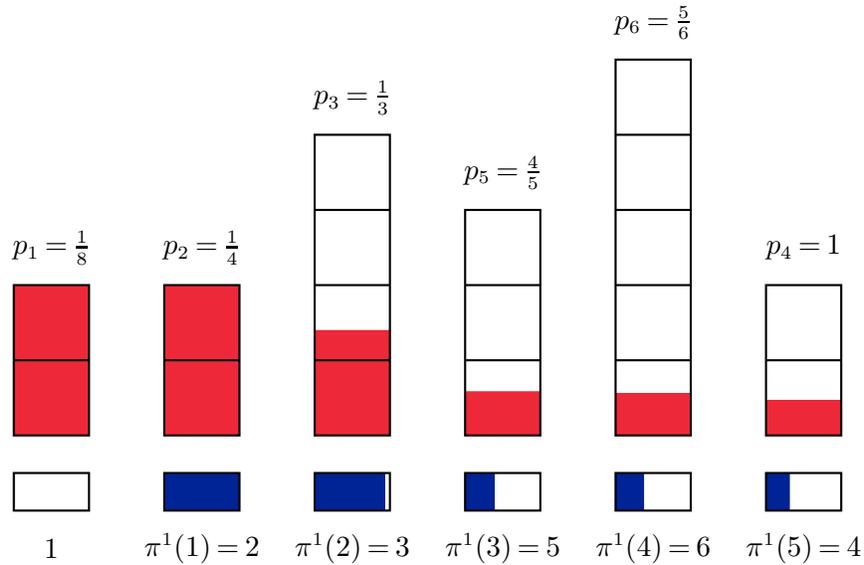
\begin{figure}[htbp]
    \centering
    \begin{tikzpicture}
        \pic at (0,0) {hidingloc={$1$,$p_1=\frac{1}{8}$,2,0,2}};
        \pic at (2,0) {hidingloc={$\pi^{1}(1)=2$,$p_2=\frac{1}{4}$,2,1,2}};
        \pic at (4,0) {hidingloc={$\pi^{1}(2)=3$,$p_3=\frac{1}{3}$,4,40/43,60/43}};
        \pic at (6,0) {hidingloc={$\pi^{1}(3)=5$,$p_5 = \frac{4}{5}$,3,50/129,25/43}};
        \pic at (8,0) {hidingloc={$\pi^{1}(4)=6$,$p_6=\frac{5}{6}$,5,16/43,24/43}};
        \pic at (10,0) {hidingloc={$\pi^{1}(5)=4$,$p_4=1$,2,40/129,20/43}};
    \end{tikzpicture}
    \caption{Illustration of a NE for Regime Pattern 1 when $\rs = 3$ and $\rh=6$. The hiding capacity of each location is represented by the corresponding number of squares. Marginal inspection probabilities (resp. expected numbers of hidden items) in equilibrium are represented by the blue (resp. red) colors.}
    \label{fig: regime pattern 1}
\end{figure}

In this case, $449/80 = \nu_{0} < \rh \leq \tau_1=289/40$ and $i^* = 1$. Furthermore, $k_1 = \ell_1 = 1$ and $\pi^{1}(1) = 2$, $\pi^{1}(2) = 3$, $\pi^{1}(3) = 5$, $\pi^{1}(4) = 6$, $\pi^{1}(5) = 4$. Therefore, in equilibrium, S selects an inspection strategy $\sigma^{\Seek^*} \in \Delta_{\Seek}$ such that $\smarg{1}{\sigma^{\Seek^*}}=0$, $\smarg{2}{\sigma^{\Seek^*}}=1$,  $\smarg{3}{\sigma^{\Seek^*}}=40/43$, $\smarg{4}{\sigma^{\Seek^*}}=40/129$, $\smarg{5}{\sigma^{\Seek^*}}=50/129$, and $\smarg{6}{\sigma^{\Seek^*}}=16/43$. On the other hand, H selects a hiding strategy $\sigma^{\Hide^*}\in \Delta_{\Hide}$ such that $\hmarg{1}{\sigma^{\Hide^*}} =\hmarg{2}{\sigma^{\Hide^*}}=2$, $\hmarg{3}{\sigma^{\Hide^*}} =60/43$, $\hmarg{4}{\sigma^{\Hide^*}} =20/43$, $\hmarg{5}{\sigma^{\Hide^*}} =25/43$, and $\hmarg{6}{\sigma^{\Hide^*}} =24/43$. The value of the game is $U(\sigma^{\Seek^*}, \sigma^{\Hide^*})=5+49/86$. \hfill $\triangle$
\end{example}

\textbf{Regime Pattern 2: $i^* = 0$ and $\tau_{-1} < \rh \leq \nu_{0}$.} In this regime, $\I = \emptyset$ and every location is inspected by S with positive probability. In fact, H's number of resources is too small relative to S's capability of inspecting hiding locations. As a consequence, H's equilibrium strategy consists in greedily hiding items into the locations with smallest detections rates. This results in exhausting the locations in $\J = \{\pi^0(1),\dots,\pi^0(\ell_0)\}$, and assigning his remaining $\rh - \sum_{i \in \J} c_i$ resources in location $\pi^0(\ell_0+1)$. The feasibility H's strategy is guaranteed by the definition of $\ell_0$.

Since S has enough resources, as guaranteed by $k_0 \geq \ell_0+1 > \ell_0$, then her inspection strategy consists in deterministically inspecting locations in $\J\cup\{\pi^0(\ell_0+1)\}$, and randomizing sufficient resources to ensure that the undetection probabilities of the remaining locations are no more than that of location $\pi^0(\ell_0+1)$ so as to prevent H from reallocating some items from $\J$. Interestingly, this can be achieved by S without necessarily utilizing all her resources. 

We illustrate this regime pattern with the following example:

\begin{example}
\label{ex: nash eqilibrium regime pattern special case}
Consider the hide-and-seek model of Figure \ref{fig: regime pattern 2} and assume that $\rs = 6$ and $\rh=6$.

\begin{figure}[htbp]
    \centering
    \begin{tikzpicture}
        \pic at (0,0) {hidingloc={$\pi^{0}(1)=1$,$p_1=\frac{1}{8}$,2,1,2}};
        \pic at (2,0) {hidingloc={$\pi^{0}(2)=2$,$p_2=\frac{1}{4}$,2,1,2}};
        \pic at (4,0) {hidingloc={$\pi^{0}(3)=3$,$p_3=\frac{1}{3}$,4,1,2}};
        \pic at (6,0) {hidingloc={$\pi^{0}(4)=5$,$p_5 = \frac{4}{5}$,3,5/12,0}};
        \pic at (8,0) {hidingloc={$\pi^{0}(5)=6$,$p_6=\frac{5}{6}$,5,2/5,0}};
        \pic at (10,0) {hidingloc={$\pi^{0}(6)=4$,$p_4=1$,2,1/3,0}};
    \end{tikzpicture}
    \caption{Illustration of a NE for Regime Pattern 2 when $\rs = 6$ and $\rh=6$. The hiding capacity of each location is represented by the corresponding number of squares. Marginal inspection probabilities (resp. expected numbers of hidden items) in equilibrium are represented by the blue (resp. red) colors.}
    \label{fig: regime pattern 2}
\end{figure}
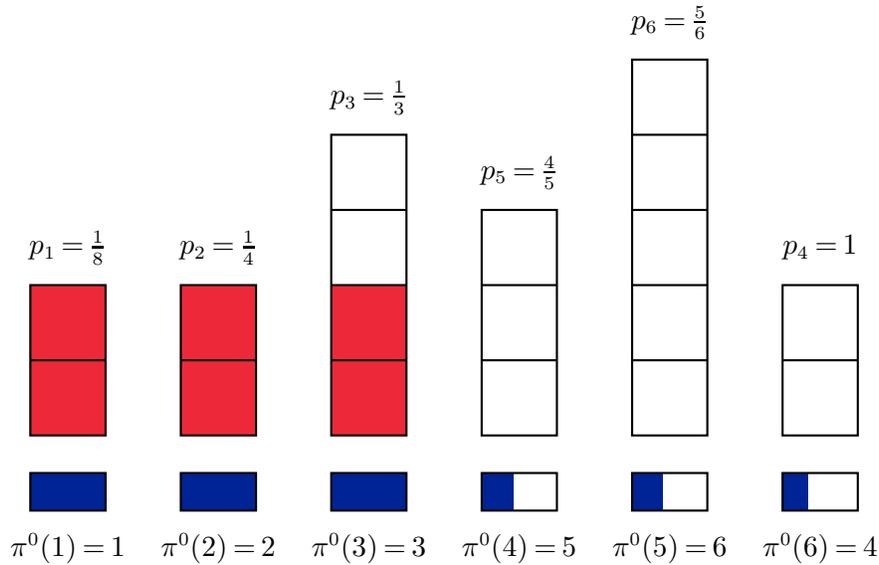

In this case, $0 = \tau_{-1} < \rh \leq \nu_0 = 8$ and $i^* = 0$. Furthermore, $2=\ell_{0} < k_{0} = 4$ and $\pi^0(1) = 1$, $\pi^0(2) = 2$, $\pi^{0}(3) = 3$, $\pi^{0}(4) = 5$, $\pi^{0}(5) = 6$, $\pi^{0}(6) = 4$. Therefore, one equilibrium inspection strategy for S is given by $\sigma^{\Seek^*} \in \Delta_{\Seek}$ satisfying $\rho_{1}(\sigma^{\Seek^*})=\rho_{2}(\sigma^{\Seek^*})=\rho_{3}(\sigma^{\Seek^*})=1$,  $\rho_{4}(\sigma^{\Seek^*})= 1/3$, $\rho_{5}(\sigma^{\Seek^*}) = 5/12$, and $\rho_{6}(\sigma^{\Seek^*}) = 2/5$. We note that S can implement this strategy with only $5 < \rs$ units of resources. On the other hand, H chooses a hiding strategy $\sigma^{\Hide^*}\in \Delta_{\Hide}$ such that $\rho_{1}(\sigma^{\Hide^*})=\rho_{2}(\sigma^{\Hide^*})=\rho_{3}(\sigma^{\Hide^*})=2$, and $\rho_{4}(\sigma^{\Hide^*})=\rho_{5}(\sigma^{\Hide^*})=\rho_{6}(\sigma^{\Hide^*})=0$. The value of the game is $U(\sigma^{\Seek^*}, \sigma^{\Hide^*})=4+7/12$. \hfill $\triangle$
\end{example}

\textbf{Regime Pattern 3: $i^* \geq 1$ and $\tau_{i^*-1} < \rh \leq \nu_{i^*}$.}  In this final regime, we observe an interesting and more complex behavior from the players' equilibrium strategies. H cannot equalize the detection performance across locations in $\I = \{1,\dots,i^*\}$ due to their capacity constraints. Thus, locations in $\I$ are exhausted by H and initially left uninspected by S. Then, since $\ell_{i^*}+1 \leq k_{i^*}$, S cannot achieve the desired level of undetection probability for locations in $\J\cup \{\pi^{i^*}(\ell_{i^*}+1)\} = \{\pi^{i^*}(1),\ldots, \pi^{i^*}(\ell_{i^*}+1)\}$. Therefore, H's incentive is to exhaust locations in $\J$ and randomize some of his resources to equalize the detection performance across $\K = \{\pi^{i^*}(\ell_{i^*}+1),\ldots, \pi^{i^*}(n-i^*)\}$ to that of $i^*$ (i.e., $p_{i^*}c_{i^*}$). Interestingly, H is left with $\rh - \sum_{i \in \I\cup\J}c_i - p_{i^*}c_{i^*} S_{\ell_{i^*}+1}^{i^*} > 0$ resources that he additionally allocates to location $\pi^{i^*}(\ell_{i^*}+1)$, which S deterministically inspects. The feasibility and equilibrium guarantees of this strategy follow from the definition of $\ell_{i^*}$.

As a result of H's strategy, S deterministically inspects locations in $\J \cup\{\pi^{i^*}(\ell_{i^*}+1)\}$ and randomizes some of her resources to equalize the undetection probabilities in $\K\setminus\{\pi^{i^*}(\ell_{i^*}+1)\}$ to that of $\pi^{i^*}(\ell_{i^*}+1)$. This is possible since $\ell_{i^*}+1 \leq k_{i^*}$. Interestingly, S still has $\rs - \ell_{i^*} - 1 - p_{\pi^{i^*}(\ell_{i^*}+1)} S_{\ell_{i^*}+2}^{i^*} > 0$ resources that she can now allocate among the $i^* \geq 1$ locations in $\I$ that were previously left uninspected. S's incentive is to allocate her remaining resources on the location in $\I$ with highest detection performance, namely, $i^*$. Note that feasibility and equilibrium guarantees for this strategy is a consequence of $\rh > \tau_{i^*-1}$. In particular, the resulting undetection probability in $i^*$ is no less than that of locations in $\K$, thus ensuring that H will not reallocate some of his items from $i^*$ to locations in $\K$.

We illustrate this final regime pattern with the following example:

\begin{example}
\label{ex: nash eqilibrium regime pattern 3}
Consider the hide-and-seek model of Figure \ref{fig: regime pattern 3} and assume that $\rs=4$ and $\rh=10$.

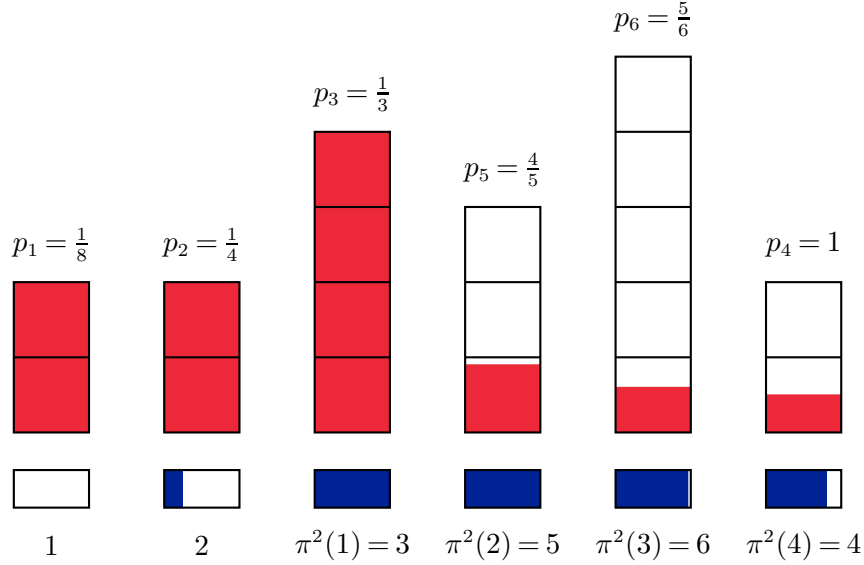
\begin{figure}[htbp]
    \centering
    \begin{tikzpicture}
        \pic at (0,0) {hidingloc={$1$,$p_1=\frac{1}{8}$,2,0,2}};
        \pic at (2,0) {hidingloc={$2$,$p_2=\frac{1}{4}$,2,6/25,2}};
        \pic at (4,0) {hidingloc={$\pi^{2}(1)=3$,$p_3=\frac{1}{3}$,4,1,4}};
        \pic at (6,0) {hidingloc={$\pi^{2}(2)=5$,$p_5 = \frac{4}{5}$,3,1,9/10}};
        \pic at (8,0) {hidingloc={$\pi^{2}(3)=6$,$p_6=\frac{5}{6}$,5,24/25,3/5}};
        \pic at (10,0) {hidingloc={$\pi^{2}(4)=4$,$p_4=1$,2,4/5,1/2}};
    \end{tikzpicture}
    \caption{Illustration of a NE for Regime Pattern 3 when $\rs = 4$ and $\rh=10$. The hiding capacity of each location is represented by the corresponding number of squares. Marginal inspection probabilities (resp. expected numbers of hidden items) in equilibrium are represented by the blue (resp. red) colors.}
    \label{fig: regime pattern 3}
\end{figure}


In this case, $ 389/40 = \tau_1 < \rh \leq \nu_2 = 33/2$ and $i^* = 2$. Furthermore, $k_{2} = 3 > 1 = \ell_2
 $ and $\pi^{2}(1) = 3$, $\pi^{2}(2) = 5$, $\pi^{2}(3) = 6$, $\pi^{2}(4) = 4$. Therefore, in equilibrium, S selects an inspection strategy $\sigma^{\Seek^*} \in \Delta_{\Seek}$ such that $\smarg{1}{\sigma^{\Seek^*}}=0$, $\smarg{2}{\sigma^{\Seek^*}}=6/25$,  $\smarg{3}{\sigma^{\Seek^*}}=1$, $\smarg{4}{\sigma^{\Seek^*}}=4/5$, $\smarg{5}{\sigma^{\Seek^*}}=1$ and $\smarg{6}{\sigma^{\Seek^*}}=24/25$. On the other hand, H chooses a hiding strategy $\sigma^{\Hide^*}\in \Delta_{\Hide}$ such that $\hmarg{1}{\sigma^{\Hide^*}}=\hmarg{2}{\sigma^{\Hide^*}}=2$, $\hmarg{3}{\sigma^{\Hide^*}}=4$, $\hmarg{4}{\sigma^{\Hide^*}}=1/2$, $\hmarg{5}{\sigma^{\Hide^*}}=9/10$ and $\hmarg{6}{\sigma^{\Hide^*}}=3/5$. The value of the game is $U(\sigma^{\Seek^*}, \sigma^{\Hide^*})=6+71/75$. \hfill $\triangle$
\end{example}


%
%
%
%
%
%
\subsection{Parametric Analysis}
\label{sec: parametric analysis}

We continue our analysis by illustrating the impact of the players' resources on the equilibrium regimes of the game $\Gamma$ (and $\widetilde{\Gamma}$). To this end, we consider the hide-and-seek instance defined by the 6 locations, 18 hiding capacities, and detection rates from Figures \ref{fig: regime pattern 1}-\ref{fig: regime pattern 3}. Then, we plot the regions determined by the subintervals $[\tau_{i-1},\nu_i]$ and $[\nu_i,\tau_i]$ for each $i\in \lb 0,n-1\rb$ as a function of $\rs$ and $\rh$.  The resulting plot is shown in Figure~\ref{fig: parametric analysis}. 
\begin{figure}[htbp!]
    \centering
    \begin{tikzpicture}[scale=1]
        \definecolor{color1}{RGB}{7,31,65}  
        \definecolor{color2}{RGB}{0,75,90}
        \definecolor{color3}{RGB}{246,199,111} 
        \definecolor{color4}{RGB}{243,171,113}   
        \definecolor{color5}{RGB}{237,69,52}
        \definecolor{color6}{RGB}{188,52,44}
        \begin{axis}[align =center,
            xmin=1, xmax=18,
            ymin=1, ymax=6,
            every axis plot post/.append style={
            draw=none,
            fill=.!85,
            },
            axis lines=none,
        ]
        \addplot[color=color1, mark=no] table [y=b1, x=tau_m1, col sep=comma] {data/tau_thresholds.csv} \closedcycle;
        \addplot[color=color1!50!color2, mark=no] table [y=b1, x=nu_0, col sep=comma] {data/tau_thresholds.csv} \closedcycle;
        \addplot[color=color2, mark=no] table [y=b1, x=tau_0, col sep=comma] {data/tau_thresholds.csv} \closedcycle;
        \addplot[color=color2!70!color3, mark=no] table [y=b1, x=nu_1, col sep=comma] {data/tau_thresholds.csv} \closedcycle;
        \addplot[color=color3, mark=no] table [y=b1, x=tau_1, col sep=comma] {data/tau_thresholds.csv} \closedcycle;
        \addplot[color=color3!50!color4, mark=no] table [y=b1, x=nu_2, col sep=comma] {data/tau_thresholds.csv} \closedcycle;
        \addplot[color=color4, mark=no] table [y=b1, x=tau_2, col sep=comma] {data/tau_thresholds.csv} \closedcycle;
        \addplot[color=color4!50!color5, mark=no] table [y=b1, x=nu_3, col sep=comma] {data/tau_thresholds.csv} \closedcycle;
        \addplot[color=color5, mark=no] table [y=b1, x=tau_3, col sep=comma] {data/tau_thresholds.csv} \closedcycle;
        \addplot[color=color5!50!color6, mark=no] table [y=b1, x=nu_4, col sep=comma] {data/tau_thresholds.csv} \closedcycle;
        \addplot[color=color6, mark=no] table [y=b1, x=tau_4, col sep=comma] {data/tau_thresholds.csv} \closedcycle;
        \addplot[color=black, mark=no] table [y=b1, x=nu_5, col sep=comma] {data/tau_thresholds.csv} \closedcycle;
        \end{axis}
        \begin{axis}[
            xlabel={$\rh$},
            ylabel={$\rs$},
            xmin=1, xmax=18,
            ymin=1, ymax=6,
            xtick={0,2,..., 18},
            ytick={0,...,6},
            legend pos=outer north east,
            legend image post style={line width=7pt},
            ]
            \addplot[color=color1, mark=no] table [y=b1, x=tau_m1, col sep=comma] {data/tau_thresholds.csv};
            \addplot[color=color1!50!color2, mark=no] table [y=b1, x=nu_0, col sep=comma] {data/tau_thresholds.csv};
            \addplot[color=color2, mark=no] table [y=b1, x=tau_0, col sep=comma] {data/tau_thresholds.csv};
            \addplot[color=color2!70!color3, mark=no] table [y=b1, x=nu_1, col sep=comma] {data/tau_thresholds.csv};
            \addplot[color=color3, mark=no] table [y=b1, x=tau_1, col sep=comma] {data/tau_thresholds.csv};
            \addplot[color=color3!50!color4, mark=no] table [y=b1, x=nu_2, col sep=comma] {data/tau_thresholds.csv};
            \addplot[color=color4, mark=no] table [y=b1, x=tau_2, col sep=comma] {data/tau_thresholds.csv};
            \addplot[color=color4!50!color5, mark=no] table [y=b1, x=nu_3, col sep=comma] {data/tau_thresholds.csv};
            \addplot[color=color5, mark=no] table [y=b1, x=tau_3, col sep=comma] {data/tau_thresholds.csv};
            \addplot[color=color5!50!color6, mark=no] table [y=b1, x=nu_4, col sep=comma] {data/tau_thresholds.csv};
            \addplot[color=color6, mark=no] table [y=b1, x=tau_4, col sep=comma] {data/tau_thresholds.csv};
            \addplot[color=black, mark=no] table [y=b1, x=nu_5, col sep=comma] {data/tau_thresholds.csv};
        \addlegendentry{$\tau_{-1} < \rh \leq \;\nu_0$ $\phantom{=\tau_5}$}
        \addlegendentry{$\nu_{0\phantom{-}} < \rh \leq \;\tau_0$ $\phantom{=\tau_5}$} 
        \addlegendentry{$\tau_{0\phantom{-}} < \rh \leq \;\nu_1$ $\phantom{=\tau_5}$}
        \addlegendentry{$\nu_{1\phantom{-}} < \rh \leq \;\tau_1$ $\phantom{=\tau_5}$}
        \addlegendentry{$\tau_{1\phantom{-}} < \rh \leq \;\nu_2$ $\phantom{=\tau_5}$}
        \addlegendentry{$\nu_{2\phantom{-}} < \rh \leq \;\tau_2$ $\phantom{=\tau_5}$}
        \addlegendentry{$\tau_{2\phantom{-}} < \rh \leq \;\nu_3$ $\phantom{=\tau_5}$}
        \addlegendentry{$\nu_{3\phantom{-}} < \rh \leq \;\tau_3$ $\phantom{=\tau_5}$}
        \addlegendentry{$\tau_{3\phantom{-}} < \rh \leq \;\nu_4$ $\phantom{=\tau_5}$}
        \addlegendentry{$\nu_{4\phantom{-}} < \rh \leq \;\tau_4$ $\phantom{=\tau_5}$}
        \addlegendentry{$\tau_{4\phantom{-}} < \rh \leq \;\nu_5=\tau_5$}
        \end{axis}
    \end{tikzpicture}
    \caption{Illustration of equilibrium regions as a function of the number of resources $\rs$ and $\rh$ for the hide-and-seek instance from Figures \ref{fig: regime pattern 1}-\ref{fig: regime pattern 3}.}
    \label{fig: parametric analysis}
\end{figure}
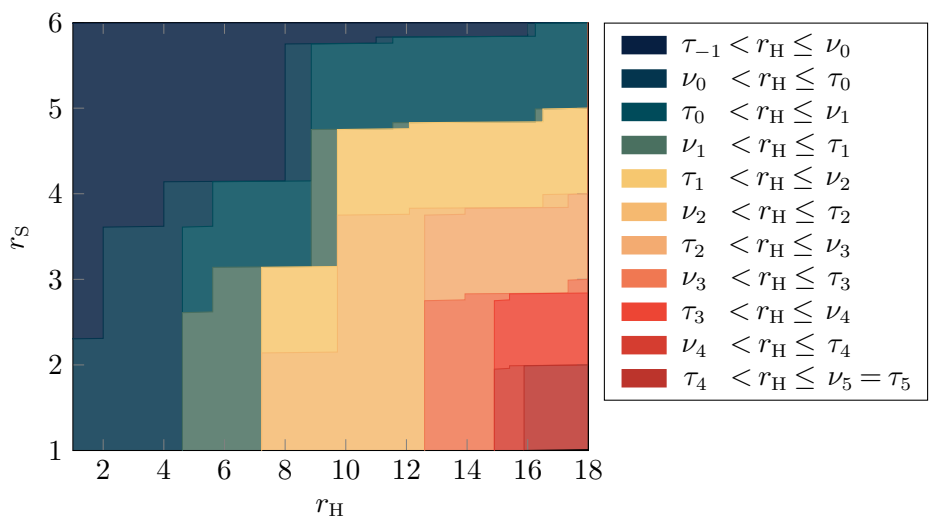

We first observe that given a specific regime, the values of $\rs$ and $\rh$ for which that regime holds form a complex region that may even be disconnected. Indeed, the borders that represent the values of the thresholds $\tau_i$ and $\nu_i$ are defined by step functions of $\rs$ through the parameter $k_i$. Furthermore, for certain values of $\rs$, some thresholds coincide, thus making certain regimes unattainable for any value of $\rh$.

Interestingly, when the number of inspection resources $\rs$ is high, very few equilibrium regimes are possible. In such cases, S has enough resources to not leave any location uninspected, resulting in $i^* = 0$. Conversely, when $\rs$ is low, S's strategy is highly sensitive to the number of items to hide $\rh$. Indeed, if $\rh$ is small, then H can equalize the detection performance across all locations and S should inspect all locations with positive probability. As $\rh$ increases, S must carefully determine which locations to inspect and prefers leaving $i^*$ locations uninspected to find more items in the remaining locations. Furthermore, when $\rs$ is low, nearly all the regimes that are achievable follow Regime Pattern 1 (for different values of $i^*$) since $k_{i^*} \leq \ell_{i^*}$ is most likely to hold for small amounts of inspection resources. Similarly, when $\rh$ is high, a single unit of inspection resources incentives S to focus her inspection on the last location 6 (i.e., $i^*=5$). As $\rs$ increases, S can allocate more resources to $i^*$ according to \eqref{Regime_3_S} (Regime Pattern 3) until $i^*$ is deterministically inspected. At this point, a new regime following Pattern 3 emerges, with a smaller number of uninspected locations $i^*$. The resulting impact of the players' resources on the value of the game is illustrated in Figure \ref{fig: fraction of undetected items}.

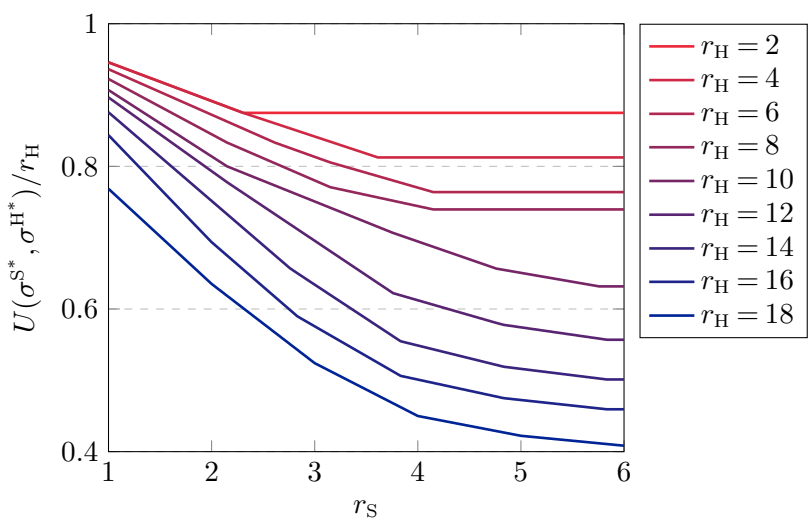
\begin{figure}[htbp!]
    \centering
    \begin{tikzpicture}[scale=1]
        \begin{axis}[align =center,
            xlabel={$\rs$},
            ylabel={$U(\sigma^{\Seek^*},\sigma^{\Hide^*}) / \rh$},
            xmin=1, xmax=6,
            ymin=0.4, ymax=1,
            legend pos=outer north east,
            ymajorgrids=true,
            grid style=dashed,
        ]
        \foreach[evaluate=\i as \colorfrac using (\i-2)*100/(18-2)] \i in {2,4,6,8,10,12,14,16,18}{
            \edef\temp{\noexpand
            \addplot[color=fblue!\colorfrac!fred, line width=1pt] table [x=b1, y=b2_\i, col sep=comma] {data/value_of_the_game_over_rh.csv};
            \noexpand\ifthenelse{\i<10}{\noexpand\addlegendentry{$\noexpand\hspace{-0.5em}r_{\noexpand\textnormal{H}}={\i}$}}{\noexpand\addlegendentry{$r_{\noexpand\textnormal{H}}={\i}$}};
            }
            \temp
        }
        \end{axis}
    \end{tikzpicture}
    \caption{Fraction of undetected items in equilibrium as a function of $\rs$, for different values of $\rh$.}
    \label{fig: fraction of undetected items}
\end{figure}

Specifically, Figure \ref{fig: fraction of undetected items} compares the fractions of undetected items in equilibrium for different amounts of players' resources. This figure first shows the value of focusing inspection resources on locations $\lb i^*+1,n\rb$ when the number of items to hide $\rh$ is high, due to the heterogeneity of the capacities and detection rates. It also shows that the gain in performance by having additional inspection resources varies as a function of the other parameters. This analysis can be utilized in situations when S must determine the appropriate number of inspection resources that would balance cost and performance.

In general, our analytical study provides us with valuable insights on the impact of the problem characteristics (e.g., detection rates, hiding capacities, amounts of resources) on the players' behaviors as well as the locations' importance and criticality in equilibrium, which can be leveraged by security decision makers.

\section{Equilibrium Computation}
\label{sec: equilibrium computation}


In the previous section, Theorem \ref{thm: NE disjoint monitoring sets} solves the continuous game $\widetilde{\Gamma}$, and provides marginal inspection probabilities and expected numbers of hidden items at each location in equilibrium of the game $\Gamma$. However, to solve $\Gamma$, we must determine the coordination of $\rs$ inspection resources and $\rh$ items to hide  in order to satisfy these quantities. Specifically, given the vectors of marginal inspection probabilities $\smargv^{\Seek^*} \in \widetilde{\A}_\Seek$ and expected numbers of hidden items $\hmargv^{\Hide^*} \in \widetilde{\A}_\Hide$ in Theorem~\ref{thm: NE disjoint monitoring sets}, we next seek to efficiently compute mixed strategies $(\sigma^{\Seek^*},\sigma^{\Hide^*})\in \Delta_{\Seek} \times \Delta_{\Hide}$ such that $\smargv(\sigma^{\Seek^*}) = \smargv^{\Seek^*}$ and $\hmargv(\sigma^{\Hide^*}) = \hmargv^{\Hide^*}$. From Proposition \ref{prop:equivalence}, this will ensure that $(\sigma^{\Seek^*},\sigma^{\Hide^*})$ is a NE of $\Gamma$.

Thus, we aim to solve the following generic problem: Given a vector of capacities $\gcap \in \Z_{>0}^n$, a budget of resources $\gres \in \Z_{>0}$, and a vector $\gmargv \in \widetilde{\A}(\gcap,\gres)$, find a solution to the feasibility problem $\{\sigma \in \R^{\A(\gcap,\gres)}_{\geq 0} : \, \sum_{\gpure \in \A(\gcap,\gres)} \sigma_\gpure \gpure = \gmargv, \ \sum_{\gpure \in \A(\gcap,\gres)} \sigma_\gpure = 1\}$, which is guaranteed to exist by Lemma \ref{lemma:convex_hull}.  Although this problem involves an exponential number of variables, Carathéodory's theorem guarantees that a solution exists with a support of size at most $n+1$. In fact, this problem can be solved in polynomial time using the ellipsoid method. However, this method is known to be practically inefficient \citep{behnezhad2017faster}. Thus, we derive another algorithm to efficiently solve the feasibility problem.

Specifically, we extend the algorithm proposed by \cite{dziubinski2018hide} that computes in time $O(n^2)$ a probability distribution with linear support over the set $\{\gpure\in \{0,1\}^n:\; \sum_{i=1}^{n} \gpure_i = \gres \}$ consistent with prescribed unidimensional marginal probabilities. However, we cannot apply their algorithm to construct the equilibrium inspection and hiding strategies in our game $\Gamma$, as our model involves locations with possibly heterogeneous capacities, and probabilities may be assigned to resource allocations $\gpure$ that do not utilize the whole budget of resources $\gres$ (e.g., in Theorem \ref{thm: NE disjoint monitoring sets} -- Regime Pattern 2). 


Thus, given a vector of capacities $\gcap \in \Z_{>0}^n$, a budget of resources $\gres \in \Z_{>0}$, and a vector $\gmargv \in \widetilde{\A}(\gcap,\gres)$, our algorithm returns a probability distribution $\sigma \in \Delta(\gcap,\gres)$ that satisfies $\sum_{\gpure \in \A(\gcap,\gres)} \sigma_\gpure \gpure_i =\gmargv_i$ for every $i \in \lb 1,n\rb$. The general idea of each iteration of the algorithm is to express a given vector $\gmargv$ in $\widetilde{\A}(\gcap,\gres)$ as a convex combination of a vector in $\A(\gcap,\gres)$ (i.e., with only integer components) and a vector in $\widetilde{\A}(\gcap,\gres)$ with one more integer component than $\gmargv$ has. 
To this end, the algorithm allocates $\floor*{\gmargv_i}$ resources at each location $i \in \lb 1,n\rb$, and then determines where to allocate some of the remaining $\bar{\gres}\coloneqq \gres - \sum_{i=1}^n \floor*{\gmargv_i}$ resources given the fractional part of each component of $\gmargv$, defined as $\bar{\gmargv}_i \coloneqq \gmargv_i - \floor*{\gmargv_i}$.
%
%
%
The algorithm determines the maximum number of locations $q$ to allocate the remaining resources  $\bar{\gres}$. Naturally, $q$ is upper bounded by $\bar{\gres}$ and the number of positive components of $\bar{\gmargv}$. Then, the algorithm carefully assigns positive probability to a resource allocation that first assigns $\floor*{\gmargv_i}$ resources at each location, and then assigns one additional resource at each of the $q$ locations with highest fractional parts $\bar{\gmargv}_i$.
%
%
%
The algorithm then updates the vector $\bar{\gmargv}$ so that the new vector contains at least one more integer component than $\bar{\gmargv}$ has. The algorithm iterates until $\bar{\gmargv} \in \{0,1\}^n$, at which point the algorithm assigns the remaining probability to $\floor*{\gmargv}+ \bar{\gmargv} \in \A(\gcap,\gres)$. We refer the reader to Algorithm~\ref{alg: construction of prob distribution from marginals} for the detailed pseudocode.

\begin{algorithm}[htbp!]
\label{alg: construction of prob distribution from marginals}
\DontPrintSemicolon
\OneAndAHalfSpacedXI
    \SetKwInOut{Input}{Input}
    \SetKwInOut{Output}{Output}

\Input{A vector of capacities $\gcap \in \Z^n_{>0}$, a budget of resources $\gres \in  \mathbb{Z}_{> 0}$, and a vector $\gmargv \in \widetilde{\A}(\gcap,\gres)$.}
%
%

    \Output{A probability distribution $\sigma \in \Delta(\gcap,\gres)$ satisfying $\mathbb{E}_{\gpure \sim\sigma}[\gpure] = \gmargv$.}

    $\sigma \gets \boldsymbol{0}_{\A(\gcap,\gres)}$,  \quad $\bar{\gmargv}^1 \gets \gmargv - \floor*{\gmargv}$, \quad $\bar{\gres} \gets \gres - \sum_{i=1}^{n}\floor*{\gmargv_i}$\;
     $\gamma^1 \gets 1$, \quad $k \gets 1$ \;
     \While{$\bar{\gmargv}^k \notin \{0,1\}^n$\label{alg:while_start}} {
      $\theta^k \gets$ Permutation of $\lb 1,n\rb$ such that $\bar{\gmargv}^k_{\theta^k(1)} \geq \cdots \geq \bar{\gmargv}^k_{\theta^k(n)}$\;
        $q^k \gets \min\left\{ \bar{\gres}, \left| \left\{ i \in \lb 1,n\rb:\, \bar{\gmargv}^k_i > 0   \right\} \right|  \right\}$ \;
        \If{$q^k < n$}{
        $\delta^k \gets \min\{\bar{\gmargv}^k_{\theta^k(q^k)}, 1-\bar{\gmargv}^k_{\theta^k(q^k+1)} \}$\;}
        \Else{$\delta^k \gets \bar{\gmargv}^k_{\theta^k(q^k)}$}
        $e^k \gets \boldsymbol{0}_{n}$ \;
        \ForEach{$j \in \{1,\ldots, q^k\}$} {
            $e_{\theta^k(j)}^k \gets 1$\;
        }
        $\sigma_{\floor*{\gmargv} + e^k} \gets \sigma_{\floor*{\gmargv}  + e^k} + \gamma^k\delta^k$ \;
        
         $\bar{\gmargv}^{k+1} \gets \frac{1}{1-\delta^k}(\bar{\gmargv}^k - \delta^ke^k)$\;
         
%

	$\gamma^{k+1} \gets \gamma^k(1-\delta^k)$\;
              $k \gets k+1$  \;\label{alg:while_end}
     }
     
     
    $\sigma_{\floor*{\gmargv}+\bar{\gmargv}^k} \gets \sigma_{\floor*{\gmargv}+\bar{\gmargv}^k} + \gamma^k$ \;
    \Return{$\sigma$}\;

    \caption{Resource Coordination.}
\end{algorithm}


We note that at each iteration $k$ of the while loop (\ref{alg:while_start}-\ref{alg:while_end}), $\bar{\gmargv}^k = \delta^k e^k + (1-\delta^k) \bar{\gmargv}^{k+1}$. In fact, the algorithm selects $e^k$ and $\delta^k$ so that $\bar{\gmargv}^k$ can be expressed as a convex combination of $e^k \in \{0,1\}^n$ and a new vector $\bar{\gmargv}^{k+1}$ that has at least one more integer component than $\bar{\gmargv}^k$ does. This guarantees that the algorithm terminates. At termination, the algorithm expresses $\gmargv$ as a convex combination of resource allocations in $\A(\gcap,\gres)$. This can be translated into the following theorem:
%
%
%
%
%
%
\begin{restatable}{theorem}{TheoremAlgorithm}
\label{thm:algorithm}
Given a vector of capacities  $\gcap \in \Z^n_{>0}$, a budget of resources $\gres \in \Z_{>0}$, and a vector $\gmargv \in \widetilde{\A}(\gcap,\gres)$, Algorithm \ref{alg: construction of prob distribution from marginals} returns a probability distribution $\sigma \in \Delta(\gcap,\gres)$ satisfying $\mathbb{E}_{\gpure\sim\sigma}[\gpure] = \gmargv$. Furthermore, $\sigma$ has a support of size at most $n+1$ and is computed in time $O(n^2)$.
\end{restatable}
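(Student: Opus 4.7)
The central invariant is the convex decomposition $\bar{\gmargv}^k = \delta^k e^k + (1-\delta^k)\bar{\gmargv}^{k+1}$ performed at each iteration, together with the bookkeeping $\gamma^{k+1} = \gamma^k(1-\delta^k)$ starting from $\gamma^1 = 1$. The plan is to establish, in sequence, termination of the while loop, feasibility of every allocation assigned positive probability, correctness of $\sigma$ as a probability distribution with expectation $\gmargv$, and finally the support-size and runtime bounds.

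For termination, I would show by case analysis on $\delta^k$ that each iteration strictly increases the number of components of $\bar{\gmargv}^k$ belonging to $\{0,1\}$: if $\delta^k = \bar{\gmargv}^k_{\theta^k(q^k)}$, then $\bar{\gmargv}^{k+1}_{\theta^k(q^k)} = 0$; if $\delta^k = 1 - \bar{\gmargv}^k_{\theta^k(q^k+1)}$, then $\bar{\gmargv}^{k+1}_{\theta^k(q^k+1)} = 1$. Because the zero components of $\bar{\gmargv}^k$ lie strictly below the top $q^k$ in the sort (since $q^k$ counts only positive entries) and the one components lie strictly above, already-integer coordinates remain fixed by the affine update. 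Hence the loop runs at most $n$ times, and $\sigma$ has support of size at most $n+1$.

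For feasibility, I would induct on $k$ to maintain two invariants: $\sum_i \bar{\gmargv}^k_i \leq \bar{\gres}$, from which $q^k \leq \bar{\gres}$ and the budget constraint $\sum_i(\floor*{\gmargv_i}+e^k_i) \leq \gres$ follow directly; and $\bar{\gmargv}^k_i > 0 \Rightarrow \gmargv_i > \floor*{\gmargv_i}$, which combined with $\gmargv_i \leq \cap_i$ and $\cap_i \in \Z$ yields $\floor*{\gmargv_i}+1 \leq \cap_i$, giving the per-location capacity bound. The distributional correctness then follows from telescoping identities obtained directly from the update rule: summing $\gamma^k\delta^k = \gamma^k - \gamma^{k+1}$ yields $\sum_{j=1}^k \gamma^j\delta^j + \gamma^{k+1} = \gamma^1 = 1$, so $\sigma$ is a probability distribution; and the analogous invariant $\sum_{j=1}^k \gamma^j\delta^j(\floor*{\gmargv}+e^j) + \gamma^{k+1}(\floor*{\gmargv}+\bar{\gmargv}^{k+1}) = \floor*{\gmargv}+\bar{\gmargv}^1 = \gmargv$, verified inductively via the convex decomposition, gives $\mathbb{E}_{\gpure\sim\sigma}[\gpure] = \gmargv$ at termination.

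For the runtime, I would observe that after an initial $O(n\log n)$ sort, the update $\bar{\gmargv}^k \mapsto \bar{\gmargv}^{k+1}$ rescales all non-frozen coordinates by the common factor $1/(1-\delta^k)$, thus preserving their relative order, and freezes at most two coordinates to $0$ or $1$; hence a sorted-order data structure can be maintained in $O(n)$ per iteration, giving $O(n^2)$ overall. The main obstacle is the feasibility step, since $\bar{\gmargv}^k_i$ can become $1$ at intermediate iterations, thereby forcing $\floor*{\gmargv_i}+e^k_i$ to reach $\floor*{\gmargv_i}+1$; this necessitates the careful linkage back to the original vector $\gmargv$ via the invariant above, which ensures that the allocation never exceeds $\cap_i$ despite the iterative rescaling that destroys direct information about $\gmargv$.
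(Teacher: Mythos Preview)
Your approach mirrors the paper's: termination via the strictly increasing count of $\{0,1\}$-components, feasibility via the invariants $\bar{\gmargv}^k \in [0,1]^n$ and $\sum_i \bar{\gmargv}^k_i \leq \bar{\gres}$ together with the persistence of zero coordinates, and correctness via the telescoping identities $\gamma^k - \gamma^{k+1} = \gamma^k\delta^k$ and $\gamma^k\bar{\gmargv}^k - \gamma^{k+1}\bar{\gmargv}^{k+1} = \gamma^k\delta^k e^k$.

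Two small but genuine gaps remain. First, you never verify $\delta^k < 1$, without which the update $\bar{\gmargv}^{k+1} = (\bar{\gmargv}^k - \delta^k e^k)/(1-\delta^k)$ is undefined; the paper closes this by a short contradiction showing that $\delta^k = 1$ would force $\bar{\gmargv}^k \in \{0,1\}^n$, contrary to the loop condition. Second, your runtime argument contains an error: the update is \emph{not} a common affine map on all non-frozen coordinates. The top $q^k$ coordinates undergo $x \mapsto (x-\delta^k)/(1-\delta^k)$ while the bottom $n-q^k$ undergo $x \mapsto x/(1-\delta^k)$, so the global relative order can change (a bottom coordinate near $1-\delta^k$ can overtake a top coordinate near $\delta^k$). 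The correct observation, which the paper uses, is that each of the two blocks $\{\theta^k(1),\dots,\theta^k(q^k)\}$ and $\{\theta^k(q^k+1),\dots,\theta^k(n)\}$ remains \emph{internally} sorted under its own affine map, so $\theta^{k+1}$ is obtained by merging two already-sorted sublists in $O(n)$ time.
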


Thus, Algorithm~\ref{alg: construction of prob distribution from marginals} matches the support size guaranteed by Carathéodory's theorem on the polytope $\widetilde{\A}(\gcap,\gres)$. In fact, the algorithm performs at most $n$ iterations. Furthermore, by reutilizing the sorting of $\bar{\gmargv}^k$ of the previous iteration, we can implement each iteration (except for the first one) in time $O(n)$, which guarantees an overall running time of $O(n^2)$.

We can now summarize the overall solution approach for computing a NE of the game $\Gamma$. First, marginal inspection probabilities and expected numbers of hidden items in each location are computed according to Theorem~\ref{thm: NE disjoint monitoring sets}: Sorting the detection potentials $(p_ic_i)_{i\in\lb 1,n\rb}$ and the detection rates $(p_i)_{i\in\lb 1,n\rb}$ in non-decreasing order requires $O(n \log n)$ steps. Then, for every $i \in \lb0,n-1\rb$, the mapping $\pi^i$, parameters $k_i$ and $\ell_i$, and thresholds $\tau_i$  and $\nu_i$ can be computed in time $O(n)$. Identifying the subinterval $[\tau_{i^*-1}, \nu_{i^*}]$ or $[\nu_{i^*},\tau_{i^*}]$ in which $\rh$ belongs can be performed in $O(\log n)$ steps, and evaluating the expressions from Theorem \ref{thm: NE disjoint monitoring sets} requires $O(n)$ more steps. Thus, computing equilibrium marginal inspection probabilities and expected numbers of hidden items can be implemented in time $O(n^2)$. Finally, Algorithm~\ref{alg: construction of prob distribution from marginals} computes the mixed strategies consistent with the unidimensional quantities in time $O(n^2)$. Hence, we obtain the final result:

\begin{corollary}
\label{final_corollary}
The game $\Gamma$ can be solved in time $O(n^2)$ with equilibrium strategies of support size at most $n+1$ each.
\end{corollary}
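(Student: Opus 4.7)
The plan is to combine the three ingredients already established: the continuous equivalence of Proposition \ref{prop:equivalence}, the closed-form equilibrium characterization of Theorem \ref{thm: NE disjoint monitoring sets}, and the resource coordination procedure of Theorem \ref{thm:algorithm}. Concretely, I would compute a NE of $\Gamma$ in two stages. First, I determine the equilibrium marginals $(\smargv^{\Seek^*},\hmargv^{\Hide^*})$ of the continuous game $\widetilde{\Gamma}$ via Theorem \ref{thm: NE disjoint monitoring sets}. Second, I lift each marginal to a mixed strategy of $\Gamma$ via Algorithm \ref{alg: construction of prob distribution from marginals} applied separately to $(\boldsymbol{1}_n,\rs,\smargv^{\Seek^*})$ and to $(c,\rh,\hmargv^{\Hide^*})$. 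Proposition \ref{prop:equivalence} then certifies that the resulting pair $(\sigma^{\Seek^*},\sigma^{\Hide^*})$ is a NE of $\Gamma$, and Theorem \ref{thm:algorithm} delivers the support size bound of $n+1$ for each strategy.

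For the first stage, I would precompute the two orderings of the locations by $p_ic_i$ and by $p_i$, together with the prefix sums $\sum_{j=1}^{i} c_j$, all in $O(n\log n)$ time. Then, for each $i \in \lb 0,n-1\rb$, I need the permutation $\pi^i$, the tail sums $S_k^i$, and the parameters $k_i$ and $\ell_i$. The key to keeping the total work at $O(n^2)$ is to maintain the ordering $\pi^i$ incrementally: $\pi^{i+1}$ is obtained from $\pi^i$ by deleting a single element, and the associated partial sums $S_k^i$ can be updated in $O(n)$ per step. Because the quantities $k + p_{\pi^i(k)}S_{k+1}^i$ and $\sum_{j=1}^i c_j + \sum_{j=1}^\ell c_{\pi^i(j)} + p_ic_i S_{\ell+1}^i$ are monotone in $k$ and $\ell$ respectively, $k_i$ and $\ell_i$ are found by a single linear scan, and $\tau_i,\nu_i$ follow immediately. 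This gives $O(n)$ work per $i$ and $O(n^2)$ total. Once all thresholds are tabulated, a binary search locates $i^*$ with $\tau_{i^*-1} < \rh \leq \tau_{i^*}$, and evaluating the regime-specific formulas \eqref{Regime_1_S}--\eqref{Regime_3_H} costs $O(n)$ more.

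For the second stage, I invoke Algorithm \ref{alg: construction of prob distribution from marginals} twice. By Theorem \ref{thm:algorithm}, each call runs in $O(n^2)$ and produces a distribution with support size at most $n+1$ whose marginal expectations equal the input vector. Combining with Proposition \ref{prop:equivalence}, the pair $(\sigma^{\Seek^*},\sigma^{\Hide^*})$ is a NE of $\Gamma$ with the desired support size and total running time $O(n^2) + O(n^2) + O(n^2) = O(n^2)$.

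The only nontrivial aspect is the bookkeeping in the first stage: a naive implementation that re-sorts $\pi^i$ or recomputes the tail sums $S_k^i$ from scratch for every $i$ would yield an $O(n^3)$ bound rather than $O(n^2)$. The main point is to exploit the nested structure of the index sets $\lb i+1,n\rb$ so that $\pi^{i+1}$, the prefix/tail sums, and the monotone quantities defining $k_i$ and $\ell_i$ can all be updated in $O(n)$ per removed element. With this care, the corollary follows as a clean summary of the three preceding results.
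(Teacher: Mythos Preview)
Your proposal is correct and follows essentially the same approach as the paper: compute the equilibrium marginals in $O(n^2)$ by spending $O(n)$ per index $i$ to obtain $\pi^i$, $k_i$, $\ell_i$, $\tau_i$, $\nu_i$, locate $i^*$ by binary search, evaluate the closed-form expressions, and then lift the marginals to mixed strategies via two calls to Algorithm~\ref{alg: construction of prob distribution from marginals}. Your explicit observation that $\pi^{i+1}$ is obtained from $\pi^i$ by deleting one element (and that the associated sums can be updated in $O(n)$) is precisely the mechanism underlying the paper's claim that each $\pi^i$ and its parameters can be computed in $O(n)$, so the arguments coincide.
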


Thus, we obtain an efficient solution approach for solving the large-scale hide-and-seek game $\Gamma$ with multiple resources and heterogeneous locations. Finally, our approach provides solutions with small supports that can be easily implemented by security decision makers.
%
%
%
%
%

\section{Conclusion}
\label{sec: conclusion}

In this work, we investigated a hide-and-seek game in which a seeker inspects locations to find items hidden by a hider. We extended previous models in the literature by considering the coordination of multiple resources for both players in locations with heterogeneous hiding capacities and probabilities of detecting hidden items where the search takes place. The objective of the seeker (resp. hider) is to minimize (resp. maximize) the expected number of undetected items. To compute the mixed strategies Nash equilibria of this large-scale zero-sum game, we proposed a solution approach that first derives analytical equilibrium properties and then efficiently coordinates the players' resources. In particular, we showed that the marginal inspection probabilities and expected numbers of hidden items in each location in equilibrium form a Nash equilibrium of a continuous zero-sum game. By carefully selecting a set of parameters and thresholds, we analytically solved this continuous game. Our analysis highlighted a complex interplay between the game parameters and permitted us to evaluate their impact on the players' behaviors in equilibrium and the criticality of each location. Then, we derived a quadratic time algorithm that coordinates the players' resources to satisfy the characterized equilibrium marginal distributions and computes a Nash equilibrium of the hide-and-seek game with linear support. Our insights and solution approach can be used to inform security agencies that are interested in scheduling multiple units to detect and seize illegal commodities hidden by a criminal organization.

This work can be extended in multiple directions by considering additional features that would widen the applicability of our results. One extension is to consider different commodities hidden by the hider with heterogeneous values, as well as different inspection resources with different detection characteristics. Another extension is to incorporate heterogeneous valuations of the locations to extend the original setting by \cite{von1953certain}. Finally, an interesting research direction is to consider a repeated version of the hide-and-seek game with players learning the initially unknown characteristics of their opponent while they interact.

\ACKNOWLEDGMENT{%
This work was supported by the Georgia Tech Stewart fellowship and new faculty start up grant. We are grateful to Mohit Singh for his valuable feedback.
}


%
%
%


\bibliographystyle{style/informs2014} 
\bibliography{references} 



\ECSwitch

\ECHead{Proofs of Statements}




\section{Proofs of Section \ref{sec: analytical characterization of equilibrium strategies}}


 
 \begin{proof}{\emph{Proof of Lemma \ref{lemma:convex_hull}.}} Let $\gcap \in \Z^n_{>0}$ be a vector of capacities and $\gres \in \Z_{>0}$ be a budget of resources. First, we observe that $\widetilde{\A}(\gcap,\gres)$ is the convex hull of $\A(\gcap,\gres)$. Indeed, since the row vector $\boldsymbol{1}^\top_n$ is totally unimodular, then \eqref{generic_allocation} is an ideal formulation of $\A(\gcap,\gres)$ and the polytope $\widetilde{\A}(\gcap,\gres)$ has integral extreme points, which we denote $\gpure^1,\dots,\gpure^I$. Consider a vector $\gmargv^\prime \in \R^n$. If $\gmargv^\prime \in \widetilde{\A}(\gcap,\gres)$, then there exist $\lambda_1,\dots,\lambda_I \in [0,1]$ such that $\gmargv^\prime = \sum_{i=1}^I \lambda_i \gpure^i$ and $\sum_{i=1}^I \lambda_i = 1$. Then, $\sigma \in [0,1]^{\A(\gcap,\gres)}$ defined by $\sigma_{z^i} = \lambda_{i}$ for $i \in \lb 1,I\rb$ and $\sigma_z = 0$ if $z \notin \{z^1,\dots,z^I\}$ is a probability distribution in $\Delta(\gcap,\gres)$ and satisfies $\gmarg{i}{\sigma} = \gmargv^\prime_i$ for all $i \in \lb 1,n\rb$. Conversely, if there exists $\sigma \in \Delta(\gcap,\gres)$ that satisfies $\gmargv^\prime_i = \gmarg{i}{\sigma} = \sum_{\gpure \in \A(\gcap,\gres)} \gpure_i \sigma_\gpure$ for all $i \in \lb 1,n\rb$, then $\gmargv^\prime$ is a convex combination of elements on $\A(\gcap,\gres)$, and belongs to its convex hull, $\widetilde{\A}(\gcap,\gres)$.
 \hfill\Halmos
 \end{proof}


 \begin{proof}{\emph{Proof of Proposition \ref{prop:equivalence}.}}
 \begin{itemize}
 \item[--]
 Consider a strategy profile $(\sigma^{\Seek},\sigma^{\Hide}) \in \Delta_\Seek \times \Delta_\Hide$. Then,
\begin{align}
    U(\sigma^{\Seek},\sigma^{\Hide}) \overset{\eqref{payoff}}{=} \sum_{i=1}^n (1- p_i \mathbb{E}_{\spure \sim \sigma^\Seek}[\spure_i])\mathbb{E}_{\hpure \sim\sigma^{\Hide}}[\hpure_i] \overset{\eqref{def: rho_sigma_1 and mu_sigma_2},\eqref{continuous_payoff}}{=} \tilde{u}(\smargv(\sigma^\Seek),\hmargv(\sigma^\Hide)). \label{equivalence_payoffs}
 \end{align}
 \item[--] Let $(\smargv^{\Seek^*},\hmargv^{\Hide^*}) \in \widetilde{\A}_\Seek \times   \widetilde{\A}_\Hide$ be a NE of $\widetilde{\Gamma}$. From Lemma \ref{lemma:convex_hull}, let $(\sigma^{\Seek^*},\sigma^{\Hide^*}) \in \Delta_\Seek \times \Delta_\Hide$ that satisfies $\smargv(\sigma^{\Seek^*}) = \smargv^{\Seek^*}$ and $\hmargv(\sigma^{\Hide^*}) = \hmargv^{\Hide^*}$. Using equilibrium conditions in $\widetilde{\Gamma}$, we obtain:
 \begin{align*}
 &\forall \sigma^\Seek \in \Delta_\Seek, \ U(\sigma^{\Seek^*},\sigma^{\Hide^*}) \overset{\eqref{equivalence_payoffs}}{=} \tilde{u}(\smargv^{\Seek^*},\hmargv^{\Hide^*}) \leq \tilde{u}(\smargv(\sigma^\Seek),\hmargv^{\Hide^*}) \overset{\eqref{equivalence_payoffs}}{=} U(\sigma^{\Seek},\sigma^{\Hide^*}),\\
 &\forall \sigma^\Hide \in \Delta_\Hide, \ U(\sigma^{\Seek^*},\sigma^{\Hide^*}) \overset{\eqref{equivalence_payoffs}}{=} \tilde{u}(\smargv^{\Seek^*},\hmargv^{\Hide^*}) \geq \tilde{u}(\smargv^{\Seek^*},\hmargv(\sigma^\Hide)) \overset{\eqref{equivalence_payoffs}}{=} U(\sigma^{\Seek^*},\sigma^{\Hide}).
 \end{align*}
 Thus, $(\sigma^{\Seek^*},\sigma^{\Hide^*})$ is a NE of $\Gamma$.
 
 Conversely, let $(\sigma^{\Seek^*},\sigma^{\Hide^*}) \in \Delta_\Seek \times \Delta_\Hide$ be a NE of $\Gamma$, then we must show that $(\smargv(\sigma^{\Seek^*}), \hmargv(\sigma^{\Hide^*}))$ is a NE of $\widetilde{\Gamma}$. First, from Lemma \ref{lemma:convex_hull}, we know that $(\smargv(\sigma^{\Seek^*}), \hmargv(\sigma^{\Hide^*})) \in \widetilde{\A}_\Seek \times \widetilde{\A}_{\Hide}$. Then, given $(\smargv^\Seek,\hmargv^\Hide) \in \widetilde{\A}_\Seek \times \widetilde{\A}_{\Hide}$, let $(\sigma^\Seek,\sigma^\Hide) \in \Delta_\Seek \times \Delta_\Hide$ satisfying $\smargv(\sigma^\Seek) = \smargv^\Seek$ and $\hmargv(\sigma^\Hide) = \hmargv^\Hide$ (Lemma \ref{lemma:convex_hull}). Using equilibrium conditions in $\Gamma$, we obtain:
 \begin{align*}
 &\tilde{u}(\smargv(\sigma^{\Seek^*}), \hmargv(\sigma^{\Hide^*})) \overset{\eqref{equivalence_payoffs}}{=} U(\sigma^{\Seek^*},\sigma^{\Hide^*}) \leq U(\sigma^\Seek,\sigma^{\Hide^*}) \overset{\eqref{equivalence_payoffs}}{=}  \tilde{u}(\smargv^\Seek, \hmargv(\sigma^{\Hide^*})),\\ 
 \text{and } &\tilde{u}(\smargv(\sigma^{\Seek^*}), \hmargv(\sigma^{\Hide^*})) \overset{\eqref{equivalence_payoffs}}{=} U(\sigma^{\Seek^*},\sigma^{\Hide^*}) \geq U(\sigma^{\Seek^*},\sigma^{\Hide}) \overset{\eqref{equivalence_payoffs}}{=}  \tilde{u}(\smargv(\sigma^{\Seek^*}),\hmargv^\Hide).
 \end{align*}
 Thus, $(\smargv(\sigma^{\Seek^*}), \hmargv(\sigma^{\Hide^*}))$ is a NE of $\widetilde{\Gamma}$.

 \item[--] Given a NE $(\sigma^{\Seek^*},\sigma^{\Hide^*}) \in \Delta_\Seek \times \Delta_\Hide$ of $\Gamma$, we know that $(\smargv(\sigma^{\Seek^*}), \hmargv(\sigma^{\Hide^*}))$ is a NE of $\widetilde{\Gamma}$ and $U(\sigma^{\Seek^*},\sigma^{\Hide^*}) \overset{\eqref{equivalence_payoffs}}{=}  \tilde{u}(\smargv(\sigma^{\Seek^*}),\hmargv(\sigma^{\Hide^*}))$. Thus, the values of the games $\Gamma$ and $\widetilde{\Gamma}$ are identical.  \hfill\Halmos
 \end{itemize}

 \end{proof}
 
 

We note that when some detection rates are identical, permutations $\pi^i$ ordering $\lb i+1,n\rb$ by their detection rates may not be unique. To simplify our proofs, we assume without loss of generality that $\pi^i$ maintains the order between identical detection rates, i.e., $\pi^i(j) < \pi^i(k)$ when $1 \leq j < k \leq n-i$ and $p_{\pi^i(j)} = p_{\pi^i(k)}$, thus rendering $\pi^i$ unique for every $i \in \lb 0,n-1\rb$.

%

Before proving Lemma \ref{lem: threshold inequalities} we need the following auxiliary lemmas.

\begin{lemma}
\label{lem: relation pi_(i-1) and pi_(i)}
Let $i\in \lb 1,n-1\rb$ and $j^*\in \lb 1,n-i+1\rb$ be such that $\pi^{i-1}(j^{*})=i$. Then:
\begin{align}
\OneAndAHalfSpacedXI
\label{eq: relation pi^i and pi^(i-1)}
    \pi^{i-1}(j) = 
        \begin{cases}
            \pi^{i}(j) & \text{if } j \in \lb1, j^{*} - 1\rb,\\
            i & \text{if } j=j^{*}, \\
            \pi^{i}(j-1) & \text{if } j \in \lb j^{*} + 1, n-i+1 \rb.
        \end{cases}
\end{align}

Moreover, for every $k\in \lb 0, n-i+1\rb$,
\begin{align}
\OneAndAHalfSpacedXI
\label{eq: relation sum cardinalities pi^i and pi^(i-1)}
    \sum_{j=1}^{k}c_{\pi^{i-1}(j)} = \begin{dcases}
        \sum_{j=1}^{k}c_{\pi^{i}(j)}  & \text{if } k \in \lb 0, j^{*}-1 \rb,\\
        \sum_{j=1}^{k-1}c_{\pi^{i}(j)} + c_i  & \text{if } k \in \lb j^{*}, n-i+1 \rb,
    \end{dcases}
\end{align}

and
\begin{align}
\OneAndAHalfSpacedXI
\label{eq: relation sum probs pi^i and pi^(i-1)}
    S^{i-1}_{k+1} = \begin{cases}
        S^{i}_{k+1} + \frac{1}{p_i} & \text{if } k \in \lb 0, j^{*} -1 \rb,\\
        S^{i}_{k}  & \text{if } k \in \lb j^{*}, n-i+1 \rb.
    \end{cases}
\end{align}

\end{lemma}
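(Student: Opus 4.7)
{\emph{Proof plan for Lemma \ref{lem: relation pi_(i-1) and pi_(i)}.}}

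The plan is to first establish the combinatorial identity \eqref{eq: relation pi^i and pi^(i-1)} on the permutations, and then derive the sum identities \eqref{eq: relation sum cardinalities pi^i and pi^(i-1)} and \eqref{eq: relation sum probs pi^i and pi^(i-1)} as straightforward bookkeeping consequences of \eqref{eq: relation pi^i and pi^(i-1)}.

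For \eqref{eq: relation pi^i and pi^(i-1)}, I would argue as follows. Define the sequence $\phi:\lb 1,n-i\rb \to \lb i+1,n\rb$ obtained from $\pi^{i-1}$ by skipping position $j^*$ (i.e., $\phi(j) = \pi^{i-1}(j)$ for $j < j^*$ and $\phi(j) = \pi^{i-1}(j+1)$ for $j \geq j^*$). Since $\pi^{i-1}$ is a bijection onto $\lb i, n\rb$ with $\pi^{i-1}(j^*)=i$, the map $\phi$ is a bijection onto $\lb i+1, n\rb$. Because $\pi^{i-1}$ orders its image in nondecreasing detection rates, skipping a single position preserves the nondecreasing order, so $p_{\phi(1)} \leq \cdots \leq p_{\phi(n-i)}$. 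Invoking the uniqueness of the ordering permutation (guaranteed by the tie-breaking convention stated just before the lemma), one has $\phi = \pi^i$, which is precisely \eqref{eq: relation pi^i and pi^(i-1)}.

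For \eqref{eq: relation sum cardinalities pi^i and pi^(i-1)} and \eqref{eq: relation sum probs pi^i and pi^(i-1)}, I would substitute \eqref{eq: relation pi^i and pi^(i-1)} into each sum and split at the index $j^*$. For $k \in \lb 0, j^*-1 \rb$, the terms $j = 1,\ldots,k$ all satisfy $j < j^*$, so $\pi^{i-1}(j) = \pi^{i}(j)$, giving the first cases directly. For $k \in \lb j^*, n-i+1\rb$, I would split the sum into $j \in \lb 1, j^*-1\rb$, $j = j^*$, and $j \in \lb j^*+1, k\rb$; the first part matches $\sum_{j=1}^{j^*-1} c_{\pi^i(j)}$, the middle contributes $c_i$, and the last re-indexes to $\sum_{j=j^*}^{k-1} c_{\pi^i(j)}$, which together yield the second case of \eqref{eq: relation sum cardinalities pi^i and pi^(i-1)}. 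The same re-indexing applied to $1/p_{\pi^{i-1}(j)}$ over $j \in \lb k+1, n-i+1\rb$ yields \eqref{eq: relation sum probs pi^i and pi^(i-1)}, noting that in the range $k \in \lb 0, j^*-1 \rb$ the index $j^*$ lies within the summation range and contributes the extra $1/p_i$ term.

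I do not anticipate any substantive obstacle: the lemma is a structural identity about ordered sequences, and the only point that requires care is the off-by-one handling at the boundary cases $k = j^*-1$, $k = j^*$, and the endpoint $k = n-i+1$, as well as the tie-breaking convention. \Halmos
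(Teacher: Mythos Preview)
Your proposal is correct and follows essentially the same approach as the paper: remove the $j^*$-th entry from the sorted list $\pi^{i-1}$, observe that what remains is a bijection onto $\lb i+1,n\rb$ still sorted by detection rates (and, under the tie-breaking convention, uniquely so), hence equals $\pi^i$; then derive the two sum identities by splitting at $j^*$ and re-indexing. The paper's write-up is slightly terser, but the logic is identical.
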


\begin{proof}{\emph{Proof of Lemma \ref{lem: relation pi_(i-1) and pi_(i)}.}} Let $i\in \lb 1,n-1\rb$ and $j^*\in \lb 1,n-i+1\rb$ be such that $\pi^{i-1}(j^{*})=i$. The permutation $\pi^{i-1}$ sorts locations $\lb i,n\rb$ in order of non-decreasing detection rates. After removing $p_{i} = p_{\pi^{i-1}(j^*)}$ from the chain of inequalities, we obtain $p_{\pi^{i-1}(1)} \leq \cdots \leq p_{\pi^{i-1}(j^*-1)} \leq p_{\pi^{i-1}(j^*+1)} \leq \cdots \leq p_{\pi^{i-1}(n-i+1)}$, which sorts $\lb i+1,n\rb$ by detection rates, thus providing \eqref{eq: relation pi^i and pi^(i-1)}. 

As a consequence, for every $k \in \lb 0,j^*-1\rb$,
\begin{align*}
&\sum_{j=1}^k \cap_{\pi^{i-1}(j)} = \sum_{j=1}^k \cap_{\pi^{i}(j)},\\
&S_{k+1}^{i-1} = \sum_{j=k+1}^{j^*-1} \frac{1}{p_{\pi^{i-1}(j)}} + \frac{1}{p_i} + \sum_{j=j^*+1}^{n-i+1} \frac{1}{p_{\pi^{i-1}(j)}} = \frac{1}{p_i} + \sum_{j=k+1}^{j^*-1} \frac{1}{p_{\pi^{i}(j)}} + \sum_{j=j^*+1}^{n-i+1} \frac{1}{p_{\pi^{i}(j-1)}} =  \frac{1}{p_i} + S_{k+1}^{i}.
\end{align*}
Similarly, for every $k \in \lb j^*,n-i+1\rb$,
\begin{align*}
&\sum_{j=1}^k \cap_{\pi^{i-1}(j)} = \sum_{j=1}^{j^*-1} \cap_{\pi^{i}(j)} +  \cap_{i} + \sum_{j=j^*+1}^k \cap_{\pi^{i}(j-1)} = \sum_{j=1}^{k-1} \cap_{\pi^{i}(j)} + \cap_{i},\\
&S_{k+1}^{i-1} = \sum_{j=k+1}^{n-i+1} \frac{1}{p_{\pi^{i-1}(j)}} = \sum_{j=k+1}^{n-i+1} \frac{1}{p_{\pi^{i}(j-1)}}  =  S_{k}^{i}.
\end{align*}
%
\hfill \Halmos
\end{proof}

\begin{lemma}\label{lemma:ki}
For every $i \in \lb 0,n-1\rb$, $k_i$ exists and
$$K_i \coloneqq \left\{ k \in \lb 0,n-i\rb:\, k + p_{\pi^{i}(k)}S^{i}_{k+1} < \rs \right\} = \lb 0,k_i\rb.$$

Furthermore, for every $i\in \lb 1,n-1\rb$ and $j^*\in \lb1,n-i+1\rb$ such that $\pi^{i-1}(j^{*})=i$,
\begin{align}
\OneAndAHalfSpacedXI
\label{ineq: relation k(i-1) and k(i)}
    k_{i-1} \leq \begin{cases}
        k_{i} & \text{if } k_{i-1} \in \lb 0, j^*-1\rb,\\
        k_{i} + 1 & \text{if } k_{i-1} \in \lb j^*, n-i +1 \rb.
    \end{cases} 
\end{align}

\end{lemma}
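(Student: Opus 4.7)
The proof naturally divides along the two claims: a monotonicity argument for the prefix structure of $K_i$, followed by a case analysis that uses Lemma~\ref{lem: relation pi_(i-1) and pi_(i)} to compare $k_{i-1}$ and $k_i$. For the first claim, I would first note that $0 \in K_i$: by convention $p_{\pi^{i}(0)} = 0$, so the left-hand side at $k=0$ equals zero, which is strictly less than $\rs \geq 1$; this guarantees that $k_i$ exists. Then I would establish that $f(k) \coloneqq k + p_{\pi^{i}(k)} S^{i}_{k+1}$ is non-decreasing on $\lb 0, n-i\rb$. Writing $S^{i}_k = 1/p_{\pi^{i}(k)} + S^{i}_{k+1}$ and using $p_{\pi^{i}(k-1)} \leq p_{\pi^{i}(k)}$, I would expand
\[ f(k-1) = (k-1) + \tfrac{p_{\pi^{i}(k-1)}}{p_{\pi^{i}(k)}} + p_{\pi^{i}(k-1)} S^{i}_{k+1} \leq k + p_{\pi^{i}(k)} S^{i}_{k+1} = f(k). \]
Since $f$ is monotone, $K_i$ is an initial segment of $\lb 0,n-i\rb$, which by the definition of $k_i$ is exactly $\lb 0,k_i\rb$.

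For the comparison between $k_{i-1}$ and $k_i$, I would fix $i \geq 1$ together with the index $j^*$ satisfying $\pi^{i-1}(j^*) = i$, and split on whether $k_{i-1} \leq j^* - 1$ or $k_{i-1} \geq j^*$. In the first regime, the identities $\pi^{i-1}(k_{i-1}) = \pi^{i}(k_{i-1})$ and $S^{i-1}_{k_{i-1}+1} = S^{i}_{k_{i-1}+1} + 1/p_i$ turn the hypothesis $k_{i-1} \in K_{i-1}$ into $k_{i-1} + p_{\pi^{i}(k_{i-1})} S^{i}_{k_{i-1}+1} + p_{\pi^{i}(k_{i-1})}/p_i < \rs$, from which I would drop the nonnegative last term to place $k_{i-1}$ in $K_i$, and then apply the prefix property of Part 1 to conclude $k_{i-1} \leq k_i$. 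In the second regime, $S^{i-1}_{k_{i-1}+1} = S^{i}_{k_{i-1}}$, while $\pi^{i-1}(k_{i-1})$ equals $i$ if $k_{i-1} = j^*$ and $\pi^{i}(k_{i-1}-1)$ otherwise; in both sub-cases I would derive $(k_{i-1}-1) + p_{\pi^{i}(k_{i-1}-1)} S^{i}_{k_{i-1}} < \rs$, concluding via Part 1 that $k_{i-1} - 1 \leq k_i$, i.e., $k_{i-1} \leq k_i + 1$.

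The most delicate point is the boundary sub-case $k_{i-1} = j^*$, where the naive substitution $\pi^{i-1}(k) \mapsto \pi^{i}(k-1)$ is unavailable because $\pi^{i-1}(j^*) = i$ lies outside the range of $\pi^{i}$. I would handle it by using the sort order along $\pi^{i-1}$ to infer $p_{\pi^{i}(j^*-1)} = p_{\pi^{i-1}(j^*-1)} \leq p_{\pi^{i-1}(j^*)} = p_i$, which together with $j^* - 1 < j^*$ is enough to transfer the $K_{i-1}$-inequality to a $K_i$-inequality for $j^* - 1$. Beyond this bookkeeping, the argument is a direct unfolding of the identities from Lemma~\ref{lem: relation pi_(i-1) and pi_(i)} combined with the monotonicity established in the first claim.
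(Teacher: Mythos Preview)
Your proposal is correct and mirrors the paper's proof almost exactly: the paper also verifies $0\in K_i$ via $p_{\pi^i(0)}=0$, shows the downward-closure $k\in K_i\Rightarrow k-1\in K_i$ by the same rewriting of $S^i_k=1/p_{\pi^i(k)}+S^i_{k+1}$ together with $p_{\pi^i(k-1)}\le p_{\pi^i(k)}$, and then runs the identical three-way case split on $k_{i-1}<j^*$, $k_{i-1}>j^*$, $k_{i-1}=j^*$ using the identities of Lemma~\ref{lem: relation pi_(i-1) and pi_(i)}, handling the boundary sub-case via $p_{\pi^{i-1}(j^*-1)}\le p_{\pi^{i-1}(j^*)}=p_i$ just as you do. The only cosmetic difference is that you phrase the first part as monotonicity of $f(k)=k+p_{\pi^i(k)}S^i_{k+1}$ rather than directly as downward closure of $K_i$; one small point you leave implicit (and the paper states) is that $k_{i-1}\le j^*-1\le n-i$ in the first case, so $k_{i-1}$ is in the domain of $K_i$.
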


\begin{proof}{\emph{Proof of Lemma \ref{lemma:ki}.}} Let $i \in \lb 0, n-1\rb$. Since $p_{\pi^i(0)} S_1^i = 0 < \rs$, then  $0 \in K_i$ and $k_i$ exists. Next, we consider $k \in \lb 1, n-i\rb$. If $k \in K_i$, then $k-1 \in K_i$, as shown below:
\begin{align*}
\rs > k + p_{\pi^i(k)} S_{k+1}^i = k-1 + p_{\pi^i(k)} \left(S_{k+1}^i + \frac{1}{p_{\pi^i(k)}}\right) \geq k-1 + p_{\pi^i(k-1)} S_k^i,
\end{align*}
where we used the fact that $k \in K_i$ and $\pi^i$ orders locations in $\lb i+1,n\rb$ by their detection rates. Therefore, $K_i = \lb 0,k_i\rb$.

We next analyze $k_i$ as a function of $i$. Let $i\in \lb 1, n-1\rb$ and $j^*\in \lb 1, n-i+1\rb$ be such that $\pi^{i-1}(j^{*})=i$. If $k_{i-1} \in \lb 0, j^*-1\rb$, we obtain:
\begin{align*}
\rs > k_{i-1} + p_{\pi^{i-1}(k_{i-1})} S_{k_{i-1}+1}^{i-1} \overset{\eqref{eq: relation pi^i and pi^(i-1)},\eqref{eq: relation sum probs pi^i and pi^(i-1)}}{\geq} k_{i-1} + p_{\pi^{i}(k_{i-1})} S_{k_{i-1}+1}^{i}.
\end{align*}
Since $k_{i-1} \leq j^* - 1 \leq n-i$, we deduce that $k_{i-1} \in K_i$, and $k_{i-1} \leq k_i$.  

If $k_{i-1} \in \lb j^*+1, n-i+1\rb$, we obtain:
\begin{align*}
\rs  \overset{\eqref{eq: relation pi^i and pi^(i-1)},\eqref{eq: relation sum probs pi^i and pi^(i-1)}}{>} k_{i-1} + p_{\pi^{i}(k_{i-1}-1)} S_{k_{i-1}}^{i} > k_{i-1} -1 + p_{\pi^{i}(k_{i-1}-1)} S_{k_{i-1}-1 +1}^{i}.
\end{align*}
Since $k_{i-1} -1 \leq n-i$, then $k_{i-1} -1 \in K_i$ and $k_{i-1} -1 \leq k_i$. 

Finally, if $k_{i-1} = j^*$, we obtain:
\begin{align*}
\rs \overset{\eqref{eq: relation sum probs pi^i and pi^(i-1)}}{>} k_{i-1} + p_{\pi^{i-1}(k_{i-1}-1)} S_{k_{i-1}}^{i}  \overset{\eqref{eq: relation pi^i and pi^(i-1)}}{>}  k_{i-1} -1 + p_{\pi^{i}(k_{i-1}-1)} S_{k_{i-1}-1 +1}^{i}.
\end{align*}
Thus, $k_{i-1} -1 \in K_i$ and $k_{i-1} -1 \leq k_i$. Note that we used throughout that $p_{\pi^i(0)}=0$. 
\hfill \Halmos
\end{proof}

\begin{lemma}
\label{lem: relation ell_(i-1) and ell_(i)}
For every $i \in \lb 0,n-1\rb$, if $\rh > \sum_{j=1}^i \cap_j + p_i c_i S_1^i$, then $\ell_i$ exists and
$$L_i \coloneqq \left\{ \ell \in \lb 0, n-i\rb:\,   \sum_{j=1}^{i} c_{j} + \sum_{j=1}^{\ell} c_{\pi^{i}(j)} +  p_{i}c_{i}S^{i}_{\ell+1} < \rh \right\} = \lb 0,\ell_i\rb.$$

\end{lemma}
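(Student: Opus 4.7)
The plan is to show two things: first, that $0 \in L_i$ under the stated hypothesis, so that $\ell_i$ exists as the maximum element of $L_i$, and second, that the set $L_i$ is downward closed, which together yield $L_i = \lb 0, \ell_i\rb$. Define the auxiliary function
\begin{align*}
f(\ell) \coloneqq \sum_{j=1}^{i} c_{j} + \sum_{j=1}^{\ell} c_{\pi^{i}(j)} + p_{i}c_{i}S^{i}_{\ell+1}, \qquad \ell \in \lb 0, n-i\rb,
\end{align*}
so that $L_i = \{\ell \in \lb 0, n-i\rb :\, f(\ell) < \rh\}$.

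First I would observe that $f(0) = \sum_{j=1}^{i} c_j + p_i c_i S^{i}_{1}$, so the hypothesis $\rh > \sum_{j=1}^i c_j + p_i c_i S^i_1$ is exactly the statement that $0 \in L_i$. Hence $L_i$ is nonempty and $\ell_i = \max L_i$ is well defined.

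Next, I would establish monotonicity of $f$. For $\ell \in \lb 1, n-i\rb$, direct computation gives
\begin{align*}
f(\ell) - f(\ell-1) = c_{\pi^{i}(\ell)} + p_i c_i \bigl(S^{i}_{\ell+1} - S^{i}_{\ell}\bigr) = c_{\pi^{i}(\ell)} - \frac{p_i c_i}{p_{\pi^{i}(\ell)}}.
\end{align*}
Since $\pi^{i}(\ell) \in \lb i+1, n\rb$, the ordering of locations by detection potential $p_1 c_1 \leq \cdots \leq p_n c_n$ yields $p_i c_i \leq p_{\pi^{i}(\ell)} c_{\pi^{i}(\ell)}$, which rearranges to $c_{\pi^{i}(\ell)} \geq p_i c_i / p_{\pi^{i}(\ell)}$. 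Therefore $f(\ell) \geq f(\ell-1)$, i.e., $f$ is non-decreasing in $\ell$.

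Combining these, for any $\ell \in \lb 1, \ell_i\rb$ we have $f(\ell-1) \leq f(\ell) < \rh$, so $\ell - 1 \in L_i$. Iterating, $\lb 0, \ell_i\rb \subseteq L_i$, and the reverse inclusion follows from the definition of $\ell_i$ as the maximum of $L_i$. This gives $L_i = \lb 0, \ell_i\rb$, as claimed. The only subtle point is invoking the global ordering $p_i c_i \leq p_{\pi^{i}(\ell)} c_{\pi^{i}(\ell)}$ — which is not the ordering used by $\pi^{i}$ (that one is by detection rates) but rather the background ordering of the locations $\lb 1, n\rb$ by detection potentials; the rest of the argument is a routine telescoping calculation.
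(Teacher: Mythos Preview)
Your proof is correct and follows essentially the same approach as the paper. Both arguments first verify that the hypothesis is precisely $0\in L_i$, then establish that $L_i$ is downward closed via the key inequality $c_{\pi^{i}(\ell)}\geq p_ic_i/p_{\pi^{i}(\ell)}$ (a consequence of the detection-potential ordering on $\lb 1,n\rb$); your packaging of this step as monotonicity of the auxiliary function $f$ is only a cosmetic difference from the paper's direct chain of inequalities.
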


\begin{proof}{\emph{Proof of Lemma \ref{lem: relation ell_(i-1) and ell_(i)}.}}
Let $i \in \lb 0,n-1\rb$ and suppose that $\rh > \sum_{j=1}^i \cap_j + p_i c_i S_1^i$. Then, $0 \in L_i$ and $\ell_i$ exists.
Next, consider $\ell \in \lb 1, n-i\rb$. If $\ell \in L_i$, then $\ell-1 \in L_i$, as shown below:
\begin{align*}
\rh - \sum_{j=1}^i c_j >   \sum_{j=1}^{\ell-1} c_{\pi^i(j)} + \frac{p_{\pi^i(\ell)}}{p_{\pi^i(\ell)}} c_{\pi^i(\ell)} + p_ic_i S_{\ell+1}^i \geq \sum_{j=1}^{\ell-1} c_{\pi^i(j)} + p_ic_i S_{\ell}^i,
\end{align*}
where we used the fact that $\ell \in L_i$ and locations in $\lb 1,n\rb$ are ordered by their detection potentials. Therefore, $L_i = \lb 0,\ell_i\rb$.
%
%
\hfill \Halmos
\end{proof}


We are now ready to prove Lemma \ref{lem: threshold inequalities}.


\begin{proof}{\emph{Proof of Lemma \ref{lem: threshold inequalities}.}}
First, $\tau_{-1} = 0$ by definition. Moreover, we observe that $k_{n-1} \in \{0,1\}$, which implies that $\tau_{n-1} = m$: If $k_{n-1} = 0$, then $\tau_{n-1} = \sum_{j=1}^{n-1} c_j + p_nc_n/p_n = m$. If $k_{n-1} = 1$, then $\tau_{n-1} = \sum_{j=1}^{n-1} c_j + c_n = m$.
We next show that $\tau_{i-1} \leq \nu_{i} \leq \tau_{i}$ for all $i\in \lb 0,n-1\rb$.

We note that the inequality $\nu_{i} \leq \tau_{i}$ follows directly from the fact that $p_ic_i \leq p_{i+1}c_{i+1}$. Thus, it only remains to show that $\tau_{i-1} \leq \nu_{i}$. This is trivial for $i=0$, so we may assume that $i\in \lb 1,n-1\rb$. We note that
\begin{align}
\label{eq: nu_i - tau_i}
    \nu_i - \tau_{i-1} &= c_i + \sum_{j=1}^{k_{i}}c_{\pi^{i}(j)} - \sum_{j=1}^{k_{i-1}}c_{\pi^{i-1}(j)} - p_ic_i\left( S^{i-1}_{k_{i-1}+1} - S^{i}_{k_{i}+1} \right).
\end{align}
Let $j^*\in \lb 1,n-i+1\rb$  be such that $\pi^{i-1}(j^*)=i$. If $k_{i-1} \in \lb 0,j^*-1\rb$, we obtain:
\begin{align*}
  \nu_i - \tau_{i-1} &\overset{\eqref{eq: relation sum cardinalities pi^i and pi^(i-1)},\eqref{eq: relation sum probs pi^i and pi^(i-1)}}{=} c_i + \sum_{j=1}^{k_{i}}c_{\pi^{i}(j)} - \sum_{j=1}^{k_{i-1}}c_{\pi^{i}(j)}  - p_ic_i\left(S^{i}_{k_{i-1}+1} - S^{i}_{k_{i}+1} + \frac{1}{p_i}\right)\\
  & \overset{\eqref{ineq: relation k(i-1) and k(i)}}{=} \sum_{j=k_{i-1}+1}^{k_i} \frac{p_{\pi^{i}(j)}c_{\pi^{i}(j)} - p_ic_i}{p_{\pi^{i}(j)}}  \geq 0.
\end{align*}

If $k_{i-1} \in \lb j^*, n-i+1\rb$, we obtain:
\begin{align*}
  \nu_i - \tau_{i-1} &\overset{\eqref{eq: relation sum cardinalities pi^i and pi^(i-1)},\eqref{eq: relation sum probs pi^i and pi^(i-1)}}{=} \sum_{j=1}^{k_{i}}c_{\pi^{i}(j)} - \sum_{j=1}^{k_{i-1}-1}c_{\pi^{i}(j)}  - p_ic_i\left(S^{i}_{k_{i-1}} - S^{i}_{k_{i}+1}\right)  \overset{\eqref{ineq: relation k(i-1) and k(i)}}{=} \sum_{j=k_{i-1}}^{k_i} \frac{p_{\pi^{i}(j)}c_{\pi^{i}(j)} - p_ic_i}{p_{\pi^{i}(j)}}  \geq 0.
\end{align*}

\hfill \Halmos
\end{proof}

The following lemma derives properties satisfied by our auxiliary parameters:
\begin{lemma}
\label{lem:ki_vs_li}
Let $i\in \lb 0,n-1\rb$ be such that $\rh > \tau_{i-1}$. Then, the following statements hold:
\begin{itemize}
\item[--] $\ell_i$ exists. Furthermore, when $i \geq 1$, let $j^* \in \lb 1,n-i+1\rb$ satisfying $\pi^{i-1}(j^*) = i$. Then,
\begin{align}
\OneAndAHalfSpacedXI
     k_{i-1} \leq \begin{cases}
        \ell_i  & \text{if } k_{i-1} \in \lb 0,j^*-1 \rb \\
        \ell_i + 1 & \text{if } k_{i-1} \in \lb j^*,n-i +1\rb.
    \end{cases}\label{eq:ki-1_vs_li}
\end{align}

\item[--] If $\nu_i < \rh$, then $k_i \leq \ell_i$. If $\rh \leq \nu_i$, then $k_i > \ell_i$.

\end{itemize}

\end{lemma}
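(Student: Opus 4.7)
The plan is to unify both bullets via the characterization $L_i = \llbracket 0, \ell_i\rrbracket$ from Lemma \ref{lem: relation ell_(i-1) and ell_(i)}, after first establishing that $\ell_i$ indeed exists. The only algebraic engine I need is Lemma \ref{lem: relation pi_(i-1) and pi_(i)}, which lets me rewrite quantities indexed by $\pi^{i-1}$ in terms of $\pi^i$, splitting into the two cases $k_{i-1}\in\llbracket 0,j^*-1\rrbracket$ and $k_{i-1}\in\llbracket j^*,n-i+1\rrbracket$.

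For existence of $\ell_i$ when $i=0$, the hypothesis $\rh>\tau_{-1}=0$ matches the criterion of Lemma \ref{lem: relation ell_(i-1) and ell_(i)} since $p_0c_0=0$. For $i\geq 1$, I would show the stronger inequality $\sum_{j=1}^i c_j + p_ic_iS_1^i \leq \tau_{i-1}$. Plugging the definition of $\tau_{i-1}$ and applying \eqref{eq: relation sum cardinalities pi^i and pi^(i-1)} and \eqref{eq: relation sum probs pi^i and pi^(i-1)} in each of the two cases, the difference reduces to a sum of terms of the form $(p_ic_i - p_{\pi^i(j)}c_{\pi^i(j)})/p_{\pi^i(j)}$, each nonpositive because $\pi^i(j)\geq i+1$ and the detection potentials are globally ordered so that $p_{\pi^i(j)}c_{\pi^i(j)}\geq p_ic_i$.

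For inequality \eqref{eq:ki-1_vs_li}, I would again rewrite $\tau_{i-1}$ in terms of $\pi^i$: in the first case \eqref{eq: relation sum cardinalities pi^i and pi^(i-1)} and \eqref{eq: relation sum probs pi^i and pi^(i-1)} give $\tau_{i-1}=\sum_{j=1}^i c_j+\sum_{j=1}^{k_{i-1}} c_{\pi^i(j)}+p_ic_iS^i_{k_{i-1}+1}$, so $\rh>\tau_{i-1}$ is exactly the defining membership condition of $L_i$ evaluated at $\ell=k_{i-1}$, giving $k_{i-1}\in L_i=\llbracket 0,\ell_i\rrbracket$, hence $k_{i-1}\leq \ell_i$. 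In the second case the same manipulation yields $\tau_{i-1}=\sum_{j=1}^i c_j+\sum_{j=1}^{k_{i-1}-1} c_{\pi^i(j)}+p_ic_iS^i_{k_{i-1}}$, so $\rh>\tau_{i-1}$ places $k_{i-1}-1$ in $L_i$, yielding $k_{i-1}-1\leq \ell_i$.

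For the second bullet, the observation is that $\nu_i$ is, by construction, the quantity $\sum_{j=1}^i c_j+\sum_{j=1}^{k_i}c_{\pi^i(j)}+p_ic_iS^i_{k_i+1}$ that governs membership of $k_i$ in $L_i$. Hence $\nu_i<\rh$ is equivalent to $k_i\in L_i=\llbracket 0,\ell_i\rrbracket$, i.e., $k_i\leq\ell_i$, and $\rh\leq\nu_i$ is equivalent to $k_i\notin L_i$, i.e., $k_i>\ell_i$. The only subtle piece of the whole argument is the first-bullet case split around $j^*$ and keeping the bookkeeping consistent between $\pi^{i-1}$- and $\pi^i$-indexed sums; once the rewriting is done, everything collapses onto the definition of $L_i$.
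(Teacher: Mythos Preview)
Your proposal is correct and follows essentially the same approach as the paper: rewrite $\tau_{i-1}$ in $\pi^i$-indexed form via Lemma~\ref{lem: relation pi_(i-1) and pi_(i)}, split on whether $k_{i-1}<j^*$ or $k_{i-1}\geq j^*$, and conclude via $L_i=\llbracket 0,\ell_i\rrbracket$; the second bullet is handled identically by recognizing that $\nu_i$ is precisely the $L_i$-membership threshold at $k_i$. The only difference is that your separate step showing $\sum_{j=1}^i c_j + p_ic_iS_1^i \leq \tau_{i-1}$ to establish existence is redundant: once you show $k_{i-1}\in L_i$ (or $k_{i-1}-1\in L_i$) in the case split, $L_i\neq\emptyset$ and hence existence of $\ell_i$ comes for free, which is how the paper proceeds.
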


\begin{proof}{\emph{Proof of Lemma~\ref{lem:ki_vs_li}.}}

\begin{itemize}
\item[--]
Let $i\in \lb 0,n-1\rb$. If $i = 0$ and $\rh > \tau_{-1} = 0$, then $0 \in L_0$ and $\ell_0$ exists. We now assume that $i \in \lb 1,n-1\rb$ and $\rh > \tau_{i-1}$. Let $j^* \in \lb 1,n-i+1\rb$ be such that $\pi^{i-1}(j^*) = i$. If $k_{i-1} \in \lb 0,j^*-1\rb$, then:
\begin{align*}
\rh > \tau_{i-1} \overset{\eqref{eq: relation sum cardinalities pi^i and pi^(i-1)},\eqref{eq: relation sum probs pi^i and pi^(i-1)}}{=} \sum_{j=1}^{i} c_j + \sum_{j=1}^{k_{i-1}} c_{\pi^{i}(j)} + p_i c_i S_{k_{i-1}+1}^{i}.
\end{align*}
Since $k_{i-1} \leq j^* - 1 \leq n-i$, then $k_{i-1} \in L_i$, $\ell_i$ exists, and $k_{i-1} \leq \ell_i$. 

If $k_{i-1} \in \lb j^*,n-i+1\rb$, then:
\begin{align*}
\rh \overset{\eqref{eq: relation sum cardinalities pi^i and pi^(i-1)},\eqref{eq: relation sum probs pi^i and pi^(i-1)}}{>} \sum_{j=1}^{i} c_j + \sum_{j=1}^{k_{i-1}-1} c_{\pi^{i}(j)} + p_i c_i S_{k_{i-1}-1+1}^{i}.
\end{align*}
Since $k_{i-1} -1 \leq n-i$, then $k_{i-1} -1 \in L_i$, $\ell_i$ exists, and $k_{i-1} -1 \leq \ell_i$.

\item[--] Let $i\in \lb 0,n-1\rb$ be such that $\rh > \tau_{i-1}$. If $\rh > \nu_i$, then by definition of $\nu_i$, $k_i \in L_i$ and $k_i \leq \ell_i$. On the other hand, if $\rh \leq \nu_i$, then $k_i \notin L_i$. Since $\ell_i$ exists and $L_i = \lb 0,\ell_i\rb$ (Lemma \ref{lem: relation ell_(i-1) and ell_(i)}), then $k_i > \ell_i$. \hfill \Halmos

\end{itemize}

\end{proof}



We can now prove the first theorem of this article.

\begin{proof}{\emph{Proof of Theorem \ref{thm: NE disjoint monitoring sets}.}}
Let $\rs\in \lb 1,n-1\rb$ and  $\rh \in \lb 1,m-1\rb$. Let $i^{*}\in \lb 0, n-1\rb$ satisfying $\tau_{i^{*}-1} < \rh \leq \tau_{i^{*}}$.

\begin{enumerate}
\item[\emph{Regime Pattern 1:}] Suppose that $\nu_{i^*} < \rh \leq \tau_{i^*}$. From Lemma \ref{lem:ki_vs_li}, we know that $k_{i^*} \leq \ell_{i^*}$. Let $\smargv^{\Seek^*} \in \R^n$ and $\hmargv^{\Hide^*} \in \R^n$ satisfying \eqref{Regime_1_S} and \eqref{Regime_1_H}, respectively. We will show that $(\smargv^{\Seek^*},\hmargv^{\Hide^*}) \in \widetilde{\A}_\Seek \times \widetilde{\A}_\Hide$ and is a NE of $\widetilde{\Gamma}$.

First, we note that $k_{i^*} < n-i^*$. Indeed, if $k_{i^*} = n-i^*$, then $\rh > \nu_{i^*} = \sum_{j=1}^n c_j = m$, which is a contradiction. Therefore, $\K \neq \emptyset$ and $S_{k_{i^*}+1}^{i^*} > 0$. By definition of $k_{i^*}$, we obtain 
\begin{align}
\rs > k_{i^*} + p_{\pi^{i^*}(k_{i^*})}S_{k_{i^*}+1}^{i^*} \geq k_{i^*},\label{used_for_knapsack_4}
 \end{align}
 which implies that $\smargv_i^{\Seek^*} \geq 0$ for every $i \in \K$. Furthermore, since $k_{i^*}+1 \leq n-i^*$ and $k_{i^*}+1\notin K_{i^*}$, then:
\begin{align}
\rs \leq k_{i^*}+1 + p_{\pi^{i^*}(k_{i^*}+1)}\left(S_{k_{i^*}+1}^{i^*} - \frac{1}{p_{\pi^{i^*}(k_{i^*}+1)}}\right) = k_{i^*} + p_{\pi^{i^*}(k_{i^*}+1)}S_{k_{i^*}+1}^{i^*}.\label{used_for_knapsack_3}
\end{align}
Thus, for every $i \in \K$, $\smargv_i^{\Seek^*} \leq 1$. Finally,
$$\sum_{i=1}^n\smargv_i^{\Seek^*} = |\J| + \frac{\rs - k_{i^*}}{S_{k_{i^*}+1}^{i^*}}S_{k_{i^*}+1}^{i^*} = \rs.$$ 
Therefore, $\smargv^{\Seek^*} \in \widetilde{\A}_\Seek$.
Next, we show that $\hmargv^{\Hide^*} \in \widetilde{\A}_\Hide$. Since $k_{i^*} \leq \ell_{i^*}$, then $k_{i^*} \in L_{i^*}$ and
\begin{align}
\rh >  \sum_{j=1}^{i^{*}}c_{j} + \sum_{j=1}^{k_{i^{*}}}c_{\pi^{i^{*}}(j)} + p_{i^*} c_{i^*} S^{i^{*}}_{k_{i^{*}}+1} \geq  \sum_{j=1}^{i^{*}}c_{j} + \sum_{j=1}^{k_{i^{*}}}c_{\pi^{i^{*}}(j)}.\label{used_for_knapsack_1}
\end{align}
Thus, $\hmargv_i^{\Hide^*} \geq 0$ for every $i \in \K$. Furthermore, since $\rh \leq \tau_{i^*}$ and $p_{i^*+1}c_{i^*+1} \leq p_i c_i$ for every $i \in \J\cup\K$, we obtain:
\begin{align}
\forall \, i \in \K, \ \hmargv_i^{\Hide^*} = \frac{\rh - \sum_{j=1}^{i^*}c_j - \sum_{j=1}^{k_{i^*}} c_{\pi^{i^*}(j)} }{p_iS_{k_{i^*}+1}^{i^*}} \leq \frac{p_{i^*+1}c_{i^*+1}}{p_i} \leq c_i.\label{used_for_knapsack_2}
\end{align} 

Finally,
$$\sum_{i=1}^n \hmargv_i^{\Hide^*} = \sum_{i=1}^{i^*}c_i + \sum_{j=1}^{k_{i^*}} c_{\pi^{i^*}(j)} + \left(\rh - \sum_{j=1}^{i^*}c_j - \sum_{j=1}^{k_{i^*}} c_{\pi^{i^*}(j)} \right) \frac{S_{k_{i^*}+1}^{i^*}}{S_{k_{i^*}+1}^{i^*}}  = \rh.$$
Therefore, $\hmargv^{\Hide^*} \in \widetilde{\A}_\Hide$.

Next, we show that $(\smargv^{\Seek^*},\hmargv^{\Hide^*})$ is a NE of $\widetilde{\Gamma}$. To this end, we first note the following: 
\begin{alignat}{3}
\forall \, \hmargv^{\Hide} \in \widetilde{\A}_\Hide, \ \min_{\smargv^{\Seek} \in \widetilde{\A}_\Seek} \tilde{u}(\smargv^{\Seek},\hmargv^{\Hide}) \ \overset{\eqref{continuous_payoff}}{=} \ \sum_{i=1}^n \hmargv_i^{\Hide} \ - \ \max_{\smargv^{\Seek}} \ & \sum_{i=1}^n p_i\hmargv_i^{\Hide} \smargv_i^{\Seek}\nonumber\\
\text{s.t.} \ & \sum_{i=1}^n \smargv_i^{\Seek} \leq \rs \label{knapsack_1}\\
& 0 \leq \smargv_i^{\Seek} \leq 1, && \quad \forall \, i \in \lb 1,n\rb.\nonumber
\end{alignat}

Thus, a best response to $\hmargv^{\Hide} \in \widetilde{\A}_\Hide$ is an optimal solution to a continuous knapsack problem with $n$ different (fractional) objects of unitary weights and a knapsack capacity equal to $\rs$. Each object $i \in \lb 1,n\rb$ has a profit equal to $p_i \smargv_i^{\Hide}$. An optimal solution consists in filling the capacity of the knapsack with the objects with highest profits.

Then, given $\hmargv^{\Hide^*}$ satisfying \eqref{Regime_1_H}, the ``profit'' of each object is given by:
\begin{align}
\OneAndAHalfSpacedXI
 p_i\hmargv_i^{\Hide^*} &= 
            \begin{dcases}
                p_ic_i &\text{if } i\in \I \cup \J,\\
                \frac{\rh - \sum_{j=1}^{i^{*}}c_{j} - \sum_{j=1}^{k_{i^{*}}}c_{\pi^{i^{*}}(j)} }{S^{i^{*}}_{k_{i^{*}}+1}} & \text{if } i\in \K.
            \end{dcases}\label{Profits_Regime_1_H}
\end{align} 

We know that for every $i \in \I$ and every $l \in \J$:
\begin{align}
p_i c_i\leq p_{i^*}c_{i^*} \overset{\eqref{used_for_knapsack_1}}{<} \frac{\rh - \sum_{j=1}^{i^{*}}c_{j} - \sum_{j=1}^{k_{i^{*}}}c_{\pi^{i^{*}}(j)} }{S^{i^{*}}_{k_{i^{*}}+1}} \overset{\eqref{used_for_knapsack_2}}{\leq } p_{i^*+1}c_{i^*+1} \leq p_l c_l.\label{Profits_Ineq_Regime_1_H}
\end{align}
Therefore, the objects in $\J$ are the most profitable, followed by the objects in $\K$ that have equal profit, followed by the objects in $\I$. We now must determine bounds on $\rs$. We showed in \eqref{used_for_knapsack_4} that $\rs> k_{i^*} = |\J|$. An upper bound is given as follows:
\begin{align*}
\rs \overset{\eqref{used_for_knapsack_3}}{\leq} k_{i^*}+1 + \sum_{j=k_{i^*}+2}^{n-i^*} \frac{p_{\pi^{i^*}(k_{i^*}+1)}}{p_{\pi^{i^*}(j)}} \leq n-i^* = |\J| + |\K|.
\end{align*}

Thus, one best response to $\hmargv^{\Hide^*}$ selects all the objects in $\J$ and any fraction of the objects in $\K$ until the knapsack is full. Hence, $\smargv^{\Seek^*}$ defined in \eqref{Regime_1_S} is a best response to $\hmargv^{\Hide^*}$.

To show that $\hmargv^{\Hide^*}$ is a best response to $\smargv^{\Seek^*}$, we similarly observe the following:
\begin{alignat}{3}
\forall \, \smargv^{\Seek} \in \widetilde{\A}_\Seek, \ \max_{\hmargv^{\Hide} \in \widetilde{\A}_\Hide} \tilde{u}(\smargv^{\Seek},\hmargv^{\Hide}) \overset{\eqref{continuous_payoff}}{=} \max_{\hmargv^{\Hide}} \ & \sum_{i=1}^n (1- p_i \smargv_i^{\Seek})\hmargv_i^{\Hide}\nonumber\\
\text{s.t.} \ & \sum_{i=1}^n \hmargv_i^{\Hide} \leq \rh\label{knapsack_2}\\
& 0 \leq \hmargv_i^{\Hide} \leq \cap_i, && \forall \, i \in \lb 1,n\rb.\nonumber
\end{alignat}

Thus, a best response to $\smargv^{\Seek} \in \widetilde{\A}_\Seek$ is an optimal solution to another continuous knapsack problem with $n$ different (fractional) objects of unitary weights and a knapsack capacity equal to $\rh$. Each object $i \in \lb 1,n\rb$ is available $\cap_i$ times and has a profit equal to $(1 - p_i \smargv_i^{\Seek})$. An optimal solution consists in selecting as many copies as possible of the objects with highest profits until filling the capacity of the knapsack.

Then, given $\smargv^{\Seek^*}$ satisfying \eqref{Regime_1_S}, the ``profit'' of each object is given by:
\begin{align}
\OneAndAHalfSpacedXI
 1-p_i\smargv_i^{\Seek^*} &=  
            \begin{dcases}
            1 & \text{if }  i\in \I,\\
                1 - p_i &\text{if } i\in  \J,\\
               1 -  \frac{\rs - k_{i^*} }{S^{i^{*}}_{k_{i^{*}}+1}} & \text{if } i\in \K.
            \end{dcases}\label{Profits_Regime_1_S}
\end{align} 

We have the following inequalities:
\begin{align}
\forall \, i \in \J, \ 1 - \frac{\rs - k_{i^*} }{S^{i^{*}}_{k_{i^{*}}+1}} \overset{\eqref{used_for_knapsack_4}}{<}  1 - p_{\pi^{i^*}(k_{i^*})} \leq1 - p_i < 1.\label{Profits_Ineq_Regime_1_S}
\end{align}

Therefore, the objects in $\I$ are the most profitable, followed by the objects in $\J$, followed by the objects in $\K$ that have equal profit. To determine which objects will be selected given $\rh$, we recall that $\rh < m = \sum_{i=1}^n \cap_i$. Furthermore,  \eqref{used_for_knapsack_1} implies that $\rh > \sum_{i \in \I\cup\J} \cap_i$. 

Thus, one best response to $\smargv^{\Seek^*}$ consists in selecting all copies of the objects in $\I$ and $\J$, and in selecting any fraction of the objects in $\K$ until the knapsack is full. Hence, $\hmargv^{\Hide^*}$ defined in \eqref{Regime_1_H} is a best response to $\smargv^{\Seek^*}$.


As a consequence, $(\smargv^{\Seek^*},\hmargv^{\Hide^*})$ is a NE of $\widetilde{\Gamma}$. From Proposition \ref{prop:equivalence}, we deduce that any strategy profile $(\sigma^{\Seek^*},\sigma^{\Hide^*})\in \Delta_{\Seek} \times \Delta_{\Hide}$ that satisfies $\smargv(\sigma^{\Seek^*}) = \smargv^{\Seek^*}$ and $\hmargv(\sigma^{\Hide^*}) = \hmargv^{\Hide^*}$ (which exists as a consequence of Lemma \ref{lemma:convex_hull}) is a NE of $\Gamma$. 


Furthermore, the value of the games $\Gamma$ and $\widetilde{\Gamma}$ is given by:
\begin{align*}
U(\sigma^{\Seek^*},\sigma^{\Hide^*}) = \tilde{u}(\smargv^{\Seek^*},\hmargv^{\Hide^*})  = \rh  - \sum_{j=1}^{k_{i^{*}}} p_{\pi^{i^{*}}(j)}c_{\pi^{i^{*}}(j)} - \frac{ \Big(\rs - k_{i^*}  \Big) \left( \rh - \sum_{j=1}^{i^{*}}c_{j} - \sum_{j=1}^{k_{i^{*}}}c_{\pi^{i^{*}}(j)} \right)}{S^{i^*}_{k_{i^*}+1}}.
\end{align*}

\item[\emph{Regime Pattern 2:}] We now consider the case when $i^* = 0$ and $\tau_{-1} < \rh \leq \nu_{0}$. From Lemma \ref{lem:ki_vs_li}, we know that $k_{0} > \ell_{0}$. Let $\smargv^{\Seek^*} \in \R^n$ and $\hmargv^{\Hide^*} \in \R^n$ satisfying \eqref{Regime_2_S} and \eqref{Regime_2_H}, respectively. We will analogously show that $(\smargv^{\Seek^*},\hmargv^{\Hide^*}) \in \widetilde{\A}_\Seek \times \widetilde{\A}_\Hide$ and is a NE of $\widetilde{\Gamma}$.

First, we note that $\ell_{0} < k_{0} \leq n$, which implies that $\ell_{0} +1 \leq n$. Thus, $S_{\ell_{0}+2}^{0}$ is well defined and $\K \neq \emptyset$. For every $i \in \K\setminus\{\pi^0(\ell_0+1)\}$, $p_{\pi^{0}(\ell_{0}+1)} \leq p_{i}$. Furthermore, since $\ell_{0} +1 \in K_{0}$, we obtain:
\begin{align}
\sum_{i=1}^n \smargv_i^{\Seek^*} = \ell_0+1 +  p_{\pi^0(\ell_0 +1)}S_{\ell_0+2}^0 < \rs. \label{for_knapsack_regime_2_1}
\end{align}

Thus, $\smargv^{\Seek^*} \in \widetilde{\A}_\Seek.$ Next, we show that $\hmargv^{\Hide^*} \in \widetilde{\A}_\Hide$. Since $\ell_0 \in L_0$, $\ell_0+1 \notin L_0$, and $\ell_0+1 \leq n$, then:
\begin{align}
0 < \rh - \sum_{j=1}^{\ell_0} c_{\pi^0(j)} = \hmargv_{\pi^0(\ell_0+1)}^{\Hide^*} = \rh - \sum_{j=1}^{\ell_0+1} c_{\pi^0(j)} + c_{\pi^0(\ell_0+1)} \leq c_{\pi^0(\ell_0+1)}.\label{for_knapsack_regime_2_2}
\end{align}

Since $\sum_{i=1}^n \hmargv_{i}^{\Hide^*} = \rh$, we can then conclude that $\hmargv^{\Hide^*} \in \widetilde{\A}_\Hide$.

Next, we show that $(\smargv^{\Seek^*},\hmargv^{\Hide^*})$ is a NE of $\widetilde{\Gamma}$. Given $\hmargv^{\Hide^*}$ satisfying \eqref{Regime_2_H}, the profit of each object in the knapsack problem \eqref{knapsack_1} is given by:
\begin{align}
\OneAndAHalfSpacedXI
 p_i\hmargv_i^{\Hide^*} &= 
            \begin{dcases}
                p_ic_i &\text{if } i\in \J,\\
                 p_{\pi^0(\ell_0+1)}\left(\rh - \sum_{j=1}^{\ell_0}c_{\pi^0(j)}\right) & \text{if } i = \pi^0(\ell_0+1),\\
                0 & \text{if } i\in \K \setminus\{\pi^0(\ell_0+1)\}.\\
            \end{dcases}\label{Profits_Regime_2_H}
\end{align} 

Since $\rs \overset{\eqref{for_knapsack_regime_2_1}}{>} \ell_0 +1 = |\J|+1$, then a best response to $\hmargv^{\Hide^*}$ will select all the objects in $\J \cup \{\pi^0(\ell_0+1)\}$ (and might not entirely fill the knapsack). Hence, $\smargv^{\Seek^*}$ defined in \eqref{Regime_2_S} is a best response to $\hmargv^{\Hide^*}$.
Then, given $\smargv^{\Seek^*}$ satisfying \eqref{Regime_2_S}, the profit of each object in the knapsack problem \eqref{knapsack_2} is given by:
\begin{align*}
\OneAndAHalfSpacedXI
 1-p_i\smargv_i^{\Seek^*} &=  
            \begin{dcases}
                1 - p_i &\text{if } i\in  \J\cup\{\pi^0(\ell_0+1)\},\\
               1 -  p_{\pi^0(\ell_0+1)} & \text{if } i\in \K\setminus\{\pi^0(\ell_0+1)\}.
            \end{dcases}
\end{align*} 

By definition of $\pi^0$, we have the following inequalities: $1 - p_{\pi^0(1)} \geq \cdots \geq 1 - p_{\pi^0(\ell_0+1)}$. Since \eqref{for_knapsack_regime_2_2} implies that $\rh > \sum_{j=1}^{\ell_0} c_{\pi^0(j)}$, then one best response to $\smargv^{\Seek^*}$ can select all copies of the objects in $\J$ and can select any fraction of the objects in $\K$ until the knapsack is full. Hence, $\hmargv^{\Hide^*}$ defined in \eqref{Regime_2_H} is a best response to $\smargv^{\Seek^*}$. 

Thus, $(\smargv^{\Seek^*},\hmargv^{\Hide^*})$ is a NE of $\widetilde{\Gamma}$. From Proposition \ref{prop:equivalence}, we deduce that any strategy profile $(\sigma^{\Seek^*},\sigma^{\Hide^*})\in \Delta_{\Seek} \times \Delta_{\Hide}$ that satisfies $\smargv(\sigma^{\Seek^*}) = \smargv^{\Seek^*}$ and $\hmargv(\sigma^{\Hide^*}) = \hmargv^{\Hide^*}$ is a NE of $\Gamma$. Furthermore, the value of the games $\Gamma$ and $\widetilde{\Gamma}$ is given by:
\begin{align*}
U(\sigma^{\Seek^*},\sigma^{\Hide^*}) = \tilde{u}(\smargv^{\Seek^*},\hmargv^{\Hide^*})  = \rh - \sum_{j=1}^{\ell_{0}} p_{\pi^{0}(j)}c_{\pi^{0}(j)} - p_{\pi^{0}(\ell_{0}+1)} \left( \rh - \sum_{j=1}^{\ell_{0}}c_{\pi^{0}(j)}\right).
\end{align*}

\item[\emph{Regime Pattern 3:}] Finally, we consider the case when $i^* \geq 1$ and $\tau_{i^*-1} < \rh \leq \nu_{i^*}$. From Lemma \ref{lem:ki_vs_li}, we know that $k_{i^*} > \ell_{i^*}$. Let $\smargv^{\Seek^*} \in \R^n$ and $\hmargv^{\Hide^*} \in \R^n$ satisfying \eqref{Regime_3_S} and \eqref{Regime_3_H}, respectively. We will analogously show that $(\smargv^{\Seek^*},\hmargv^{\Hide^*}) \in \widetilde{\A}_\Seek \times \widetilde{\A}_\Hide$ and is a NE of $\widetilde{\Gamma}$.

Similarly, we note that $\ell_{i^*} < k_{i^*} \leq n-i^*$, which implies that $\ell_{i^*} +1 \leq n-i^*$. Thus, $S_{\ell_{i^*}+2}^{i^*}$ is well defined and $\K \neq \emptyset$. 
Since $\ell_{i^*} +1 \in K_{i^*}$, then
\begin{align}
\rs > p_{\pi^{i^*}(\ell_{i^*}+1)} S_{\ell_{i^*}+2}^{i^*} + \ell_{i^*} +1 = p_{\pi^{i^*}(\ell_{i^*}+1)} S_{\ell_{i^*}+1}^{i^*} + \ell_{i^*}.\label{for_knapsack_regime_3_0}
\end{align}
Thus, $\smargv_{i^*}^{\Seek^*}  > 0$. Next, we will show by contradiction the following upper bound:
\begin{align}
    \label{ineq: technical inequality regime k_i > ell_i}
    \rs - \ell_{i^*} - p_{\pi^{i^*}(\ell_{i^*}+1)} S^{i^*}_{\ell_{i^*} + 1} \leq \min\left\{ \frac{ p_{\pi^{i^*}(\ell_{i^*}+1)} }{  p_{i^*} },\,1 \right\}.
\end{align}

Let us assume that \eqref{ineq: technical inequality regime k_i > ell_i} does not hold, and let $j^* \in \lb 1,n-i^*+1\rb$ satisfying $\pi^{i^*-1}(j^*) = i^*$. If $\ell_{i^*}+1 \leq j^*-1$, then:
\begin{align*}
 \frac{ p_{\pi^{i^*}(\ell_{i^*}+1)} }{  p_{i^*} } &= \min\left\{ \frac{ p_{\pi^{i^*}(\ell_{i^*}+1)} }{  p_{i^*} },\,1 \right\} < \rs - \ell_{i^*} - 1 - p_{\pi^{i^*}(\ell_{i^*}+1)} S^{i^*}_{\ell_{i^*} + 2} \\
 &\overset{\eqref{eq: relation pi^i and pi^(i-1)},\eqref{eq: relation sum probs pi^i and pi^(i-1)}}{=} \rs - \ell_{i^*} - 1 - p_{\pi^{i^*-1}(\ell_{i^*}+1)} S^{i^*-1}_{\ell_{i^*} + 2} + \frac{p_{\pi^{i^*}(\ell_{i^*}+1)}}{p_{i^*}},
\end{align*}
which implies that $\ell_{i^*} + 1 \leq k_{i^*-1}$. However, by Lemma \ref{lem:ki_vs_li}, this can only occur when $k_{i^*-1} \geq j^*$, for which we obtain the following contradiction $j^* \leq k_{i^*-1} \overset{\eqref{eq:ki-1_vs_li}}{\leq} \ell_{i^*} + 1 \leq j^*-1$.

If on the other hand $\ell_{i^*}+1 \geq j^*$, then $j^* < \ell_{i^*}+2 \leq n - i^* +1$ and
\begin{align*}
1 &=\min\left\{ \frac{ p_{\pi^{i^*-1}(\ell_{i^*}+2)} }{  p_{i^*} },\,1 \right\}  = \min\left\{ \frac{ p_{\pi^{i^*}(\ell_{i^*}+1)} }{  p_{i^*} },\,1 \right\} \overset{\eqref{eq: relation pi^i and pi^(i-1)},\eqref{eq: relation sum probs pi^i and pi^(i-1)}}{<} \rs - \ell_{i^*} - 1 - p_{\pi^{i^*-1}(\ell_{i^*}+2)} S^{i^*-1}_{\ell_{i^*} + 3},
\end{align*}
which implies that $\ell_{i^*} + 2 \leq k_{i^*-1}$, thus contradicting \eqref{eq:ki-1_vs_li}. Therefore, \eqref{ineq: technical inequality regime k_i > ell_i} holds.
Finally,
$$\sum_{i=1}^n\smargv_i^{\Seek^*} = \rs - \ell_{i^*} - p_{\pi^*(\ell_{i^*}+1)}S_{\ell_{i^*}+1}^{i^*} + |\J| +1 + p_{\pi^*(\ell_{i^*}+1)}S_{\ell_{i^*}+2}^{i^*} = \rs,$$ 
which implies that $\smargv^{\Seek^*} \in \widetilde{\A}_\Seek$. 

Next, we show that $\hmargv^{\Hide^*} \in \widetilde{\A}_\Hide$. Since locations in $\lb 1,n\rb$ are ordered by their detection potentials, then for every $i \in \K, \ p_i c_i \geq p_{i^*}c_{i^*}$. Since $\ell_{i^*} \in L_{i^*}$, then:
\begin{align}
\hmargv_{\pi^{i^*}(\ell_{i^*}+1)}^{\Hide^*} = \rh - \sum_{j=1}^{i^*} c_j - \sum_{j=1}^{\ell_{i^*}} c_{\pi^{i^*}(j)} - p_{i^*}c_{i^*}S^{i^*}_{\ell_{i^*}+1} + \frac{p_{i^*}c_{i^*}}{p_{\pi^{i^*}(\ell_{i^*}+1)}} > \frac{p_{i^*}c_{i^*}}{p_{\pi^{i^*}(\ell_{i^*}+1)}} \geq 0.
\label{for_knapsack_regime_3_2}
\end{align}

Furthermore, since $\ell_{i^*}+1 \notin L_{i^*}$ and $\ell_{i^*}+1 \leq n - i^*$, then:
\begin{align}
\hmargv_{\pi^{i^*}(\ell_{i^*}+1)}^{\Hide^*} = \rh - \sum_{j=1}^{i^*} c_j - \sum_{j=1}^{\ell_{i^*}+1} c_{\pi^{i^*}(j)} - p_{i^*}c_{i^*}S^{i^*}_{\ell_{i^*}+2}  + c_{\pi^{i^*}(\ell_{i^*}+1)} \leq c_{\pi^{i^*}(\ell_{i^*}+1)}.\label{for_knapsack_regime_3_3}
\end{align}

Finally,
$$\sum_{i=1}^n \hmargv_i^{\Hide^*} = \sum_{i \in \I\cup \J} c_i  + \rh - \sum_{i=1}^{i^*} c_j - \sum_{j=1}^{\ell_{i^*}} c_{\pi^*(j)} - p_{i^*}c_{i^*}S_{\ell_{i^*}+2}^{i^*} + p_{i^*}c_{i^*} S_{\ell_{i^*}+2}^{i^*} = \rh.$$
Therefore, $\hmargv^{\Hide^*} \in \widetilde{\A}_\Hide$.

Next, we show that $(\smargv^{\Seek^*},\hmargv^{\Hide^*})$ is a NE of $\widetilde{\Gamma}$. Given $\hmargv^{\Hide^*}$ satisfying \eqref{Regime_3_H}, the profit of each object in the knapsack problem \eqref{knapsack_1} is given by:
\begin{align}
\OneAndAHalfSpacedXI
 p_i\hmargv_i^{\Hide^*} &= 
            \begin{dcases}
                p_ic_i &\text{if } i\in \I\cup \J,\\
                 p_{\pi^{i^*}(\ell_{i^*}+1)}\left( \rh - \hspace{-0.05cm}  \sum_{j=1}^{i^*}  c_{j}  - \hspace{-0.05cm}\sum_{j=1}^{\ell_{i^{*}}}  c_{\pi^{i^*}(j)} -  p_{i^*}c_{i^*}S^{i^*}_{\ell_{i^{*}}+2}\right) & \text{if } i = \pi^{i^*}(\ell_{i^*}+1),\\
                p_{i^*}c_{i^*} & \text{if } i\in \K \setminus\{\pi^{i^*}(\ell_{i^*}+1)\}.\\
            \end{dcases}\label{Profits_Regime_3_H}
\end{align} 

Since the locations in $\lb 1,n\rb$ are ordered by their detection potentials, then we know that $p_1c_1 \leq \cdots \leq p_{i^*} c_{i^*} \leq p_j c_j$ for every $j \in \J$. Furthermore, we have the following inequality:
\begin{align}
p_{i^*}c_{i^*} \overset{\eqref{for_knapsack_regime_3_2}}{<}  p_{\pi^{i^*}(\ell_{i^*}+1)}\left( \rh -  \sum_{j=1}^{i^*}  c_{j}  - \sum_{j=1}^{\ell_{i^{*}}}  c_{\pi^{i^*}(j)} -  p_{i^*}c_{i^*}S^{i^*}_{\ell_{i^{*}}+2}\right).\label{Profits_Ineq_Regime_3_H}
\end{align}
Thus, the objects in $\J \cup\{\pi^{i^*}(\ell_{i^*}+1)\}$ are the most profitable, followed by the objects in $\{i^*\}\cup\K\setminus\{\pi^{i^*}(\ell_{i^*}+1)\}$ that have equal profit, followed by the objects in $\I\setminus\{i^*\}$.

From \eqref{for_knapsack_regime_3_0}, we know that $\rs > \ell_{i^*} + 1 = |\J \cup\{\pi^{i^*}(\ell_{i^*}+1)\}|$. Furthermore, an upper bound is given as follows:
\begin{align*}
\rs \overset{\eqref{ineq: technical inequality regime k_i > ell_i}}{\leq}\ell_{i^*} + 1 +  \sum_{j=\ell_{i^*} + 1}^{n-i^*} \frac{p_{\pi^{i^*}(\ell_{i^*}+1)}}{p_{\pi^{i^*}(j)}} \leq n-i^*+ 1 =|\{i^*\}\cup\J\cup \K|.
\end{align*}

Therefore, one best response to $\hmargv^{\Hide^*}$ will select all the objects in $\J \cup \{\pi^{i^*}(\ell_{i^*}+1)\}$ and will select any fraction of the objects in $\{i^*\}\cup\K\setminus\{\pi^{i^*}(\ell_{i^*}+1)\}$ until the knapsack is full. Hence, $\smargv^{\Seek^*}$ defined in \eqref{Regime_3_S} is a best response to $\hmargv^{\Hide^*}$.

Then, given $\smargv^{\Seek^*}$ satisfying \eqref{Regime_3_S}, the profit of each object in the knapsack problem \eqref{knapsack_2} is given by:
\begin{align}
\OneAndAHalfSpacedXI
 1-p_i\smargv_i^{\Seek^*} &=  
            \begin{dcases}
            1 &\text{if } i\in\I\setminus\{i^*\}\\
            1 - p_{i^*}(\rs - \ell_{i^*} - p_{\pi^*(\ell_{i^*}+1)}S_{\ell_{i^*}+1}^{i^*}) &\text{if } i=i^*\\
                1 - p_i &\text{if } i\in  \J\cup\{\pi^{i^*}(\ell_{i^*}+1)\},\\
               1 -  p_{\pi^{i^*}(\ell_{i^*}+1)} & \text{if } i\in \K\setminus\{\pi^{i^*}(\ell_{i^*}+1)\}.
            \end{dcases}\label{Profits_Regime_3_S}
\end{align} 

By definition of $\pi^{i^*}$, we have the following inequalities: $1 - p_{\pi^{i^*}(1)} \geq \cdots \geq 1 - p_{\pi^{i^*}(\ell_{i^*}+1)}$. Furthermore, \eqref{ineq: technical inequality regime k_i > ell_i} implies that $ 1 - p_{i^*}(\rs - \ell_{i^*} - p_{\pi^*(\ell_{i^*}+1)}S_{\ell_{i^*}+1}^{i^*}) \geq 1 - p_{\pi^{i^*}(\ell_{i^*}+1)}$.

Since \eqref{for_knapsack_regime_3_2} implies that $\rh > \sum_{j=1}^{i^*}c_{j} + \sum_{j=1}^{\ell_{i^*}} c_{\pi^{i^*}(j)} $, then one best response to $\smargv^{\Seek^*}$ selects all copies of the objects in $\I\cup\J$ and selects any fraction of the objects in $\K$ until the knapsack is full. Hence, $\hmargv^{\Hide^*}$ defined in \eqref{Regime_3_H} is a best response to $\smargv^{\Seek^*}$.

Thus, $(\smargv^{\Seek^*},\hmargv^{\Hide^*})$ is a NE of $\widetilde{\Gamma}$. From Proposition \ref{prop:equivalence}, we deduce that any strategy profile $(\sigma^{\Seek^*},\sigma^{\Hide^*})\in \Delta_{\Seek} \times \Delta_{\Hide}$ that satisfies $\smargv(\sigma^{\Seek^*}) = \smargv^{\Seek^*}$ and $\hmargv(\sigma^{\Hide^*}) = \hmargv^{\Hide^*}$ is a NE of $\Gamma$. Furthermore, the value of the games $\Gamma$ and $\widetilde{\Gamma}$ is given by:
\begin{align*}
U(\sigma^{\Seek^*},\sigma^{\Hide^*}) = \tilde{u}(\smargv^{\Seek^*},\hmargv^{\Hide^*})  = &\rh - p_{i^{*}} \left(\rs - \ell_{i^*} -p_{\pi^{i^*}(\ell_{i^{*}}+1)}S^{i^{*}}_{\ell_{i^{*}}+1} \right)c_{i^{*}} - \sum_{j=1}^{\ell_{i^{*}}} p_{\pi^{i^{*}}(j)}c_{\pi^{i^{*}}(j)}\\
    &\quad - p_{\pi^{i^*}(\ell_{i^{*}}+1)} \left( \rh - \sum_{j=1}^{i^{*}}c_{j} - \sum_{j=1}^{\ell_{i^{*}}}c_{\pi^{i^{*}}(j)} \right).
\end{align*}

\hfill \Halmos

\end{enumerate}

\end{proof}

\begin{proposition}
\label{prop:iff_conditions}

Let $\rs\in \lb 1,n-1\rb$ and  $\rh \in \lb 1,m-1\rb$. Let $i^{*}\in \lb 0, n-1\rb$ satisfying $\tau_{i^{*}-1} < \rh \leq \tau_{i^{*}}$. 
\begin{enumerate}
 \item[Regime Pattern 1: $\nu_{i^*} < \rh \leq \tau_{i^*}$.] Suppose also that $\rh < \tau_{i^*}$ and $\rs < k_{i^*}+1 + p_{\pi^{i^*}(k_{i^*}+1)}S_{k_{i^*}+2}^{i^*}$. Then, \eqref{Regime_1_S}-\eqref{Regime_1_H} are necessary and sufficient conditions for a strategy profile $(\smargv^{\Seek^*},\hmargv^{\Hide^*}) \in \widetilde{\A}_\Seek \times \widetilde{\A}_\Hide$ to be a NE of $\widetilde{\Gamma}$.

  \item[Regime Pattern 2: $i^* = 0$ and $\tau_{-1} < \rh \leq \nu_{0}$.] Suppose also that $p_{\pi^0(\ell_0)}<p_{\pi^0(\ell_0+1)}< 1$ and $p_{\pi^0(\ell_0+1)}< p_{\pi^0(\ell_0+2)}$ if $\ell_0 \leq n-2$. Then, \eqref{Regime_2_S}-\eqref{Regime_2_H} are necessary and sufficient conditions for a strategy profile $(\smargv^{\Seek^*},\hmargv^{\Hide^*}) \in \widetilde{\A}_\Seek \times \widetilde{\A}_\Hide$ to be a NE of $\widetilde{\Gamma}$.

   \item[Regime Pattern 3: $i^* \geq 1$ and $\tau_{i^*-1} < \rh \leq \nu_{i^*}$.] Suppose also that $p_{i^*-1} c_{i^*-1} < p_{i^*} c_{i^*} < p_{i^*+1} c_{i^*+1}$, $\rh < \sum_{j=1}^{i^*} c_{j} + \sum_{j=1}^{\ell_{i^*}+1} c_{\pi^{i^*}(j)} +  p_{i^*}c_{i^*}S^{i^*}_{\ell_{i^*}+2}$, $\rs < k_{i^*-1}+1 + p_{\pi^{i^*-1}(k_{i^*-1}+1)}S^{i^*-1}_{k_{i^*-1}+2}$, $p_{\pi^{i^*}(\ell_{i^*})}<p_{\pi^{i^*}(\ell_{i^*}+1)}< 1$, and $p_{\pi^{i^*}(\ell_{i^*}+1)}< p_{\pi^{i^*}(\ell_{i^*}+2)}$ if  $\ell_{i^*} \leq n -i^*-2$. Then, \eqref{Regime_3_S}-\eqref{Regime_3_H} are necessary and sufficient conditions for a strategy profile $(\smargv^{\Seek^*},\hmargv^{\Hide^*}) \in \widetilde{\A}_\Seek \times \widetilde{\A}_\Hide$ to be a NE of $\widetilde{\Gamma}$.

 \end{enumerate}

\end{proposition}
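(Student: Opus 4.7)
The plan is to prove necessity (Theorem~\ref{thm: NE disjoint monitoring sets} already provides sufficiency) by exploiting the knapsack characterization of best responses derived in the proof of that theorem. Specifically, in any NE $(\smargv^{\Seek^*}, \hmargv^{\Hide^*})$ of $\widetilde{\Gamma}$, the vector $\smargv^{\Seek^*}$ must solve the continuous knapsack \eqref{knapsack_1} with profits $p_i \hmargv_i^{\Hide^*}$, while $\hmargv^{\Hide^*}$ must solve the continuous knapsack \eqref{knapsack_2} with profits $1 - p_i \smargv_i^{\Seek^*}$. The additional strict inequalities in the statement strictify the key profit orderings and budget-slack bounds that were only weak in the sufficiency proof, thereby forcing a unique separation of the location classes $\I$, $\J$, $\K$ and pinning down the levels of fractional play.

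Concretely, for Regime Pattern 1 I would start from any NE $(\smargv^{\Seek^*}, \hmargv^{\Hide^*})$ and observe that $\rh < \tau_{i^*}$ (strict) makes the chain \eqref{Profits_Ineq_Regime_1_H} strict, so S's knapsack profits obey $\I < \K < \J$; combined with $\rs < k_{i^*}+1 + p_{\pi^{i^*}(k_{i^*}+1)}S^{i^*}_{k_{i^*}+2}$, S's optimality forces $\smargv_i^{\Seek^*} = 1$ on $\J$, $\smargv_i^{\Seek^*} = 0$ on $\I$, and fractional inspection on $\K$ totaling $\rs - k_{i^*}$. Plugging this back, the profit chain \eqref{Profits_Ineq_Regime_1_S} is automatically strict ($\K < \J < \I$), so H must saturate $\I \cup \J$ and fractionally fill $\K$ with total mass $\rh - \sum_{j=1}^{i^*} c_j - \sum_{j=1}^{k_{i^*}} c_{\pi^{i^*}(j)}$. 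The two coupled indifference conditions on the shared fractional support $\K$---S's knapsack forces $p_i \hmargv_i^{\Hide^*}$ constant on $\K$, H's knapsack forces $1 - p_i \smargv_i^{\Seek^*}$ constant on $\K$---then uniquely pin down the per-location values in \eqref{Regime_1_S}--\eqref{Regime_1_H}.

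For Regime Pattern 2, the strict inequalities $p_{\pi^0(\ell_0)} < p_{\pi^0(\ell_0+1)}$, $p_{\pi^0(\ell_0+1)} < p_{\pi^0(\ell_0+2)}$ (when $\ell_0 \leq n-2$), and $p_{\pi^0(\ell_0+1)} < 1$ play the same strictifying role. In any NE, H's knapsack profits must strictly separate $\J$, $\{\pi^0(\ell_0+1)\}$, and $\K \setminus \{\pi^0(\ell_0+1)\}$, forcing H to concentrate all mass on $\J \cup \{\pi^0(\ell_0+1)\}$ exactly as in \eqref{Regime_2_H}. For this H-profile to be optimal, S's undetection probabilities must satisfy $1 - p_i \smargv_i^{\Seek^*} \leq 1 - p_{\pi^0(\ell_0+1)}$ on $\K \setminus \{\pi^0(\ell_0+1)\}$, giving the bounds in \eqref{Regime_2_S}; conversely, S's knapsack against this $\hmargv^{\Hide^*}$ strictly prefers $\J \cup \{\pi^0(\ell_0+1)\}$, forcing deterministic inspection there. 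Regime Pattern 3 combines both arguments: the strict detection-potential inequality $p_{i^*-1}c_{i^*-1} < p_{i^*}c_{i^*} < p_{i^*+1}c_{i^*+1}$ singles out $i^*$ as the unique recipient of S's residual probability in $\I$; the strict bound on $\rh$ forces $\hmargv^{\Hide^*}_{\pi^{i^*}(\ell_{i^*}+1)} < c_{\pi^{i^*}(\ell_{i^*}+1)}$, compelling S to deterministically inspect $\pi^{i^*}(\ell_{i^*}+1)$; and the remaining strict $p$-value and $\rs$-bounds pin down \eqref{Regime_3_S}--\eqref{Regime_3_H} via the same coupled-indifference mechanism as in Regime Pattern 1.

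The main obstacle I anticipate is handling potential ties among detection rates within $\K$ (we do not assume strict ordering of $p_i$ there) and tracking complementary slackness cleanly across the two coupled knapsacks. When $p_i = p_j$ for $i,j \in \K$, per-location values are not uniquely determined by profit optimality alone; however, the formulas \eqref{Regime_1_S}--\eqref{Regime_3_H} depend only on $p_i$ and aggregate quantities, so they assign equal values to tied locations, and the aggregate mass on each tied group is forced by the sum constraints. Additional care will be needed at boundary cases (e.g., $\ell_{i^*} \in \{0, n-i^*\}$ or $k_{i^*}+1 = n-i^*$) to ensure the definitions of $\J$, $\K$, $S^{i^*}_{\cdot}$ remain well-posed and the knapsack-separation arguments continue to apply uniformly.
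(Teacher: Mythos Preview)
Your high-level plan is on the right track, but there is a circularity in the argument as written. You propose to start from an \emph{arbitrary} NE $(\smargv^{\Seek^*},\hmargv^{\Hide^*})$ and then invoke the profit orderings \eqref{Profits_Ineq_Regime_1_H} and \eqref{Profits_Ineq_Regime_1_S} (and their analogues in the other regimes). But those chains were derived in the proof of Theorem~\ref{thm: NE disjoint monitoring sets} for the \emph{specific} profile satisfying \eqref{Regime_1_S}--\eqref{Regime_1_H}; for a generic NE you do not yet know that $p_i\hmargv_i^{\Hide^*}$ is constant on $\K$, nor that the partition $\I,\J,\K$ coincides with the sets where $\smargv_i^{\Seek^*}$ equals $0$, $1$, or is fractional. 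So the step ``S's knapsack profits obey $\I<\K<\J$, hence $\smargv^{\Seek^*}$ must have the form \eqref{Regime_1_S}'' presupposes the structure you are trying to establish.

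The paper breaks this circularity via the interchangeability property of zero-sum equilibria: if $(\smargv^{\Seek^*},\hmargv^{\Hide^*})$ is the known NE from Theorem~\ref{thm: NE disjoint monitoring sets} and $(\smargv^{\Seek^\prime},\hmargv^{\Hide^\prime})$ is any other NE, then $(\smargv^{\Seek^\prime},\hmargv^{\Hide^*})$ and $(\smargv^{\Seek^*},\hmargv^{\Hide^\prime})$ are also NE. Thus $\smargv^{\Seek^\prime}$ is a best response to the \emph{known} $\hmargv^{\Hide^*}$, so the (now strict) profit ordering \eqref{Profits_Ineq_Regime_1_H} legitimately forces $\smargv_i^{\Seek^\prime}=0$ on $\I$, $\smargv_i^{\Seek^\prime}=1$ on $\J$, and $\sum_i\smargv_i^{\Seek^\prime}=\rs$. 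The fractional levels on $\K$ are then pinned down not by ``S's indifference'' but by the dual of \emph{H's} knapsack \eqref{knapsack_2}: since $\hmargv^{\Hide^*}$ is a best response to $\smargv^{\Seek^\prime}$ and $0<\hmargv_i^{\Hide^*}<c_i$ on $\K$, complementary slackness gives $\smargv_i^{\Seek^\prime}=(1-\alpha^*)/p_i$ on $\K$, and the budget equation fixes $\alpha^*$. The same trick (with the roles reversed, and in Regime~2 using a specially chosen $\smargv^{\Seek^*}$ with strict interior slack) handles $\hmargv^{\Hide^\prime}$ and the other regimes. Once you insert the interchangeability step and phrase the ``coupled indifference'' as complementary slackness of the opponent's LP, the rest of your outline goes through essentially as the paper's proof does; the tie-handling concern you raise does not arise because the duality argument bypasses per-location uniqueness.
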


\begin{proof}{\emph{Proof of Proposition \ref{prop:iff_conditions}.}}

\begin{enumerate}

\item[\emph{Regime Pattern 1:} $\nu_{i^*} < \rh \leq \tau_{i^*}$.] We additionally consider the following non-edge case assumptions: $\rh < \tau_{i^*}$ and $\rs < k_{i^*}+1 + p_{\pi^{i^*}(k_{i^*}+1)}S_{k_{i^*}+2}^{i^*}$.


Let $(\smargv^{\Seek^*},\hmargv^{\Hide^*}) \in \widetilde{\A}_\Seek \times \widetilde{\A}_\Hide$ satisfying \eqref{Regime_1_S}-\eqref{Regime_1_H} and let  $(\smargv^{\Seek^\prime},\hmargv^{\Hide^\prime}) \in \widetilde{\A}_\Seek \times \widetilde{\A}_\Hide$ be any NE of the game $\widetilde{\Gamma}$. Since $\widetilde{\Gamma}$ is a zero-sum game, then $(\smargv^{\Seek^\prime},\hmargv^{\Hide^*})$ is also a NE of $\widetilde{\Gamma}$.

Since $\smargv^{\Seek^\prime}$ is a best response to $\hmargv^{\Hide^*}$, it is an optimal solution to the continuous knapsack problem \eqref{knapsack_1}. The profits of each object are given by \eqref{Profits_Regime_1_H} and satisfy inequalities \eqref{Profits_Ineq_Regime_1_H}. Furthermore, $\rh < \tau_{i^*}$ implies that:
\begin{align*}
\frac{\rh - \sum_{j=1}^{i^{*}}c_{j} - \sum_{j=1}^{k_{i^{*}}}c_{\pi^{i^{*}}(j)} }{S^{i^{*}}_{k_{i^{*}}+1}} < p_{i^*+1}c_{i^*+1}.
\end{align*}
Since $|\J| < \rs \leq |\J| + |\K|$, then any best response to $\hmargv^{\Hide^*}$ must select all the objects in $\J$, must not select any object in $\I$, and must entirely fill the knapsack. Thus, $\smargv_i^{\Seek^\prime} = 0$ for every $i \in \I$, $\smargv_i^{\Seek^\prime} = 1$ for every $i \in \J$, and $\sum_{i=1}^n \smargv_i^{\Seek^\prime} = \rs$.

Next, since $\hmargv^{\Hide^*}$ is a best response to $\smargv^{\Seek^\prime}$, then it is an optimal solution to the continuous knapsack problem \eqref{knapsack_2}. Next, we write the dual of \eqref{knapsack_2} associated with $\smargv^{\Seek^\prime}$:
\begin{alignat}{3}
\OneAndAHalfSpacedXI
\min_{\alpha,\beta} \ & \rh \alpha + \sum_{i=1}^n c_i \beta_i\nonumber\\
\text{s.t.} \ & \alpha + \beta_i \geq 1 - p_i \smargv_i^{\Seek^\prime}, \quad && \forall \, i \in \lb 1,n\rb\label{knapsack_2_dual}\\
& \alpha \geq 0\nonumber\\
& \beta_i \geq 0, && \forall \, i \in \lb 1,n\rb.\nonumber
\end{alignat}
Let $(\alpha^*,\beta^*)$ be an optimal solution of the dual problem \eqref{knapsack_2_dual}. Since $0 < \hmargv^{\Hide^*}_i < c_i$ for every $i \in \K$, then by complementary slackness, $\beta_i^*= 0$ and $\smargv_i^{\Seek^\prime} =  (1-\alpha^*)/p_i$ for every $i \in \K$.
Since  $\smargv^{\Seek^\prime}$ must fill the knapsack \eqref{knapsack_1} entirely, then:
\begin{align*}
\rs = \sum_{i\in \I\cup\J\cup\K} \smargv_i^{\Seek^\prime} = k_{i^*} +(1- \alpha^*) S_{k_{i^*}+1}^{i^*}.
\end{align*}
Thus, for every $i \in \K$, $\smargv_i^{\Seek^\prime} =(\rs - k_{i^*})/(p_i S_{k_{i^*}+1}^{i^*})$. In conclusion, $\smargv^{\Seek^\prime}$ satisfies \eqref{Regime_1_S}.

Similarly, $(\smargv^{\Seek^*},\hmargv^{\Hide^\prime})$ is a NE of $\widetilde{\Gamma}$. Then, $\hmargv^{\Hide^\prime}$ is a best response to $\smargv^{\Seek^*}$ and is an optimal solution to the continuous knapsack problem \eqref{knapsack_2}. The profits of each object are given by \eqref{Profits_Regime_1_S} and satisfy inequalities \eqref{Profits_Ineq_Regime_1_S}. Since $\sum_{i \in \I\cup \J} c_i < \rh < m$, then any best response to $\smargv^{\Seek^*}$ must select all copies of the objects in $\I$ and $\J$. Therefore, $\hmargv_i^{\Hide^\prime} = c_i$ for every $i \in \I \cup \J$. Furthermore, $k_{i^*}+1 \notin K_{i^*}$ and the non-edge case assumptions imply that:
\begin{align}
1 - \frac{\rs - k_{i^*}}{S_{k_{i^*}+1}^{i^*}}\overset{\eqref{used_for_knapsack_3}}{>} 1 - p_{\pi^{i^*}(k_{i^*}+1)} \geq 0.\label{used_for_iff_1}
\end{align}
Therefore, $\hmargv^{\Hide^\prime}$ must entirely fill the knapsack and $\sum_{i=1}^n \hmargv_i^{\Hide^\prime} = \rh$. 

Next, since $\smargv^{\Seek^*}$ is a best response to $\hmargv^{\Hide^\prime}$, then it is an optimal solution to the continuous knapsack problem in \eqref{knapsack_1}. The dual of \eqref{knapsack_1} associated with $\hmargv^{\Hide^\prime}$ is given by:
\begin{alignat}{3}
\min_{\eta,\xi} \ & \rs \eta + \sum_{i=1}^n \xi_i\nonumber\\
\text{s.t.} \ & \eta + \xi_i \geq p_i \hmargv_i^{\Hide^\prime}, \quad && \forall \, i \in \lb 1,n\rb\label{knapsack_1_dual}\\
& \eta \geq 0,\nonumber\\
& \xi_i \geq 0, && \forall \, i \in \lb 1,n\rb.\nonumber
\end{alignat}
Let $(\eta^*,\xi^*)$ be an optimal solution of the dual problem \eqref{knapsack_1_dual}. Since \eqref{used_for_knapsack_4} and \eqref{used_for_iff_1} imply that $0 < \smargv^{\Seek^*}_i < 1$ for every $i \in \K$, then by complementary slackness, $\xi_i^*= 0$ and $\hmargv_i^{\Hide^\prime} =  \eta^*/p_i$ for every $i \in \K$.
%
Since  $\hmargv^{\Hide^\prime}$ must fill the knapsack \eqref{knapsack_2} entirely, then:
\begin{align*}
\rh = \sum_{i\in \I\cup\J\cup\K} \hmargv_i^{\Hide^\prime} = \sum_{j=1}^{i^*} c_j + \sum_{j=1}^{k_{i^*}} c_{\pi^{i^*}(j)}+\eta S_{k_{i^*}+1}^{i^*}.
\end{align*}
Therefore, $\hmargv^{\Hide^\prime}$ satisfies \eqref{Regime_1_H}.

\item[\emph{Regime Pattern 2:} $i^* = 0$ and $\tau_{-1} < \rh \leq \nu_{0}$.] We additionally consider the following non-edge case assumptions: $p_{\pi^0(\ell_0)}<p_{\pi^0(\ell_0+1)}< 1$ and $p_{\pi^0(\ell_0+1)}< p_{\pi^0(\ell_0+2)}$ if $\ell_0 \leq n-2$.


%
%

Let $\hmargv^{\Hide^*} \in \widetilde{\A}_\Hide$ satisfying \eqref{Regime_2_H} and let $\smargv^{\Seek^\prime}$ be an equilibrium strategy for S in $\widetilde{\Gamma}$. Since $\smargv^{\Seek^\prime}$ is a best response to $\hmargv^{\Hide^*}$, it is an optimal solution to the continuous knapsack problem \eqref{knapsack_1}. The profits of each object are given by \eqref{Profits_Regime_2_H}. Furthermore, \eqref{for_knapsack_regime_2_2} and the inequality $\rs > |\J| + 1$ imply that any best response to $\hmargv^{\Hide^*}$ must select all the objects in $\J \cup \{\pi^0(\ell_0+1)\}$. Therefore, $\smargv_i^{\Seek^\prime} = 1$ for every $i \in \J \cup \{\pi^0(\ell_0+1)\}$. 

Since $\hmargv^{\Hide^*}$ is a best response to $\smargv^{\Seek^\prime}$, then it is an optimal solution to the continuous knapsack problem \eqref{knapsack_2}. Since $0 \overset{\eqref{for_knapsack_regime_2_2}}{<} \hmargv^{\Hide^*}_{\pi^0(\ell_0+1)}$, then at optimality of the dual \eqref{knapsack_2_dual}, $\alpha^* = 1 - p_{\pi^0(\ell_0+1)} \smargv^{\Seek^\prime}_{\pi^0(\ell_0+1)} - \beta^*_{\pi^0(\ell_0+1)} = 1 - p_{\pi^0(\ell_0+1)} - \beta^*_{\pi^0(\ell_0+1)} \leq 1 - p_{\pi^0(\ell_0+1)}$.
%
%
%
%
Finally, for every $i \in \K\setminus \{\pi^0(\ell_0+1)\}$, $\hmargv^{\Hide^*}_{\pi^0(\ell_0+1)} = 0 < c_{\pi^0(\ell_0+1)}$. Thus, by complementary slackness, 
$\beta_i^* = 0$ and $\smargv_i^{\Seek^\prime} \geq  (1-\alpha^*)/p_i \geq p_{\pi^0(\ell_0+1)}/p_i$ for every $i \in \K\setminus \{\pi^0(\ell_0+1)\}$.
%
%
%
In conclusion, $\smargv^{\Seek^\prime}$ satisfies \eqref{Regime_2_S}.

Let $\hmargv^{\Hide^\prime}$ be an equilibrium strategy for H in $\widetilde{\Gamma}$, and consider $\smargv^{\Seek^*} \in \A_\Seek$ satisfying
 \begin{align*}
    \OneAndAHalfSpacedXI
        &\begin{aligned}
        \rho^{\Seek^*}_i &= 1 \quad &\text{if } i \in \J\cup\{ \pi^{0}(\ell_{0}+1) \},\\
        \frac{p_{\pi^{0}(\ell_{0}+1)}}{p_{i}} < \rho^{\Seek^*}_i & < 1  &\text{if } i \in \K\setminus \{ \pi^{0}(\ell_{0}+1) \}, \\
        \sum_{i=1}^{n} \rho^{\Seek^*}_i & < \rs. &\\
        \end{aligned}
        \end{align*}
Such a vector exists as a consequence of \eqref{for_knapsack_regime_2_1} and since $p_{\pi^{0}(\ell_{0}+1)} < p_{i}$ for every $i \in \K\setminus \{ \pi^{0}(\ell_{0}+1)\}$ under the non-edge case assumptions. Then, $\hmargv^{\Hide^\prime}$ is a best response to $\smargv^{\Seek^*}$ and is an optimal solution to the continuous knapsack problem \eqref{knapsack_2}. The profits of each object are given by: 
\begin{align*}
\OneAndAHalfSpacedXI
\forall \, i\in  \J\cup\{\pi^0(\ell_0+1)\}, \ 1-p_i\smargv_i^{\Seek^*} &=  1 - p_i\\
\forall \, i\in \K\setminus\{\pi^0(\ell_0+1)\}, \ 1-p_i\smargv_i^{\Seek^*} &<  1 -  p_{\pi^0(\ell_0+1)}.
\end{align*} 
Under the non-edge case assumptions, $1 - p_{\pi^0(1)} \geq \cdots \geq 1 - p_{\pi^0(\ell_0)} >  1 - p_{\pi^0(\ell_0+1)} > 0$. Since $ \sum_{j=1}^{\ell_0} c_{\pi^0(j)} < \rh < \sum_{j=1}^{\ell_0+1} c_{\pi^0(j)}$, then any best response to $\smargv^{\Seek^*}$ selects all copies of the objects in $\J$ and fills the remaining of the knapsack with objects in $\pi^0(\ell_0+1)$. Therefore $\hmargv^{\Hide^\prime}$  satisfies \eqref{Regime_2_H}.

\item[\emph{Regime Pattern 3:}  $i^* \geq 1$ and $\tau_{i^*-1} < \rh \leq \nu_{i^*}$.] We additionally consider the following non-edge case assumptions: $p_{i^*-1} c_{i^*-1} < p_{i^*} c_{i^*} < p_{i^*+1} c_{i^*+1}$, $\rh < \sum_{j=1}^{i^*} c_{j} + \sum_{j=1}^{\ell_{i^*}+1} c_{\pi^{i^*}(j)} +  p_{i^*}c_{i^*}S^{i^*}_{\ell_{i^*}+2}$, $\rs < k_{i^*-1}+1 + p_{\pi^{i^*-1}(k_{i^*-1}+1)}S^{i^*-1}_{k_{i^*-1}+2}$, $p_{\pi^{i^*}(\ell_{i^*})}<p_{\pi^{i^*}(\ell_{i^*}+1)}< 1$, and $p_{\pi^{i^*}(\ell_{i^*}+1)}< p_{\pi^{i^*}(\ell_{i^*}+2)}$ if  $\ell_{i^*} \leq n -i^*-2$.


Let $(\smargv^{\Seek^*},\hmargv^{\Hide^*}) \in \widetilde{\A}_\Seek \times \widetilde{\A}_\Hide$ satisfying \eqref{Regime_3_S}-\eqref{Regime_3_H} and let  $(\smargv^{\Seek^\prime},\hmargv^{\Hide^\prime}) \in \widetilde{\A}_\Seek \times \widetilde{\A}_\Hide$ be any NE of the game $\widetilde{\Gamma}$. 
Since $\smargv^{\Seek^\prime}$ is a best response to $\hmargv^{\Hide^*}$, it is an optimal solution to the continuous knapsack problem \eqref{knapsack_1}. The profits of each object are given by \eqref{Profits_Regime_3_H}. Under the non-edge case assumptions, $p_1 c_1 \leq \cdots \leq p_{i^*-1} c_{i^*-1} < p_{i^*} c_{i^*} < p_{i^*+1} c_{i^*+1} \leq \cdots \leq p_n c_n$. From \eqref{Profits_Ineq_Regime_3_H} and the fact that $|\J|+1<\rs \leq |\J| + |\K| +1$, we deduce that any best response to $\hmargv^{\Hide^*}$ selects all the objects in $\J \cup\{\pi^{i^*}(\ell_{i^*}+1)\}$, does not select any object in $\I \setminus \{i^*\}$, and fills the knapsack \eqref{knapsack_1} entirely. Therefore, $\smargv_i^{\Seek^\prime} = 0$ for every $i \in \I \setminus \{i^*\}$, $\smargv_i^{\Seek^\prime} = 1$ for every $i \in \J \cup\{\pi^{i^*}(\ell_{i^*}+1)\}$. and $\sum_{i=1}^n \smargv_i^{\Seek^\prime} = \rs$.

Since $\hmargv^{\Hide^*}$ is a best response to $\smargv^{\Seek^\prime}$, then it is an optimal solution to the continuous knapsack problem \eqref{knapsack_2}. Under the non-edge case assumptions, $0 < \hmargv_{\pi^{i^*}(\ell_{i^*}+1)}^{\Hide^*} < c_{\pi^{i^*}(\ell_{i^*}+1)}$. Thus, at optimality of the dual \eqref{knapsack_2_dual}, $\alpha^* = 1 - p_{\pi^{i^*}(\ell_{i^*}+1)} \smargv^{\Seek^\prime}_{\pi^{i^*}(\ell_{i^*}+1)} = 1 - p_{\pi^{i^*}(\ell_{i^*}+1)}$. Furthermore, since $0 < \hmargv_i^{\Hide^*} < c_i$  for every $i \in \K \setminus\{\pi^{i^*}(\ell_{i^*}+1)\}$ under the non-edge case assumptions, then $\smargv^{\Seek^\prime}_i = (1-\alpha^*)/p_i = p_{\pi^{i^*}(\ell_{i^*}+1)}/p_i$. Since $\smargv^{\Hide^\prime}$ must fill the knapsack \eqref{knapsack_1} entirely, then:
\begin{align*}
\smargv^{\Seek^\prime}_{i^*} = \rs - \ell_{i^*} - 1 - p_{\pi^{i^*}(\ell_{i^*}+1)} S_{\ell_{i^*}+2}^{i^*}  = \rs - \ell_{i^*} - p_{\pi^{i^*}(\ell_{i^*}+1)} S_{\ell_{i^*}+1}^{i^*}.
\end{align*}
Therefore, $\smargv^{\Seek^\prime}$ satisfies \eqref{Regime_3_S}.

Similarly, $(\smargv^{\Seek^*},\hmargv^{\Hide^\prime})$ is a NE of $\widetilde{\Gamma}$. Then, $\hmargv^{\Hide^\prime}$ is a best response to $\smargv^{\Seek^*}$ and is an optimal solution to the continuous knapsack problem \eqref{knapsack_2}. The profits of each object are given by \eqref{Profits_Regime_3_S}. Under the non-edge case assumptions, $1 - p_{\pi^{i^*}(1)} \geq \cdots \geq 1 - p_{\pi^{i^*}(\ell_{i^*})} > 1 - p_{\pi^{i^*}(\ell_{i^*}+1)} > 0$.
We next show that:
\begin{align}
 \rs - \ell_{i^*} - p_{\pi^{i^*}(\ell_{i^*}+1)} S^{i^*}_{\ell_{i^*} + 1} < \min\left\{ \frac{ p_{\pi^{i^*}(\ell_{i^*}+1)} }{  p_{i^*} },\,1 \right\}. \label{ineq2: technical inequality regime k_i > ell_i}
 \end{align}
 
 Let us assume that \eqref{ineq2: technical inequality regime k_i > ell_i} does not hold and let $j^* \in \lb 1, n-i^*+1\rb$ satisfying $\pi^{i^*-1}(j^*) = i^*$. If $\ell_{i^*}+1 \leq j^*-1$, then \eqref{ineq: technical inequality regime k_i > ell_i} implies that $\rs = \ell_{i^*} + 1 + p_{\pi^{i^*-1}(\ell_{i^*}+1)} S^{i^*-1}_{\ell_{i^*} + 2}$, which contradicts the non-edge case assumption.
 If on the other hand $\ell_{i^*}+1 \geq j^*$, then $j^* < \ell_{i^*}+2 \leq n - i^* +1$ and $\rs = \ell_{i^*} + 2 + p_{\pi^{i^*-1}(\ell_{i^*}+2)} S^{i^*-1}_{\ell_{i^*} + 3}$, which also contradicts the non-edge case assumptions.
Therefore, \eqref{ineq2: technical inequality regime k_i > ell_i} holds.

Since $\sum_{i \in \I\cup \J} c_i < \rh$, then any best response to $\smargv^{\Seek^*}$ must select all copies of the objects in $\I$ and $\J$, and must fill the knapsack \eqref{knapsack_2} entirely. Therefore, $\hmargv_i^{\Hide^\prime} = c_i$ for every $i \in \I \cup \J$ and $\sum_{i=1}^n \hmargv_i^{\Hide^\prime} = \rh$.
Since $\smargv^{\Seek^*}$ is a best response to $\hmargv^{\Hide^\prime}$, then it is an optimal solution to the continuous knapsack problem \eqref{knapsack_1}. Under the non-edge case assumptions, $0 < \smargv^{\Seek^*}_i < 1$ for every $i \in \{i^*\} \cup\K\setminus\{\pi^{i^*}(\ell_{i^*}+1)\}$. Therefore, at optimality of the dual \eqref{knapsack_1_dual}, $\eta^* = p_{i^*} \hmargv^{\Hide^\prime}_{i^*} = p_{i^*} c_{i^*}$, and $\hmargv^{\Hide^\prime}_{i^*} = \eta^*/p_{i} = p_{i^*}c_{i^*}/p_i$ for every $i \in \K\setminus\{\pi^{i^*}(\ell_{i^*}+1)\}$. Finally, since $\hmargv^{\Hide^\prime}$ fills the knapsack \eqref{knapsack_2} entirely, then:
\begin{align*}
\hmargv^{\Hide^\prime}_{\pi^{i^*}(\ell_{i^*}+1)} = \rh - \sum_{j=1}^{i^*} c_j - \sum_{j=1}^{\ell_{i^*}} c_{\pi^{i^*}(j)}- p_{i^*}c_{i^*} S_{\ell_{i^*}+2}^{i^*}.
\end{align*}
In conclusion, $\hmargv^{\Hide^\prime}$ satisfies \eqref{Regime_3_H}.\hfill\Halmos

\end{enumerate}

\end{proof}


\begin{proof}{\emph{Proof of Proposition \ref{prop:iff}.}}
In this proof, we allow the vector of capacities $c$ and the players’ resources $\rs$ and $\rh$ to be continuous in the game $\widetilde{\Gamma}$. Let $\Psi$ be the set of parameters given by \eqref{set_Psi} for which $\widetilde{\Gamma}$ is nontrivial.
First, we note that 
$$\Psi^{\prime}\coloneqq \left\{(n,p,c,\rs,\rh) \in \Psi \, : \, p_i < 1 \ \forall \, i \in \lb1,n\rb, \ p_i \neq p_j \text{ and } p_i c_i \neq p_j c_j \ \forall \, i \neq j \in \lb1,n\rb\right\}$$ 
is a dense subset of $\Psi$. Next, we consider an instantiation of the game parameters $(n,p,c,\rs,\rh) \in \Psi^{\prime}$. We order the indices such that $p_i c_i < p_{i+1} c_{i+1}$ for every $i \in \lb 1,n-1\rb$. Let $i^* \in \lb 0,n-1\rb$ such that $\tau_{i^{*}-1} < \rh \leq \tau_{i^{*}}$.

We first consider the case of Regime Pattern 1, i.e., $\nu_{i^*} < \rh \leq \tau_{i^*}$. We then consider new player resources $\hrs = \rs - \varepsilon$ and $\hrh = \rh - \varepsilon$ for $\varepsilon > 0$ arbitrarily small. To avoid confusion, we denote the corresponding parameters that depend on $\hrs$ and $\hrh$ as $\hat{k}_i$, $\hat{\ell}_i$, $\hat{\tau}_i$, $\hat{\nu}_i$, and $\hat{i}^*$.

By definition of $k_{i^*}$, and for arbitrarily small $\varepsilon$, we obtain:
\begin{align*}
k_{i^*} + p_{\pi^{i^*}(k_{i^*})} S_{k_{i^*}+1}^{i^*}  <  \hrs < \rs &\leq k_{i^*} + 1 + p_{\pi^{i^*}(k_{i^*}+1)} S_{k_{i^*}+2}^{i^*}.
\end{align*}
Thus, $\hat{k}_{i^*} = k_{i^*}$. This implies that $\hat{\tau}_{i^*} = \tau_{i^*}$ and $\hat{\nu}_{i^*} = \nu_{i^*}$. 
We then deduce the following inequalities for arbitrarily small $\varepsilon$: $\hat{\nu}_{i^*}=\nu_{i^*} < \hrh < \rh  \leq \tau_{i^*} =\hat{\tau}_{i^*}.$ Thus, $\hat{i}^* = i^*$.

Since $\nu_{i^*}< \hrh <{\tau}_{i^*}$ and $\hrs < k_{i^*}+1 + p_{\pi^{i^*}(k_{i^*}+1)}S_{k_{i^*}+2}^{i^*}$, then Proposition \ref{prop:iff_conditions} implies that all pure NE of the game $\widetilde{\Gamma}$ with the parameters $(n,p,c,\hrs,\hrh)$ for arbitrarily small $\varepsilon > 0$ satisfy the corresponding equilibrium conditions \eqref{Regime_1_S}-\eqref{Regime_1_H}. Furthermore $(n,p,c,\hrs,\hrh)$ is arbitrarily close to $(n,p,c,\rs,\rh)$.

We next consider the case of Regime Pattern 2, i.e., $i^* = 0$ and $\rh \leq \nu_0$. Proposition \ref{prop:iff_conditions} implies that all pure NE of the game $\widetilde{\Gamma}$ with the parameters $(n,p,c,\hrs,\hrh) \in \Psi^{\prime}$ satisfy the corresponding equilibrium conditions \eqref{Regime_2_S}-\eqref{Regime_2_H}.

Finally, we consider the case of Regime Pattern 3, i.e., $i^* \geq 1$ and $\tau_{i^*-1} < \rh \leq \nu_{i^*}$. We then consider new player resources $\hrs = \rs - \varepsilon$ and $\hrh = \rh - \varepsilon$ for $\varepsilon > 0$ arbitrarily small. Similarly, we denote the corresponding auxiliary parameters as $\hat{k}_i$, $\hat{\ell}_i$, $\hat{\tau}_i$, $\hat{\nu}_i$, and $\hat{i}^*$.

Using a similar derivation as above, we deduce that for arbitrarily small $\varepsilon$, $\hat{k}_{i^*} = k_{i^*}$ and $\hat{\nu}_{i^*} = \nu_{i^*}$. Then, by definition of $k_{i^*-1}$, we obtain:
\begin{align*}
k_{i^*-1} + p_{\pi^{i^*-1}(k_{i^*-1})} S_{k_{i^*-1}+1}^{i^*-1}  <  \hrs < \rs &\leq k_{i^*-1} + 1 + p_{\pi^{i^*-1}(k_{i^*-1}+1)} S_{k_{i^*-1}+2}^{i^*-1}.
\end{align*}

Thus, $\hat{k}_{i^*-1} = k_{i^*-1}$ and $\hat{\tau}_{i^*-1} = \tau_{i^*-1}$. Then, we obtain that $\hat{i}^* = i^*$ since $\hat{\tau}_{i^*-1}=\tau_{i^*-1} < \hrh < \rh  \leq \nu_{i^*} =\hat{\nu}_{i^*}$.
Finally, by definition of $\ell_{i^*}$, we obtain:
\begin{align*}
\sum_{j=1}^{i^*} c_{j} + \sum_{j=1}^{\ell_{i^*}} c_{\pi^{i^*}(j)} +  p_{i^*}c_{i^*}S^{i^*}_{\ell_{i^*}+1} <  \hrh < \rh &\leq \sum_{j=1}^{i^*} c_{j} + \sum_{j=1}^{\ell_{i^*}+1} c_{\pi^{i^*}(j)} +  p_{i^*}c_{i^*}S^{i^*}_{\ell_{i^*}+2}.
\end{align*}

Thus, $\hat{\ell}_{i^*} = \ell_{i^*}$. Since $\rh < \sum_{j=1}^{i^*} c_{j} + \sum_{j=1}^{\ell_{i^*}+1} c_{\pi^{i^*}(j)} +  p_{i^*}c_{i^*}S^{i^*}_{\ell_{i^*}+2}$ and $\rs < k_{i^*-1}+1 + p_{\pi^{i^*-1}(k_{i^*-1}+1)}S^{i^*-1}_{k_{i^*-1}+2}$, then Proposition \ref{prop:iff_conditions} implies that all pure NE of the game $\widetilde{\Gamma}$ with the parameters $(n,p,c,\hrs,\hrh)$ for arbitrarily small $\varepsilon > 0$ satisfy the corresponding equilibrium conditions \eqref{Regime_3_S}-\eqref{Regime_3_H}. Furthermore $(n,p,c,\hrs,\hrh)$ is arbitrarily close to $(n,p,c,\rs,\rh)$.\hfill\Halmos
%
%
%
\end{proof}

\section{Proofs of Section \ref{sec: equilibrium computation}}

Before proving Theorem \ref{thm:algorithm}, we show that Algorithm \ref{alg: construction of prob distribution from marginals} is well defined and terminates. We denote as $\kappa^* \in \Z_{\geq 0}\cup\{+\infty\}$ the number of iterations of the while loop (\ref{alg:while_start}-\ref{alg:while_end}) in Algorithm~\ref{alg: construction of prob distribution from marginals}.

\begin{proposition}
\label{prop:alg_well_defined}
Each iteration of Algorithm \ref{alg: construction of prob distribution from marginals} is well defined. In particular,
\begin{align}
\forall \, k \in \lb 1,\kappa^*+1\rb,& \ \bar{\gmargv}^k \in [0,1]^n \  \text{ and } \ \sum_{i=1}^n \bar{\gmargv}_i^k \leq \bar{\gres}, \label{alg_eq:feasibility1}\\
\forall \, k \in \lb 1,\kappa^*\rb,& \ q^k \in \lb 1,n\rb \  \text{ and } \ \delta^k \in [0,1). \label{alg_eq:feasibility2}
\end{align}
\end{proposition}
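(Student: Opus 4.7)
{\emph{Proof Plan.}}
The plan is to prove both claims simultaneously by induction on $k$. I would first check that $\bar{\gres}$ is a nonnegative integer, using $\bar{\gres} = \gres - \sum_{i=1}^n \floor*{\gmargv_i}$ together with $\gmargv \in \widetilde{\A}(\gcap,\gres)$; in particular $\sum_{i=1}^n \gmargv_i \leq \gres$ forces $\bar{\gres}\geq 0$, and integrality is immediate. For the base case $k=1$, the vector $\bar{\gmargv}^1 = \gmargv - \floor*{\gmargv}$ lies in $[0,1)^n$ componentwise, and the sum bound follows from $\sum_{i=1}^n \bar{\gmargv}^1_i = \sum_{i=1}^n \gmargv_i - \sum_{i=1}^n \floor*{\gmargv_i} \leq \bar{\gres}$. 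If the while loop is never entered then there is nothing more to prove, so assume $\kappa^*\geq 1$.

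Next I would carry out the inductive step. Fix $k\in \lb 1,\kappa^*\rb$ and assume $\bar{\gmargv}^k$ satisfies \eqref{alg_eq:feasibility1}. To see $q^k\in\lb 1,n\rb$, I would argue as follows: since $\bar{\gmargv}^k \notin \{0,1\}^n$ (loop condition), there is at least one index $j$ with $\bar{\gmargv}^k_j\in(0,1)$; then $\sum_{i=1}^n \bar{\gmargv}^k_i > 0$ and, combined with the sum bound and the integrality of $\bar{\gres}$, this forces $\bar{\gres}\geq 1$; the cardinality $|\{i:\bar{\gmargv}^k_i>0\}|$ is also at least $1$ and at most $n$, so $q^k\in\lb 1,n\rb$. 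The key technical step is then to prove that $\bar{\gmargv}^k_{\theta^k(q^k)}\in(0,1)$. Positivity follows because the first $q^k$ entries in the sorted order are all positive (by choice of $q^k\leq |\{i:\bar{\gmargv}^k_i>0\}|$). Strict upper bound: if instead $\bar{\gmargv}^k_{\theta^k(q^k)}=1$, then by monotonicity all the first $q^k$ entries equal $1$, contributing $q^k$ to the sum. In the sub-case $q^k=|\{i:\bar{\gmargv}^k_i>0\}|$ this would make $\bar{\gmargv}^k\in\{0,1\}^n$, contradicting the loop condition; in the sub-case $q^k=\bar{\gres}<|\{i:\bar{\gmargv}^k_i>0\}|$ it would make $\sum_{i=1}^n \bar{\gmargv}^k_i>q^k=\bar{\gres}$, contradicting the induction hypothesis.

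From this I would deduce $\delta^k\in(0,1)$ by case analysis on the defining $\min$: the first argument $\bar{\gmargv}^k_{\theta^k(q^k)}$ lies in $(0,1)$; the second argument $1-\bar{\gmargv}^k_{\theta^k(q^k+1)}$, when present, lies in $(0,1]$ since $\bar{\gmargv}^k_{\theta^k(q^k+1)}\leq \bar{\gmargv}^k_{\theta^k(q^k)}<1$ and this quantity is strictly less than $1$ exactly when the $(q^k+1)$-th sorted entry is positive.

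Finally, I would verify that $\bar{\gmargv}^{k+1}$ still satisfies \eqref{alg_eq:feasibility1}. Splitting coordinates according to whether $i=\theta^k(j)$ with $j\leq q^k$ or $j>q^k$, the identity $\bar{\gmargv}^{k+1}=(\bar{\gmargv}^k-\delta^k e^k)/(1-\delta^k)$ shows nonnegativity using $\bar{\gmargv}^k_i\geq \bar{\gmargv}^k_{\theta^k(q^k)}\geq \delta^k$ for $j\leq q^k$, and the upper bound by $1$ using either $\bar{\gmargv}^k_i\leq 1$ (for $j\leq q^k$) or $\bar{\gmargv}^k_i\leq \bar{\gmargv}^k_{\theta^k(q^k+1)}\leq 1-\delta^k$ (for $j>q^k$). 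For the sum bound I would compute $\sum_i \bar{\gmargv}^{k+1}_i = (\sum_i \bar{\gmargv}^k_i - \delta^k q^k)/(1-\delta^k)$ and check that this is at most $\bar{\gres}$; when $q^k=\bar{\gres}$ this reduces to the inductive hypothesis, and when $q^k<\bar{\gres}$ it follows from $\sum_i \bar{\gmargv}^k_i\leq q^k$ (since only $q^k$ entries are positive, each at most $1$) combined with $q^k\leq \bar{\gres}$. The hardest part of the argument is establishing $\bar{\gmargv}^k_{\theta^k(q^k)}<1$; everything else is routine bookkeeping once that is in place.\hfill\Halmos
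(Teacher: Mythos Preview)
Your proposal is correct and follows essentially the same inductive approach as the paper: the base case, the contradiction argument for the strict upper bound on $\bar{\gmargv}^k_{\theta^k(q^k)}$ (equivalently $\delta^k<1$) with the same case split on $q^k$, and the verification that $\bar{\gmargv}^{k+1}$ stays feasible all match. The only minor difference is that you also establish $\delta^k>0$ here, which the paper defers to the subsequent termination proposition; this is harmless and in fact slightly streamlines the overall argument.
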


\begin{proof}{\emph{Proof of Proposition \ref{prop:alg_well_defined}.}} We show \eqref{alg_eq:feasibility1} and \eqref{alg_eq:feasibility2} by induction. We first consider $k=1$. By construction, $\bar{\gmargv}^1 = \gmargv - \floor*{\gmargv} \in [0,1]^n$. Furthermore, by definition of $\widetilde{\A}(\gcap,\gres)$, we obtain:
\begin{align}
\bar{\gres} = \gres - \sum_{i=1}^n \floor*{\gmargv_i} = \gres - \sum_{i=1}^n \gmargv_i +  \sum_{i=1}^n\bar{\gmargv}^1_i \geq  \sum_{i=1}^n\bar{\gmargv}^1_i. \label{maintain_feasibility}
\end{align}

Next, $q^1$ is constructed when the algorithm initiates the while loop (\ref{alg:while_start}-\ref{alg:while_end}), that is, when $\bar{\gmargv}^1 \notin \{0,1\}^n$. Since $\bar{\gmargv}^1 \geq \boldsymbol{0}_n$ and $\bar{r} \in \Z$, then $1 \leq \left| \left\{ i \in \lb 1,n\rb:\, \bar{\gmargv}^1_i > 0   \right\} \right| \leq n$ and $1 \leq \lceil \sum_{i=1}^n\bar{\gmargv}^1_i  \rceil \overset{\eqref{maintain_feasibility}}{\leq} \bar{\gres}$. Therefore, $1 \leq q^1 \leq n$.
Finally, $\delta^1$ is well defined since $q^1 \in \lb 1,n\rb$, and $\delta^1 \in [0,1]$ as a consequence of $\bar{\gmargv}^1 \in [0,1]^n$. We next show by contradiction that $\delta^1 < 1$. Indeed, if $\delta^1 = 1$, then we first deduce that for every $j \in \lb 1,q^1\rb$, $1 \geq \bar{\gmargv}^{1}_{\theta^1(j)} \geq \bar{\gmargv}^{1}_{\theta^1(q^1)} \geq \delta^1 = 1$. If $q^1 = n$, then this contradicts $\bar{\gmargv}^1 \notin \{0,1\}^n$. If $q^1 < n$, then, we derive the following inequalities:
%
%
%
%
\begin{align}
q^1 = \sum_{j=1}^{q^1} \bar{\gmargv}^1_{\theta^1(j)} \leq \sum_{i=1}^n \bar{\gmargv}^{1}_i \overset{\eqref{maintain_feasibility}}{\leq} \bar{\gres},\label{disjunction_1}\\
q^1 \leq \left| \left\{ i \in \lb 1,n\rb:\, \bar{\gmargv}^1_i =1   \right\} \right| \leq \left| \left\{ i \in \lb 1,n\rb:\, \bar{\gmargv}^1_i > 0   \right\} \right|.\label{disjunction_2}
\end{align}
If $q^1 = \bar{\gres}$, (resp. $q^1 = \left| \left\{ i \in \lb 1,n\rb:\, \bar{\gmargv}^1_i > 0   \right\} \right|$) then \eqref{disjunction_1} (resp. \eqref{disjunction_2}) implies that $ \bar{\gmargv}^{1}_{\theta^1(j)} = 0$ for every $j \in \lb q^1+1,n\rb$. This also contradicts $\bar{\gmargv}^1 \notin \{0,1\}^n$. Thus, $\delta^1 < 1$.


Next, we assume that  \eqref{alg_eq:feasibility1} and  \eqref{alg_eq:feasibility2} hold for $k \in \lb 1,\kappa^*\rb$. Since $\delta^k < 1$, then we obtain:
\begin{align*}
&\forall \, j \in \lb 1,q^k\rb,  \ 0 \leq \frac{\bar{\gmargv}^{k}_{\theta^k(j)} - \bar{\gmargv}^{k}_{\theta^k(q^k)}}{1-\delta^k} \leq \frac{\bar{\gmargv}^{k}_{\theta^k(j)} - \delta^k}{1-\delta^k} = \bar{\gmargv}^{k+1}_{\theta^k(j)}  \leq \frac{1- \delta^k}{1-\delta^k} = 1,\\
\text{and if $q^k < n$, then } &\forall j \in \lb q^k+1,n\rb,  \ 0 \leq  \frac{\bar{\gmargv}^{k}_{\theta^k(j)}}{1-\delta^k} = \bar{\gmargv}^{k+1}_{\theta^k(j)}  \leq \frac{\bar{\gmargv}^{k}_{\theta^k(j)}}{\bar{\gmargv}^{k}_{\theta^k(q^k+1)}} \leq 1.
\end{align*}


Therefore, for every $i \in \lb 1,n\rb$, $\bar{\gmargv}^{k+1}_i \in [0,1]$. Next, we show that $\sum_{i=1}^n \bar{\gmargv}_i^{k+1} \leq \bar{\gres}$: 
%
%
\begin{align*}
\sum_{i=1}^n \bar{\gmargv}^{k+1}_i &=  \sum_{j=1}^{q^k} \frac{\bar{\gmargv}^{k}_{\theta^k(j)} - \delta^k}{1-\delta^k} + \sum_{j=q^k+1}^n  \frac{\bar{\gmargv}^{k}_{\theta^k(j)}}{1-\delta^k} = \frac{1}{1-\delta^k}\left(\sum_{i=1}^n \bar{\gmargv}^{k}_i - q^k\delta^k\right).
\end{align*}
If $q^k = \bar{\gres}$, then:
\begin{align*}
\sum_{i=1}^n \bar{\gmargv}^{k+1}_i & \leq \frac{1}{1-\delta^k}\left(\bar{\gres} - \bar{\gres}\delta^k\right) = \bar{\gres}.
\end{align*}
If on the other hand $q^k = \left| \left\{ i \in \lb 1,n\rb:\, \bar{\gmargv}^k_i > 0   \right\} \right|$, then $\bar{\gmargv}^k \in [0,1]^n$ implies that:
\begin{align*}
\sum_{i=1}^n \bar{\gmargv}^{k+1}_i & = \frac{1}{1-\delta^k}\left(\sum_{i=1}^n \bar{\gmargv}^{k}_i - \left| \left\{ i \in \lb 1,n\rb:\, \bar{\gmargv}^k_i > 0   \right\} \right|\delta^k\right) \leq \frac{1}{1-\delta^k}\left(\sum_{i=1}^n \bar{\gmargv}^{k}_i - \delta^k\sum_{\{i \in\lb 1,n\rb \, : \, \bar{\gmargv}^{k}_i > 0\}}\bar{\gmargv}^{k}_i\right) \\
&= \sum_{i=1}^n \bar{\gmargv}^{k}_i \leq \bar{\gres}.
\end{align*}
Therefore, $\sum_{i=1}^n \bar{\gmargv}^{k+1}_i \leq \bar{\gres}$. 
Since $\bar{\gmargv}^{k+1} \in  [0,1]^n$, then the same argument as the one derived for $k=1$ can be applied to conclude that if $k < \kappa^*$ and $\bar{\gmargv}^{k+1} \notin \{0,1\}^n$, then $q^{k+1} \in \lb 1,n\rb$ and $\delta^{k+1}  \in [0,1)$. In conclusion, \eqref{alg_eq:feasibility1} and \eqref{alg_eq:feasibility2} hold by induction.
\hfill\Halmos
\end{proof}

\begin{proposition}
\label{prop:alg_terminates}
Algorithm \ref{alg: construction of prob distribution from marginals} terminates after $\kappa^* \leq n$ iterations of the while loop (\ref{alg:while_start}-\ref{alg:while_end}). In particular, for every $k \in \lb1,\kappa^*\rb$, $\delta^k > 0$, and
\begin{align*}
\left| \left\{ i \in \lb 1,n\rb:\, \bar{\gmargv}^{k+1}_i \in \{0,1\}   \right\} \right| > \left| \left\{ i \in \lb 1,n\rb:\, \bar{\gmargv}^{k}_i \in \{0,1\}   \right\} \right|.
\end{align*}
\end{proposition}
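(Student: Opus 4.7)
The plan is to prove three things: (i) $\delta^k \in (0,1)$ for every $k \in \lb 1,\kappa^*\rb$, (ii) at each iteration the count of integer components of $\bar{\gmargv}^k$ strictly increases, and (iii) consequently $\kappa^* \leq n$. Statement (iii) is immediate from (ii): by Proposition~\ref{prop:alg_well_defined}, $\bar{\gmargv}^k \in [0,1]^n$, so after at most $n$ iterations all components are integer and the while loop terminates. Thus the core work lies in establishing (i) and (ii).

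For (i), Proposition~\ref{prop:alg_well_defined} already gives $q^k \in \lb 1,n\rb$ and $\delta^k \in [0,1)$, so I only need to rule out $\delta^k = 0$. I would split into two cases. When $q^k = n$, we have $\delta^k = \bar{\gmargv}^k_{\theta^k(n)}$, and $n = q^k \leq |\{i : \bar{\gmargv}^k_i > 0\}|$ forces every component of $\bar{\gmargv}^k$ to be strictly positive, so $\delta^k > 0$. When $q^k < n$, $\delta^k = \min\{\bar{\gmargv}^k_{\theta^k(q^k)},\, 1 - \bar{\gmargv}^k_{\theta^k(q^k+1)}\}$; the first term is positive by the same argument as above, while the second is proved positive by contradiction: if $\bar{\gmargv}^k_{\theta^k(q^k+1)} = 1$, the non-increasing sort forces $\bar{\gmargv}^k_{\theta^k(j)} = 1$ for all $j \in \lb 1,q^k+1\rb$, yielding both $|\{i : \bar{\gmargv}^k_i > 0\}| \geq q^k+1$ and, via the sum bound of Proposition~\ref{prop:alg_well_defined}, $\bar{\gres} \geq q^k+1$; but then $q^k = \min\{\bar{\gres},|\{i : \bar{\gmargv}^k_i > 0\}|\} \geq q^k+1$, a contradiction.

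For (ii), I would first check that integer components of $\bar{\gmargv}^k$ remain integer after the update. Indices $i$ with $\bar{\gmargv}^k_i = 0$ must correspond to positions $\theta^k(j)$ with $j > q^k$ (since $q^k \leq |\{i : \bar{\gmargv}^k_i > 0\}|$ and the sort is non-increasing), so the update $\bar{\gmargv}^{k+1}_i = \bar{\gmargv}^k_i/(1-\delta^k)$ preserves the value $0$. Indices $i$ with $\bar{\gmargv}^k_i = 1$ must appear at positions $j \leq q^k$, so $\bar{\gmargv}^{k+1}_i = (1-\delta^k)/(1-\delta^k) = 1$. Then I exhibit at least one newly integer component using the definition of $\delta^k$: if $\delta^k = \bar{\gmargv}^k_{\theta^k(q^k)}$, index $\theta^k(q^k)$ becomes $0$ after the update; if $\delta^k = 1 - \bar{\gmargv}^k_{\theta^k(q^k+1)}$ (only possible when $q^k < n$), index $\theta^k(q^k+1)$ becomes $1$. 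In either case, part (i) shows that this coordinate was previously in $(0,1)$, hence a strictly new integer is produced.

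The main obstacle will be the careful bookkeeping in part (i): tracking which positions of $\theta^k$ can equal $1$, and correctly combining the definition $q^k = \min\{\bar{\gres},|\{i : \bar{\gmargv}^k_i > 0\}|\}$ with the sum bound $\sum_i \bar{\gmargv}^k_i \leq \bar{\gres}$ to derive a contradiction. Once (i) and (ii) are in place, (iii) and termination follow immediately.
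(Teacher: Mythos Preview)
Your proposal is correct and follows essentially the same approach as the paper's proof: both argue $\delta^k>0$ by splitting on $q^k=n$ versus $q^k<n$ (the latter handled by a contradiction showing $\bar{\gmargv}^k_{\theta^k(q^k+1)}<1$), then show that integer components of $\bar{\gmargv}^k$ are preserved by the update while the index achieving the minimum in $\delta^k$ produces a new $0$ or $1$. The only step you leave slightly implicit---that an index $i$ with $\bar{\gmargv}^k_i=1$ must sit at a position $j\leq q^k$---is made explicit in the paper via $j\leq\sum_i\bar{\gmargv}^k_i\leq\bar{\gres}$ and $j\leq|\{i:\bar{\gmargv}^k_i>0\}|$, which you should spell out in the final write-up.
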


\begin{proof}{\emph{Proof of Proposition \ref{prop:alg_terminates}.}} Let $k \in \lb 1,\kappa^*\rb$. First, we show that $\delta^k > 0$. Since $q^k \leq  \left| \left\{ i \in \lb 1,n\rb:\, \bar{\gmargv}^k_i > 0   \right\} \right|$, then $\bar{\gmargv}^k_{\theta^k(q^k)} > 0$. Next, we show by contradiction that if $q^k< n$, then $\bar{\gmargv}^k_{\theta^k(q^k+1)}<1$: If instead $q^k< n$ and $\bar{\gmargv}^k_{\theta^k(q^k+1)}=1$, then we first deduce that $q^k < \left| \left\{ i \in \lb 1,n\rb:\, \bar{\gmargv}^k_i > 0   \right\} \right|$. Furthermore,
\begin{align*}
\bar{\gres} \overset{\eqref{alg_eq:feasibility1}}{\geq} \sum_{i=1}^n \bar{\gmargv}_i^k \geq \sum_{j=1}^{q^k+1}\bar{\gmargv}_{\theta^k(j)}^k = q^k + 1 > q^k.
\end{align*}
This contradicts the definition of $q^k$. Therefore if $q^k < n$, then $\bar{\gmargv}^k_{\theta^k(q^k+1)}<1$, which in turn implies that $\delta^k > 0$.

We now show that $\left| \left\{ i \in \lb 1,n\rb:\, \bar{\gmargv}^{k+1}_i \in \{0,1\}   \right\} \right| > \left| \left\{ i \in \lb 1,n\rb:\, \bar{\gmargv}^{k}_i \in \{0,1\}   \right\} \right|$. Let $j^\prime \in \lb 1,n\rb$ be such that $\bar{\gmargv}^k_{\theta^k(j^\prime)} = 0$. Necessarily, $j^\prime > \left| \left\{ i \in \lb 1,n\rb:\, \bar{\gmargv}^{k}_i >0   \right\} \right| \geq q^k$, which implies that $\bar{\gmargv}^{k+1}_{\theta^k(j^\prime)} = \bar{\gmargv}^k_{\theta^k(j^\prime)}/(1-\delta^k) = 0.$

Next, we consider $j^\prime \in \lb 1,n\rb$ such that $\bar{\gmargv}^k_{\theta^k(j^\prime)} = 1$. Then, $j^\prime \leq q^k$, as implied by the following inequalities:
\begin{align*}
&j^\prime = \sum_{j=1}^{j^\prime} \bar{\gmargv}^k_{\theta^k(j)} \leq \sum_{i=1}^n \bar{\gmargv}^{k}_i \overset{\eqref{alg_eq:feasibility1}}{\leq} \bar{\gres},\\
&j^\prime \leq \left| \left\{ i \in \lb 1,n\rb:\, \bar{\gmargv}^k_i =1   \right\} \right| \leq \left| \left\{ i \in \lb 1,n\rb:\, \bar{\gmargv}^k_i > 0   \right\} \right|.
\end{align*}
Thus, $j^\prime \leq q^k$ and $\bar{\gmargv}^{k+1}_{\theta^k(j^\prime)} = (\bar{\gmargv}^k_{\theta^k(j^\prime)} - \delta^k)/(1-\delta^k) = 1.$ This shows that $\left| \left\{ i \in \lb 1,n\rb:\, \bar{\gmargv}^{k+1}_i \in \{0,1\}   \right\} \right| \geq \left| \left\{ i \in \lb 1,n\rb:\, \bar{\gmargv}^{k}_i \in \{0,1\}   \right\} \right|$. To ensure a strict inequality, we must show that one fractional component of $\bar{\gmargv}^k$ becomes 0 or 1 in $\bar{\gmargv}^{k+1}$.

We know that $0 < \delta^k < 1$. If $\delta^k =\bar{\gmargv}^k_{\theta^k(q^k)}$, then $\bar{\gmargv}^{k+1}_{\theta^k(q^k)} = 0$. If $q^k < n$ and $\delta^k = 1 - \bar{\gmargv}^k_{\theta^k(q^k+1)}$, then $\bar{\gmargv}^{k+1}_{\theta^k(q^k+1)} = 1$. In both cases, a fractional component of $\bar{\gmargv}^k$ becomes 0 or 1 in $\bar{\gmargv}^{k+1}$. In conclusion $\left| \left\{ i \in \lb 1,n\rb:\, \bar{\gmargv}^{k+1}_i \in \{0,1\}   \right\} \right| > \left| \left\{ i \in \lb 1,n\rb:\, \bar{\gmargv}^{k}_i \in \{0,1\}   \right\} \right|$.

Since for every $k \in \lb 1,\kappa^*+1\rb$,  $\left| \left\{ i \in \lb 1,n\rb:\, \bar{\gmargv}^{k}_i \in \{0,1\}   \right\} \right| \leq n$, then the algorithm must terminate after at most $n$ iterations of the while loop (\ref{alg:while_start}-\ref{alg:while_end}). Therefore $\kappa^* \leq n$.\hfill \Halmos


\end{proof}

Now that we proved that Algorithm \ref{alg: construction of prob distribution from marginals} is well defined and terminates, we can show Theorem \ref{thm:algorithm}.


\begin{proof}{\emph{Proof of Theorem \ref{thm:algorithm}.}}
Consider a vector of capacities  $\gcap \in \Z^n_{>0}$, a budget of resources $\gres \in \Z_{>0}$, and a vector $\gmargv \in \widetilde{\A}(\gcap,\gres)$. For convenience, we denote $e^{\kappa^*+1} \coloneqq \bar{\gmargv}^{\kappa^*+1} \in \{0,1\}^n$. We first show that for every $k \in \lb 1,\kappa^*+1\rb$, $\floor*{\gmargv}+e^k \in \A(\gcap,\gres)$.


In Proposition \ref{prop:alg_terminates}, we showed that for every $k \in \lb 1,\kappa^*\rb$, if a component $i \in \lb 1,n\rb$ satisfies $\bar{\gmargv}_i^{k} = 0$, then $\bar{\gmargv}_i^{k+1} = 0$. Thus, for every $k \in \lb1,\kappa^*+1\rb$, $\bar{\gmargv}_i^{k} > 0$ only if $\bar{\gmargv}_i^{1} > 0$. By definition of $\widetilde{\A}(\gcap,\gres)$ and since $\gcap_i \in \Z$, we deduce that if $\bar{\gmargv}_i^1 >0$, then $\gcap_i \geq \lceil \gmargv_i\rceil = \floor*{\gmargv_i} + \ceil*{\bar{\gmargv}_i^1} = \floor*{\gmargv_i}  + 1$. 
Furthermore,
\begin{align*}
\forall \, k \in \lb 1,\kappa^*\rb, \ \sum_{i=1}^n (\floor*{\gmargv_i} + e_i^k) = q^k + \sum_{i=1}^n \floor*{\gmargv_i}  \leq  \bar{\gres} + \sum_{i=1}^n \floor*{\gmargv_i} &= \gres,\\
\text{and } \sum_{i=1}^n (\floor*{\gmargv_i} + e_i^{\kappa^*+1}) = \sum_{i=1}^n\bar{\gmargv}_i^{\kappa^*+1} + \sum_{i=1}^n \floor*{\gmargv_i}  \overset{\eqref{alg_eq:feasibility1}}{\leq}  \bar{\gres} + \sum_{i=1}^n \floor*{\gmargv_i} &= \gres.
\end{align*}
Therefore, for every $k \in \lb 1,\kappa^*+1\rb$, $\floor*{\gmargv}+e^k \in \A(\gcap,\gres)$.

Next, we show that $\sigma$ returned by the algorithm is a probability distribution. We first note that for every $k \in \lb 1,\kappa^*+1\rb$, $\gamma^k \geq 0$. Furthermore, 
\begin{align*}
\sum_{\gpure \in \A(\gcap,\gres)}\sigma_{\gpure} = \gamma^{\kappa^*+1} + \sum_{k=1}^{\kappa^*} \gamma^k \delta^k = \gamma^{\kappa^*+1} + \sum_{k=1}^{\kappa^*}(\gamma^k - \gamma^{k+1})  = \gamma^{\kappa^*+1} + \gamma^1 - \gamma^{\kappa^*+1}  = 1.
\end{align*}
Therefore, $\sigma \in \Delta(\gcap,\gres)$.
We now show that $\sigma$ returned by the algorithm is consistent with the vector $\gmargv$. To this end, we note the following equality:
\begin{align*}
\forall \, k \in \lb 1,\kappa^*\rb, \ \forall \, i \in \lb 1,n\rb, \ \gamma^k \bar{\gmargv}_i^k - \gamma^{k+1} \bar{\gmargv}_i^{k+1} = \gamma^k(\bar{\gmargv}_i^k - (1 - \delta^k)\bar{\gmargv}_i^{k+1}) = \gamma^k\delta^k e_i^k.
\end{align*}
Then, we obtain:
\begin{align*}
\forall \, i \in \lb 1,n\rb, \ \mathbb{E}_{\gpure \sim \sigma}[\gpure_i] &= \gamma^{\kappa^*+1}(\floor*{\gmargv_i} + \bar{\gmargv}^{\kappa^*+1}_i) + \sum_{k=1}^{\kappa^*} \gamma^k \delta^k (\floor*{\gmargv_i}+ e_i^k) \\
&= \floor*{\gmargv_i}\sum_{\gpure \in \A(\gcap,\gres)}\sigma_\gpure + \gamma^{\kappa^*+1}\bar{\gmargv}_i^{\kappa^*+1}+ \gamma^1\bar{\gmargv}_i^1 - \gamma^{\kappa^*+1}\bar{\gmargv}_i^{\kappa^*+1}\\
& = \floor*{\gmargv_i} + \bar{\gmargv}_i^1 = \gmargv_i.
\end{align*}
Thus, $\sigma$ returned by Algorithm \ref{alg: construction of prob distribution from marginals} is consistent with the vector $\gmargv$.

Since $\kappa^* \leq n$, then the support of $\sigma$ is of size at most $n+1$. Finally, we argue that Algorithm \ref{alg: construction of prob distribution from marginals} runs in time $O(n^2)$. Indeed, the first iteration of the while loop (\ref{alg:while_start}-\ref{alg:while_end}) can be implemented in time $O(n \log n)$ by using an efficient sorting algorithm (e.g. merge sort) to sort $\bar{\gmargv}^1$ and create the permutation $\theta^1$. Fortunately, the subsequent iterations can be implemented in time $O(n)$. Indeed, we note that for every $k \in \lb 1,\kappa^*-1\rb$, $\bar{\gmargv}^{k+1}_{\theta^k(1)} \geq \cdots \geq \bar{\gmargv}^{k+1}_{\theta^k(q^k)}$ and $\bar{\gmargv}^{k+1}_{\theta^k(q^k+1)} \geq \cdots \geq  \bar{\gmargv}^{k+1}_{\theta^k(n)}$. Therefore, we can sort $\bar{\gmargv}^{k+1}$ and create the permutation $\theta^{k+1}$ by merging and sorting the lists $(\bar{\gmargv}^{k+1}_{\theta^k(1)},\dots,\bar{\gmargv}^{k+1}_{\theta^k(q^k)})$ and $(\bar{\gmargv}^{k+1}_{\theta^k(q^k+1)},\dots,\bar{\gmargv}^{k+1}_{\theta^k(n)})$ that are already sorted. This operation can be carried out in time $O(n)$. Since the number of iterations of the while loop (\ref{alg:while_start}-\ref{alg:while_end})  is upper bounded by $n$, then the overall running time of Algorithm \ref{alg: construction of prob distribution from marginals} is $O(n^2)$. 
%
%
%
%
\hfill \Halmos
\end{proof}






%
%
%



\end{document}